\documentclass[11pt]{article}
\usepackage[T1]{fontenc}
\usepackage[utf8]{inputenc}
\usepackage{lmodern}
\usepackage[a4paper,margin=26mm]{geometry}
\usepackage{amsmath,amssymb,amsthm,microtype}
\usepackage{graphicx,array}
\usepackage[hidelinks]{hyperref}
\numberwithin{equation}{section}
\newtheorem{theorem}{Theorem}[section]
\newtheorem{proposition}[theorem]{Proposition}
\newtheorem{lemma}[theorem]{Lemma}
\newtheorem{corollary}[theorem]{Corollary}
\theoremstyle{definition}

\newtheorem{example}[theorem]{Example}
\newtheorem{conjecture}[theorem]{Conjecture}
\theoremstyle{remark}
\newtheorem{remark}[theorem]{Remark}
\hypersetup{pdftitle={Symmetric amplituhedra: BCFW recursion working draft}}
\providecommand{\Gr}{\operatorname{Gr}}
\providecommand{\OG}{\operatorname{OG}}
\providecommand{\LG}{\operatorname{LG}}
\providecommand{\GL}{\operatorname{GL}}
\providecommand{\Mat}{\operatorname{Mat}}
\providecommand{\pre}{\operatorname{pre}}
\providecommand{\inc}{\operatorname{inc}}
\providecommand{\IP}{\operatorname{IP}}
\providecommand{\refl}{\mathrm{refl}}
\providecommand{\MomAmp}{\mathcal M}
\providecommand{\SymAmp}[2]{\MomAmp^{#1,#2}}
\providecommand{\RefAmp}{\MomAmp^{\refl}}
\providecommand{\FraserPos}{\mathcal F^{\geq}}

\providecommand{\SymGr}[3]{\Gr^{#1,#2}(#3)}
\providecommand{\SymGrNonneg}[3]{\Gr^{#1,#2}_{\geq}(#3)}
\providecommand{\SymGrPos}[3]{\Gr_{>}^{#1,#2}(#3)}
\providecommand{\Hom}{\operatorname{Hom}}
\providecommand{\End}{\operatorname{End}}

\providecommand{\einc}{\operatorname{einc}}

\providecommand{\LinRefDomain}{\mathcal R}
\providecommand{\ExRefDomain}{\mathcal E}
\providecommand{\OrthRefDomain}{\mathcal O}

\providecommand{\SymAmpA}[2]{\MomAmp^{#1,#2;\mathrm A}}
\providecommand{\SymAmpB}[2]{\MomAmp^{#1,#2;\mathrm B}}
\providecommand{\RotAmp}{\MomAmp^{\mathrm{rot}}}

\title{{Symmetric Amplituhedra I: Spinor Helicity}}
\author{Ran J. Tessler\thanks{Department of Mathematics, Weizmann Institute of Science, Rehovot, Israel. \texttt{ran.tessler@weizmann.ac.il}}}

\begin{document}
\maketitle

\begin{abstract}
For every pair of integers $0\leq k\leq N,$ and and every subgroup of $D_N\times\mathbb{Z}_2$, we construct the nonnegative symmetric Grassmannian, which is the subspace of the nonnegative Grassmannian fixed under that group, where the dihedral part acts on columns (with a sign correction to keep nonnegativity) and the $\mathbb{Z}_2$ factor exchanges a space with its orthocomplement (again, with a sign correction). This construction generalizes the earlier constructions of \cite{Karpman2018,Fraser2020,Shevchenko2025}. We study the geometry of this space. Then, when $k$ is further restricted to the range $[2,N-2]$ we also define a symmetric amplituhedron. We analyze its basic properties and its conjectural BCFW decomposition. The ABJM and reflected Lagrangian amplituhedra are the special cases corresponding to the groups $\{1\}\times\mathbb{Z}_2,$ and the group generated by a reflection combined with the exchange, respectively.
\end{abstract}

\tableofcontents
\section{Introduction}
\label{sec:introduction}

The amplituhedron, introduced by Arkani-Hamed and Trnka, is a geometric
space whose volume-like \emph{canonical form} conjecturally encodes tree-level scattering amplitudes in
planar $\mathcal N=4$ super Yang--Mills theory~\cite{ArkaniHamedTrnka2014}.
It is obtained by mapping the nonnegative Grassmannian using positive
external data to a target Grassmannian. Rather than organizing an amplitude as a sum of Feynman
diagrams, this construction suggest a more efficient calculation, in which the amplitude is computed from the sum of canonical forms of recursively defined subspaces which tile the amplituhedron.
It brings the previously observed role of the nonnegative Grassmannian
and the BCFW recursion under a common geometric roof
\cite{BrittoCachazoFeng2005,BrittoCachazoFengWitten2005,AHBCGPT2016}.
The amplituhedron motivated the definition of positive geometries~\cite{ArkaniHamedBaiLam2017}, but additional exceptional properties it possesses, like the BCFW decomposition~\cite{EvenZoharLakrecTessler2025}, make it a rather special member of this family.

Up to date there are only few known amplituhedron-like spaces. Besides the original
amplituhedron, these include its $m\neq 4$ variants, particularly
$m=1$ and $m=2$~\cite{KarpWilliams2019,BaoHe2019,ParisiShermanBennettWilliams2023},
and the momentum amplituhedron, which expresses the geometry in
spinor-helicity variables~\cite{DamgaardEtAl2019}.
The orthogonal or ABJM amplituhedron provides an analogous construction for
three-dimensional ABJM theory, using a nonnegative orthogonal
Grassmannian rather than the ordinary nonnegative Grassmannian
\cite{HuangEtAl2021,HeKuoZhang2021}.
The reflected Lagrangian amplituhedron~\cite{KroviTessler2026}, built on
Karpman's nonnegative Lagrangian Grassmannian~\cite{Karpman2018}, adds a
symplectic example. These geometries share a common architecture:
positive domains and external data, combinatorial independence of the external data, kinematic images, natural
factorization patterns coming from a plabic tangle operad \cite{PlabicTangles}, and recursive tiling constructions or proposals.
It is natural to ask whether these are isolated
constructions or instances of a systematic family.

In this work we introduce a uniform construction, which we name \emph{symmetric amplituhedra}, associated with a symmetry datum
$\rho:G\hookrightarrow D_N\times\mathbb Z_2$.
The dihedral factor permutes the cyclically ordered particles; the $\mathbb{Z}_2$
factor exchanges the two spinor sectors. Interestingly, such group actions lift naturally to the nonnegative Grassmannian, and through it to amplituhedra, and preserve many of the beautiful properties of these spaces.

Write $\rho(\gamma)=(g_\gamma,\epsilon(\gamma))$ for $g_\gamma\in D_N,\epsilon(\gamma)\in\mathbb Z_2$, and let
$T_\gamma$ be the signed orthogonal label map, including the alternating
signs required by exchange, as specified in Section~\ref{sec:symmetries}.
The actions on kinematics, external data, and domain planes are
\begin{equation*}
\begin{array}{c@{\qquad}c@{\qquad}c}
 &\epsilon(\gamma)=0&\epsilon(\gamma)=1\\[3pt]
 \gamma\cdot(\lambda,\widetilde\lambda):
   &(T_\gamma\lambda,T_\gamma\widetilde\lambda)
   &(T_\gamma\widetilde\lambda,T_\gamma\lambda)\\[3pt]
 \gamma\cdot(\Lambda,\widetilde\Lambda):
   &(T_\gamma\Lambda,T_\gamma\widetilde\Lambda)
   &(T_\gamma\widetilde\Lambda,T_\gamma\Lambda)\\[3pt]
 \gamma\cdot C:&T_\gamma C&T_\gamma(C^\perp).
\end{array}
\end{equation*}
Here all entries denote subspaces: the spinors are row planes, the
external matrices denote their column spaces, and $\perp$ is Euclidean
orthogonal complementation. We define the symmetric domain Grassmannian and kinematic
space as the respective loci fixed by $G$, and require the external
pair to be fixed as well. An exchanging element sends a $k$-plane to an
$(N-k)$-plane, so a fixed source requires $k=N-k$. Thus throughout the
paper an action with nontrivial exchange is restricted to even particle number $N=2k$. We denote the domain fixed locus inside the nonnegative Grassmannian by $\Gr_{\geq}^{G,\rho}(k,N)$.
We choose positive equivariant external subspaces satisfying
$\Lambda^\perp\subset\widetilde\Lambda$, and apply the momentum amplituhedron map to the
$G-$fixed nonnegative domain. Its image, denoted $\SymAmp{G}{\rho}_{N,k}(\Lambda,\tilde\Lambda)$,
lies in the corresponding $G-$fixed kinematic space. 

We propose a uniform operadic ansatz for codimension $1$ boundary facets of symmetric amplituhedra: the facets are organized in families, which are the vanishing loci (inside the amplituhedra) of Mandelstam variables $s_I$ for cyclic intervals $I$ coming in two flavors. Up
appropriate end cases, $I$ is either disjoint from its $G-$orbit, or is fixed by a reflection, but disjoint from its orbit under the cyclic shift subgroup of $G$, see Section~\ref{sec:boundaries} for a more accurate description. The corresponding Mandelstam variable may become reducible, because of the symmetry, and this reducibility adds interesting new phenomena. The combinatorial boundary structure is expected to be independent of the external data.

The domain cells mapping to these boundaries admit factorizations in terms of \emph{symmetric plabic tangles}: a domain plane is assembled
from a $(G,\rho)-$symmetric core and symmetry-related blobs, either attached by single
edges or disconnected from the core; see Figure~\ref{fig:schematic_bdries}.
These pictures represent one-particle fiber products and direct sums,
respectively, with soft limits and degenerate cores as end cases.
Only finitely many blob types occur: their local stabilizers lie in
$\mathbb Z_2\times\mathbb Z_2$, generated by a reflection and an exchange. These can be thought of as the universal building blocks of this family, some of which are known spaces: the case of a trivial local stabilizer gives a momentum amplituhedron blob, an exchange stabilizer yields a ABJM blob, and a reflection composed with an exchange gives a reflected Lagrangian blob.

Our main conjecture, Conjecture~\ref{conj:bcfw:tiling}, asserts that these amplituhedra are tiled by images of certain symmetric positroid cells, calculated via a symmetric BCFW procedure. Interestingly, in the symmetric setting, symmetric BCFW shifts can produce high-degree bridge removal equations. Despite the high degree, all our experiments support our conjecture that the symmetric BCFW positroid cells still map injectively onto their image, and form tilings.  

Even though the primary contribution of this paper is the geometric construction, and the proposed operadic description of the codimension $1$ boundaries of the symmetric amplituhedra, we also prove several structural results:
For the domain nonnegative symmetric Grassmannians we establish closed-ball topology, compute dimensions and representation types, and provide equivariant cell decompositions to the symmetric positroid strata strata
(Theorems~\ref{thm:dom:ball} and~\ref{thm:dom:cells}).
This extends the orthogonal cell theory~\cite{HuangWen2014,KimLee2014}
and the reflected, cyclic and rotational constructions of
Karpman, Fraser and Shevchenko~\cite{Karpman2018,Fraser2020,Shevchenko2025}.
We define the resulting amplituhedra in both projected-plane and kinematic
pictures, and determine their dimensions and spinor representations.
The momentum, ABJM and reflected Lagrangian maps become specializations
of this construction, alongside with many new examples. Regarding the Mandelstam boundaries, we prove the vanishing of the relevant variables on the images of the corresponding domain cells, and establish image-dimension statements.

While this write-up does not discuss much the  positive geometry perspective, we do expect that both the constructed domain spaces and the symmetric amplituhedra, are positive geometries with canonical forms calculated via the BCFW procedure. This expectation has been tested for several instances, in low dimensions.

\paragraph{The sequels.}
This paper focuses on the spinor helicity side of the symmetric amplituhedra, and their domain spaces. The sequels will treat the symmetric momentum twistor amplituhedron construction, discuss canonical forms, and include proofs of special cases of the BCFW tiling conjecture, Conjecture~\ref{conj:bcfw:tiling}, under the simplifying assumption of immanant-positivity used in \cite{Galashin2024,OrenPerlsteinTessler2025} in their proofs of the analogous BCFW tiling conjectures.
\paragraph{Content of the paper.}
This paper is structured as follows. Section~\ref{sec:background} describes the previously known construction of amplituhedra and their factorizations. It also reviews two additional known constructions of symmetric nonnegative Grassmannians. Section~\ref{sec:symmetries} describes the different symmetry actions we will focus on, fixes conventions and terminology. Section~\ref{sec:domains} constructs the symmetric nonnegative Grassmannians, and studies their basic properties. Section~\ref{sec:external-and-map} defines the symmetric amplituhedra, and analyzes their basic features. The conjectural facet classification of the symmetric amplituhedra, which is the main input for the BCFW tiling is studied in \ref{sec:boundaries}. The BCFW procedure itself is addressed in Section~\ref{sec:bcfw}. The appendices include tabular information (per-action dimension of the symmetric Grassmannian, per-action and shift type degree of the BCFW shift, summary of some experiments) and more technical details and proofs (the form of the Mandelstam variables in symmetric cases; how to select the correct factor of the Mandelstam variable whenever it is reducible). 
\paragraph{AI disclosure.}
The construction, theorem statements, conjectures and detailed proof strategies were all conceived by the author. The author have used OpenAI ChatGPT6 Astra extensively for writing and coding. It was used to
convert the author's detailed proof sketches into fully written proofs, that were verified by the author, and iteratively imrpoved by the author and ChatGPT6. It was also used to write Python programs for testing the author's mathematical predictions. These verifications and experiments have uncovered a small number of edge cases that required the author to perform minor adjustments to hypotheses or arguments. The author is solely responsible for all results, and any remaining errors.

\paragraph{Acknowledgments.}
R.T. wants to that Michael Oren-Perlstein and Syl Tom Krovi for stimulating discussions related to the paper. R.T. was supported by the ISF (grant No. 1729/23) and by the Minerva foundation with funding from the Federal German Ministry for Education and Research. 
\section{Background: Amplituhedra and symmetric Grassmannians}
\label{sec:background}

We recall the positive domains, their elementary constructions, and the
kinematic form of the previously known amplituhedron maps, as well as their boundary factorizations and BCFW tilings. We then describe two models of symmetric nonnegative Grassmannians by Fraser and Shevchenko~\cite{Fraser2020,Shevchenko2025}. Our work generalizes all these constructions.

\subsection{The nonnegative Grassmannian and its elementary operations}
\label{subsec:bg:grassmannian}

A point of $\Gr(k,N)$ is represented by a full-rank $k\times N$ row matrix
$C$, modulo left multiplication by $\GL_k$. Its ordered maximal minors
$\Delta_I(C)$, $I\in\binom{[N]}k$, are homogeneous coordinates. We use
\begin{equation}
\label{eq:bg:positive-grassmannian}
 \Gr_{\geq}(k,N)=\{C:\Delta_I(C)\geq0\text{ for all }I\},
 \qquad
 \Gr_{>}(k,N)=\{C:\Delta_I(C)>0\text{ for all }I\},
\end{equation}
where a representative with the indicated signs is required. The dimension
is $k(N-k)$. Prescribing which minors vanish gives the \emph{positroid
cells}; these admit equivalent descriptions by bounded affine permutations,
Grassmann necklaces, and reduced plabic graphs. Postnikov's boundary measurements of
positive graph weights provide parameterizations for each cell~\cite{Postnikov2006,AHBCGPT2016}.

We call the following standard operations \emph{atomic operations}:
\begin{equation}
\label{eq:bg:ordinary-atoms}
 \begin{aligned}
 \pre_i &: \Gr_{\geq}(k,N)\longrightarrow\Gr_{\geq}(k,N+1)
       &&\text{(insert a zero column)},\\
 \inc_i &: \Gr_{\geq}(k,N)\longrightarrow\Gr_{\geq}(k+1,N+1)
       &&\text{(insert a coloop)},\\
 x_i(t)&=I+t e_{i,i+1},\qquad
 y_i(t)=(x_i(t))^T,
       &&t>0,\quad i<N.
 \end{aligned}
\end{equation}
Here $e_{ab}$ is a matrix unit; the bridges act by $C\mapsto Cx_i(t)$ or $C\mapsto Cy_i(t)$.
A coloop is an inserted coordinate line; the signs at an arbitrary insertion
position are fixed by ordered-minor positivity. Graphically, $\pre$ and
$\inc$ add black and white lollipops, respectively. An admissible reduced
bridge adds one positive parameter. Every positroid cell can be obtained
from loop--coloop seeds by such bridges, allowing cyclic relabelling and
bridges across intervening lollipops~\cite{AHBCGPT2016}.

For reference, we also introduce the signed cyclic shift
and fix the reversal and duality conventions by
\begin{subequations}
\label{eq:sym:positive-lifts}
\begin{align}
 S_k e_i&=e_{i+1}\ (i<N),&
 S_k e_N&=(-1)^{k-1}e_1,& Je_i&=e_{N+1-i},
 \label{eq:sym:dihedral-lifts}\\
 A_N&=\operatorname{diag}(1,-1,\ldots,(-1)^{N-1}),&&
 \mathcal D(C)&=A_N(C^\perp),
 \label{eq:sym:positive-duality}\\
 S_k^N&=(-1)^{k-1}I,& J^2&=I,&JS_kJ&=S_k^{-1}.
 \label{eq:sym:linear-relations}
\end{align}
\end{subequations}
The complement is Euclidean; reversal changes all ordered maximal minors
by one common sign. These maps preserve nonnegativity, with
$\mathcal D:\Gr_{\geq}(k,N)\to\Gr_{\geq}(N-k,N)$; its minors are the
complementary minors of $C$, up to a common scalar~\cite[Lemma~3.3(ii)]{KarpWilliams2019}.
The cyclic bridge at $N\mid1$ uses $(-1)^{k-1}t e_{N,1}$.
Operators on the ambient vector space are written on column vectors;
thus $T(C)$ has row representative $CT^T$.

\subsection{Amplituhedra and the kinematic picture}
\label{subsec:bg:amplituhedra}

For $k+m\leq N$ and a positive $N\times(k+m)$ matrix $Z$, the tree
amplituhedron and its
$B$-picture are
\begin{equation}
\label{eq:bg:tree-and-B}
 \mathcal A_{N,k,m}(Z)=\{CZ:C\in\Gr_{\geq}(k,N)\},
 \qquad
 \mathcal B_{N,k,m}(Z)=\{C^\perp\cap\operatorname{colspan}Z:
                         C\in\Gr_{\geq}(k,N)\}.
\end{equation}
The first lies in $\Gr(k,k+m)$ and the second in
$\Gr(m,\operatorname{colspan}Z)\hookrightarrow\Gr(m,N)$. They are equivalent realizations:
$Y$ corresponds to $Z(\ker Y)$, viewed as an $m$-plane in $\mathbb R^N$
~\cite{ArkaniHamedTrnka2014,KarpWilliams2019}.
In compatible frames, its $m$-brackets are proportional to
$\det(Y,Z_{i_1},\ldots,Z_{i_m})$.
The $m=4$ construction encodes planar $\mathcal N=4$ super Yang--Mills
scattering; the $B$-picture expresses its geometry through projected
kinematic data~\cite{ArkaniHamedTrnka2014,ArkaniHamedThomasTrnka2018}.

\paragraph{Momentum amplituhedra.}
Fix $2\leq k\leq N-2$. Following~\cite{DamgaardEtAl2019}, take
external matrices $\Lambda,\widetilde\Lambda$ with
$\Lambda^\perp$ and $\widetilde\Lambda$ positive, respectively. Put
\begin{subequations}
\label{eq:bg:momentum-map}
\begin{align}
 &\begin{gathered}
 \Lambda\in\Mat_{N,N-k+2},\qquad
 \widetilde\Lambda\in\Mat_{N,k+2},\\
 \mathcal W=\operatorname{colspan}\widetilde\Lambda\in\Gr_{>}(k+2,N),\qquad
 \mathcal U=(\operatorname{colspan}\Lambda)^\perp
           \in\Gr_{>}(k-2,N),
 \end{gathered}
 \label{eq:bg:external-data}\\
 &\Phi_{\Lambda,\widetilde\Lambda}(C)
       =(Y,\widetilde Y)=(C^\perp\Lambda,C\widetilde\Lambda),\qquad
 \MomAmp_{N,k}(\Lambda,\widetilde\Lambda)
       =\Phi_{\Lambda,\widetilde\Lambda}(\Gr_{\geq}(k,N)),
 \label{eq:bg:momentum-Y}\\
 &\lambda=C\cap\mathcal U^\perp,\qquad
 \widetilde\lambda=C^\perp\cap\mathcal W,\qquad
 \lambda\widetilde\lambda^T=0.
 \label{eq:bg:spinor-incidence}
\end{align}
\end{subequations}
The two intersections are two-planes, and identify the image with a region
in the fixed-external kinematic space
\begin{equation}
\label{eq:bg:momentum-kinematics}
 \mathcal K_{\mathcal U,\mathcal W}
 =\{(L,T)\in\Gr(2,\mathcal U^\perp)\times\Gr(2,\mathcal W):L\perp T\}.
\end{equation}
Its regular dimension, and the image dimension, is $2N-4$
~\cite{DamgaardEtAl2019,Galashin2024}. Row-basis changes are already
quotiented out; particlewise rescalings are \emph{not} additionally
quotiented out. It will be convenient, in what follows, to add the nested condition $\mathcal U\subset\mathcal W$. It is
an extra external-data specialization, not part of 
\eqref{eq:bg:external-data}.

Choose two-row spinor representatives and define
\begin{subequations}
\label{eq:bg:kinematic-invariants}
\begin{align}
 p_i&=\lambda_i\widetilde\lambda_i^T,&
 Q_I&=\sum_{i\in I}p_i,&Q_{[N]}&=0,
 \label{eq:bg:momentum-sums}\\
 \langle ij\rangle&=\det(\lambda_i,\lambda_j),&
 [ij]&=\det(\widetilde\lambda_i,\widetilde\lambda_j),&
 s_I&=\det Q_I=\sum_{\substack{i<j\\i,j\in I}}\langle ij\rangle[ij].
 \label{eq:bg:mandelstams}
\end{align}
\end{subequations}
For $p_i\ne0$, the factorization $p_i=\lambda_i\widetilde\lambda_i^T$
is unique up to the little-group rescaling
$(\lambda_i,\widetilde\lambda_i)\mapsto
(t_i\lambda_i,t_i^{-1}\widetilde\lambda_i)$, $t_i\in\mathbb R^\times$.
This leaves the momenta unchanged, but does not define an additional
quotient of the fixed-external kinematic space used here. Momentum conservation gives $s_I=s_{I^c}$. Homogeneous expressions in these brackets will be
called \emph{functionaries}~\cite{EvenZoharLakrecTessler2025}; their zero
loci are independent of the chosen frames. Planar channels are cyclic
intervals $I$. A two-particle channel has the two chiral factors
$\langle i,i+1\rangle$ and $[i,i+1]$~\cite{DamgaardEtAl2019}. The importance of the Mandelstam variables $s_I$ is that their vanishing indicates that the internal momentum $Q_I$ becomes massless. It is conjectured~\cite{DamgaardEtAl2019} that the codimension $1$ boundaries of the momentum amplituhedron are attained precisely when $s_I=0$ for a cyclic interval $I.$ This conjecture is proven under stronger positivity assumptions on the external data by Galashin in \cite{Galashin2024}.

\subsection{Factorization, bridges and BCFW tilings}
\label{subsec:bg:factorization}
Even without assuming the stronger positivity, one can describe the biggest cells of the domain Grassmannian on which $s_I$ vanishes, for a cyclic interval $I.$ For a generic point with $s_I=0$ and $Q_I\ne0$, introduce an
internal particle with $p_*=-Q_I$. The kinematics splits to subsystems supported on label sets
$I\sqcup\{*\}$ and $\{*\}\sqcup I^c$, with opposite internal momenta.
The corresponding source planes are joined by \emph{one-particle amalgamation}
~\cite{AHBCGPT2016,Galashin2024}:
\begin{subequations}
\label{eq:bg:amalgamation}
\begin{align}
 C_L\star_* C_R
   &=\{(x_I,x_{I^c}):\exists u,\ (x_I,u)\in C_L,
                                  \ (u,x_{I^c})\in C_R\},
 \label{eq:bg:fiber-product}\\
 N_L+N_R&=N+2,\qquad k_L+k_R=k+1,\qquad
 \dim\Pi_{\mathrm{fac}}=\dim\Pi_L+\dim\Pi_R-1.
 \label{eq:bg:ordinary-factorization-counts}
\end{align}
\end{subequations}
The cyclic gluing signs are understood. The subtraction by one is the
common internal rescaling; for two ordinary BCFW factors this gives
$\dim\Pi_{\mathrm{fac}}=2N-5$.
Graphically, the two plabic graphs are joined along their $*-$labelled internal legs.
This is different from $Q_I=0$, where a source may split as a direct sum
with no nonzero internal particle. Three-particle rank-one and corank-one
factors supply the adjacent chiral end cases, rather than nondegenerate
two-spinor systems~\cite{AHBCGPT2016}.

Moving a bridge from the domain to the external data gives a useful common
identity. If $C'=C_0M(t)$, the unbridged presentation of the same target is
\begin{equation}
\label{eq:bg:bridge-transfer}
 \begin{gathered}
 (C';\Lambda,\widetilde\Lambda)\rightsquigarrow
 (C_0;M(t)^{-T}\Lambda,M(t)\widetilde\Lambda),\\
 \lambda(t)=\lambda M(t)^{-1},\qquad
 \widetilde\lambda(t)=\widetilde\lambda M(t)^T.
 \end{gathered}
\end{equation}
For $M=x_i$ this transfers a rank-one matrix $tq$ from
$p_{i+1}$ to $p_i$, where $q=\lambda_i\widetilde\lambda_{i+1}^T$.
Consequently $s_I(t)$ is affine-linear. Bridge removal solves the
factorization equation in these \emph{shifted} data
~\cite{EvenZoharLakrecTessler2025,KroviTessler2026}.

A \emph{tile} is the injective image of an open domain cell of image
dimension. A collection of tiles is a \emph{tiling} when the open images are
pairwise disjoint and their closures cover the amplituhedron. BCFW cells
are constructed by iterating lower-point factorizations and adjoining the
normal bridge. The $m=4$ tree BCFW tilings are proved in
\cite{EvenZoharLakrecTessler2025,EvenZoharEtAl2025}; momentum and ABJM
tilings are proved under strengthened positivity hypotheses in
\cite{Galashin2024,OrenPerlsteinTessler2025}.
These hypotheses require positivity of the relevant nonzero coefficients
in Temperley--Lieb immanant expansions, schematically
\begin{equation}
\label{eq:bg:immanant-positivity}
 F(C;\mathrm{ext})=\sum_\tau c^F_\tau(\mathrm{ext})\,\mathcal I_\tau(C),
 \qquad \mathcal I_\tau(C)\geq0,\qquad c^F_\tau(\mathrm{ext})>0,
\end{equation}
with the conventional signs of $F$ fixed. They are stronger than positivity
of the maximal external minors. 

Amplituhedra are conjectured to be positive geometries in the sense of~\cite{ArkaniHamedBaiLam2017}. This  proposes that the logarithmic forms of the tiles in a tiling
push forward and add to the canonical form:
\begin{equation}
\label{eq:bg:canonical-sum}
 \Omega_{\mathcal A}=\sum_{\Pi\in\mathfrak T}\Phi_*\omega_\Pi,
 \qquad
 \omega_\Pi=\bigwedge_a d\log t_a
 \quad\text{in ordinary positive bridge coordinates}.
\end{equation}

\subsection{The orthogonal Grassmannian and the ABJM amplituhedron}
\label{subsec:bg:orthogonal}

For $N=2n$, the nonnegative orthogonal Grassmannian is
\begin{equation}
\label{eq:bg:orthogonal-source}
 \begin{gathered}
 \OG_{\geq}(n,2n)
 =\{C\in\Gr_{\geq}(n,2n):CA_{2n}C^T=0\}
 =\{C\in\Gr_{\geq}(n,2n):C=\mathcal D(C)\},\\
 \dim\OG_{\geq}(n,2n)=\frac{n(n-1)}2.
 \end{gathered}
\end{equation}
Its cells, also called \emph{orthitroid cells}, are indexed by perfect
matchings and parametrized by reduced medial (OG) graphs
~\cite{HuangWen2014,KimLee2014}.
Their atomic construction uses insertion of an adjacent positive isotropic
line $(1:1)$, increasing $(n,2n)$ to $(n+1,2n+2)$, and admissible
orthogonal bridges. A bridge is the identity outside the indicated
cyclically adjacent block and has
\begin{equation}
\label{eq:bg:orthogonal-bridge}
 R_i(t)|_{(i,i+1)}=
 \begin{pmatrix}\cosh t&\sigma_i\sinh t\\\sigma_i\sinh t&\cosh t\end{pmatrix},
 \qquad R_i(t)A_{2n}R_i(t)^T=A_{2n},\qquad t>0.
\end{equation}
Here $i+1$ is read modulo $2n$, with $\sigma_i=1$ for $i<2n$ and
$\sigma_{2n}=(-1)^{n-1}$. The seam sign from
\eqref{eq:sym:dihedral-lifts} ensures ordered-minor positivity; either
sign preserves the alternating form. Pair insertions and reduced bridge
sequences generate the orthitroid charts
~\cite{KimLee2014,OrenPerlsteinTessler2025}.

For positive $\widetilde\Lambda\in\Mat_{2n,n+2}$, $n\geq2$, the ABJM map and
its kinematics are~\cite{HuangEtAl2021,HeKuoZhang2021}
\begin{equation}
\label{eq:bg:ABJM-map}
 \begin{gathered}
 \mathcal O_n(\widetilde\Lambda)=\{C\widetilde\Lambda:C\in\OG_{\geq}(n,2n)\},\qquad
 \xi=C^\perp\cap\mathcal W,\quad \xi A_{2n}\xi^T=0,\\
 \lambda=\xi A_{2n},\qquad\widetilde\lambda=\xi,\qquad
 p_i=(-1)^{i-1}\xi_i\xi_i^T,\qquad
 \dim\mathcal O_n=2n-3.
 \end{gathered}
\end{equation}
This specializes \eqref{eq:bg:momentum-map} by taking
$\Lambda=A_{2n}\widetilde\Lambda$. The fixed-external kinematic space is
the isotropic two-plane locus
in $\Gr(2,\mathcal W)$; the symmetric matrix equation imposes three
conditions on its regular part. Here the two chiral sectors are identified,
with no permutation of the particles.

Nonterminal factorization facets have odd $|I|$: writing
$|I|+1=2n_L$, $|I^c|+1=2n_R$, both augmented systems are orthogonal
and $n_L+n_R=n+1$. Their amalgamation has no continuous internal
rescaling after orthogonal normalization, so BCFW boundary dimensions add:
\begin{equation}
\label{eq:bg:ABJM-factorization}
 (2n_L-3)+(2n_R-3)=2n-4.
\end{equation}
An orthogonal bridge restores the full dimension. Recursion ends at the
four-particle geometry; its endpoint boundaries are treated separately.
Even nonterminal channels give deeper strata, not additional generic
factorization facets~\cite{HuangWen2014,HuangEtAl2021,OrenPerlsteinTessler2025}.

\subsection{The reflected Lagrangian model}
\label{subsec:bg:reflected}

\paragraph{Domain and atoms.}
Write $\bar i=2n+1-i$ and use Karpman's anti-diagonal symplectic form
~\cite{Karpman2018,KarpmanSu2015}:
\begin{equation}
\label{eq:bg:reflected-source}
 \begin{gathered}
 (E_{\refl})_{ij}=(-1)^j\delta_{i,\bar j},\qquad
 E_{\refl}^T=-E_{\refl},\quad E_{\refl}^2=-I,\\
 \LG_{\geq}(n,2n)^{\refl}
   =\{C\in\Gr_{\geq}(n,2n):CE_{\refl}C^T=0\},\\
 C^\perp=CE_{\refl},\quad
 \dim\LG_{\geq}(n,2n)^{\refl}=\frac{n(n+1)}2.
 \end{gathered}
\end{equation}
Its graph charts are invariant under reflection across the gaps
$2n\mid1$ and $n\mid n+1$, together with exchange of black and white
vertices; reflected edges have equal weights. Reduced symmetric bridge
graphs yield the positive cells~\cite{Karpman2018,KarpmanSu2015}.
The elementary operations are
\begin{equation}
\label{eq:bg:reflected-atoms}
 \IP^{\refl}_i=\inc_i\pre_{\bar i},\qquad
 M_i^{\refl,\rightarrow}(t)
  =I+t(e_{i,i+1}+e_{\overline{i+1},\bar i}),\qquad 1\leq i<n,
\end{equation}
and the transposed bridges. The insertion adds one reflected loop--coloop
pair and one row. The two axis bridges are
$I+t e_{n,n+1}$ and $I+(-1)^{n-1}t e_{2n,1}$, with their transposes.
Starting from the empty plane, insertions and admissible reduced bridges
generate the cell charts~\cite{Karpman2018}. All preserve $E_{\refl}$;
an axis bridge here carries a free positive parameter.

\paragraph{Map and kinematics.}
The recently constructed reflected Lagrangian amplituhedron~\cite{KroviTessler2026} is
\begin{equation}
\label{eq:bg:reflected-map}
 \RefAmp_n(\widetilde\Lambda)=\{C\widetilde\Lambda:C\in\LG_{\geq}(n,2n)^{\refl}\},
 \qquad \widetilde\Lambda\in\Mat^{>}_{2n,n+2},\quad n\geq2.
\end{equation}
In the $B$-picture it is a region of the symplectically isotropic two-planes in
$\mathcal W$, with regular dimension $2n-1$. Specifically,
\begin{subequations}
\label{eq:bg:reflected-kinematics}
\begin{align}
 \xi&=C^\perp\cap\mathcal W,&
 \xi E_{\refl}\xi^T&=0,&
 \widetilde\lambda&=\xi,\quad\lambda=\xi E_{\refl},
 \label{eq:bg:reflected-spinors}\\
 \lambda_i&=(-1)^i\xi_{\bar i},&
 p_{\bar i}&=-p_i^T,&
 s_I&=s_{\bar I}=s_{I^c}=s_{\overline{I^c}}.
 \label{eq:bg:reflected-folding}
\end{align}
\end{subequations}
This is the momentum construction with
$\Lambda=E_{\refl}^T\widetilde\Lambda$, so only one external matrix is
independent. The single symplectic equation is momentum conservation.

\paragraph{Boundary factorizations.}
For the reflection-invariant intervals $J_r=[\bar r,r]_{\mathrm{cyc}}$,
\begin{equation}
\label{eq:bg:reflected-squares}
 \mu_r=\sum_{i=1}^r(-1)^{i+1}[i\bar i],\qquad
 \mu_0=\mu_n=0,\qquad
 Q_{J_r}=\mu_r\begin{pmatrix}0&1\\-1&0\end{pmatrix},\qquad
 s_{J_r}=\mu_r^2.
\end{equation}
The reduced boundary function is $\mu_r$, not its square. In the boundary
description of~\cite{KroviTessler2026}, its maximal domain locus is the
direct sum of two independently reflected blocks:
\begin{equation}
\label{eq:bg:reflected-direct-sum}
 \begin{gathered}
 C=C_1\oplus C_2,\qquad
 C_1\in\LG_{\geq}(r,J_r)^{\refl},\quad
 C_2\in\LG_{\geq}(n-r,J_r^c)^{\refl},\\
 d_{\mathrm{dom}}=\frac{r(r+1)}2+
                  \frac{(n-r)(n-r+1)}2.
 \end{gathered}
\end{equation}
Both subsystems have zero total momentum. Under the boundary hypotheses
of~\cite{KroviTessler2026}, their induced image has dimension $2n-2$.

For a half-contained interval $I=[a,b]\subset[1,n]$, put $m=|I|$ and
$h=n-m+1$. The proposed maximal domain strata over $s_I=0$ are
indexed by $1\leq\kappa\leq m$ and have the form
\begin{equation}
\label{eq:bg:reflected-blob}
 \begin{gathered}
 C=A\star_* H\star_{\bar *}A^{\refl,\vee},\qquad
 A\in\Gr_{\geq}(\kappa,m+1),\quad
 H\in\LG_{\geq}(h,2h)^{\refl},\\
 d_{\mathrm{dom}}=\frac{h(h+1)}2+
                         \kappa(m+1-\kappa)-1.
 \end{gathered}
\end{equation}
Here $A^{\refl,\vee}$ is the reflected color-dual of $A$, of rank
$m+1-\kappa$; it contributes no independent parameters. The core carries
the remaining labels and the pair $*,\bar *$. Insertions
\eqref{eq:bg:reflected-atoms} grow the blobs, and internal bridges preserve
the equation $s_I=0$. Nondegenerate facet sectors have
$2\leq\kappa\leq m-1$; at $m=2$, ranks $1,2$ give the two chiral
branches~\cite{KroviTessler2026}.
Rank-one reflected factors, including $h=1$, mean factors with the
\emph{induced} boundary data, not the full-rank map
\eqref{eq:bg:reflected-map} with $n=1$.

Modulo reflection and complement, the proposed primary supports are
$\mu_r=0$ and $s_{[a,b]}=0$, with the latter resolved into chiral factors
when $b=a+1$. The identity
\begin{equation}
\label{eq:bg:reflected-crossing}
 s_{[\bar b,a]_{\mathrm{cyc}}}
     =s_{[a+1,b]}+\mu_a\mu_b\qquad(1\leq a<b\leq n)
\end{equation}
explains the reduction of asymmetric axis-crossing channels. When $b=a+1$
the first term is zero; under the strong positivity conditions, the other cases
are confined to intersections of existing supports~\cite{KroviTessler2026}.

\paragraph{Reflected BCFW decomposition.}
For a chosen reflected bridge, apply
\eqref{eq:bg:bridge-transfer} to the reduced boundary function $F_B$.
Every primary removal equation is affine-linear:
\begin{equation}
\label{eq:bg:reflected-removal}
 F_B^M(t)=F_B+t\dot F_{B,M},\qquad
 t_B=-\frac{F_B}{\dot F_{B,M}}\quad(\dot F_{B,M}\ne0).
\end{equation}
A boundary is \emph{hit} precisely when its reduced function depends on
$t$. A paired bridge at $i\mid i+1$ hits $\mu_i$ and the half-contained
channels separating $i$ and $i+1$; an axis bridge hits no $\mu_r$.
Recursively refine the lower factors in
\eqref{eq:bg:reflected-direct-sum}--\eqref{eq:bg:reflected-blob} into
BCFW cells and adjoin the normal bridge. This gives the proposed
$(2n-1)$-dimensional tiling cells and the corresponding instance of
\eqref{eq:bg:canonical-sum}~\cite{KroviTessler2026}.
The completeness of this boundary description and the reflected tiling
are stated there with their strong-positivity proofs deferred to
\cite{OrenPerlsteinKrovi}.

\subsection{Fraser's cyclic nonnegative Grassmannian}
\label{subsec:bg:Fraser}

In this subsection and the following one we recall only the domain geometry, since the corresponding amplituhedra were not constructed prior to this work. Let $N=p\ell$,
$p\geq2$, and $r=S_k^\ell$. Fraser studies
\begin{equation}
\label{eq:bg:Fraser-source}
 \FraserPos_{p;k,N}=(\Gr_{\geq}(k,N))^r.
\end{equation}
It is a closed ball; the intersections with $\ell$-periodic positroid cells
give its cell decomposition~\cite[Theorem~5.1]{Fraser2020}.

\paragraph{Representations and dimension.}
The following centered indexing fixes the character names used later.
Over $\mathbb C$ define
\begin{subequations}
\label{eq:bg:cyclic-modes}
\begin{align}
 z_t&=\exp\!\left(\frac{\pi\mathrm i(2t-k+1)}N\right),\quad
 v_t=(1,z_t^{-1},\ldots,z_t^{-(N-1)})^T,\quad S_kv_t=z_tv_t,
 \label{eq:bg:Fourier-basis}\\
 \zeta_j&=\exp\!\left(\frac{\pi\mathrm i(2j-k+1)}p\right),\qquad
 E_j=\ker(r-\zeta_j I)\simeq V_j\otimes\mathbb C^\ell.
 \label{eq:bg:cyclic-characters}
\end{align}
\end{subequations}
Here $t\in\mathbb Z/N\mathbb Z$, $j\in\mathbb Z/p\mathbb Z$, and
$V_j$ is the one-dimensional representation on which $r$ acts as
$\zeta_j$. This is a representation of the \emph{signed} cyclic lift:
$r^p=(-1)^{k-1}I$.
Thus its scalar sign must not be discarded when $k$ is even.
Fraser proves that all of \eqref{eq:bg:Fraser-source} lies in the unique
complex component containing the cyclically symmetric positive point
$K_{k,N}$ of \eqref{eq:bg:Karp-point}
~\cite[Definition/Lemma~4.8]{Fraser2020}. The point has complexification
$\operatorname{span}(v_0,\ldots,v_{k-1})$, and
$rv_t=z_t^\ell v_t=\zeta_{t\bmod p}v_t$. Thus component selection
determines the labelled multiplicities, not just their unordered collection.
Writing $k=\alpha p+\beta$, $0\leq\beta<p$, we obtain
\begin{equation}
\label{eq:bg:balanced-multiplicities}
 \begin{gathered}
 C_{\mathbb C}=\bigoplus_j V_j\otimes C_j,\qquad c_j:=\dim C_j,\\
 c_j=\#\{0\leq t<k:t\equiv j\pmod p\}
 =\begin{cases}\alpha+1,&0\leq j<\beta,\\\alpha,&\beta\leq j<p.
 \end{cases}
 \end{gathered}
\end{equation}
Consequently the distinguished complex component and its dimension are
~\cite[Proposition~4.1 and Definition/Lemma~4.8]{Fraser2020}
\begin{equation}
\label{eq:bg:Fraser-component}
 X^{\mathrm{cyc}}_{p;k,N,\mathbb C}\simeq
 \prod_j\Gr(c_j,\ell),\qquad
 d^{\mathrm{cyc}}_{p;k,N}
 =\sum_jc_j(\ell-c_j)
 =\frac{k(N-k)-\beta(p-\beta)}p.
\end{equation}
This is also the real dimension of \eqref{eq:bg:Fraser-source}.
The nonnegative locus is \emph{not} asserted to be a product of ordinary
positive Grassmannians: positivity is imposed in the original particle
basis. Complex conjugation pairs the modes by
\begin{equation}
\label{eq:bg:cyclic-mode-pairing}
 j^\vee=k-1-j\pmod p,\qquad
 \overline{\zeta_j}=\zeta_{j^\vee},\qquad c_j=c_{j^\vee}.
\end{equation}
For later use, the relation $srs=r^{-1}$ gives the same pairing for any dihedral
reflection. Non-self-paired characters form the two slots of a real
rotation representation; self-paired modes satisfy $2j=k-1\pmod p$.
There is one such mode for odd $p$, and two or none for even $p$ according
as $k$ is odd or even. These labels refer to the fixed ambient signed
action, not to a separately recentered indexing at each rank.

\paragraph{Atomic construction.}
Let $K_{q,M}\in\Gr_{>}(q,M)$ denote the cyclically symmetric positive
point. By Karp's theorem, it is the unique nonnegative point fixed by
the one-step signed cyclic shift~\cite[Theorem~1.1]{Karp2019}.
Its ordered minors are
\begin{equation}
\label{eq:bg:Karp-point}
 \Delta_{i_1\cdots i_q}(K_{q,M})
   \ \propto\ \prod_{a<b}\sin\!\frac{\pi(i_b-i_a)}M.
\end{equation}
The common proportionality factor can be chosen positive. This follows
from Scott's trigonometric determinant identity~\cite{Scott1879};
see \cite[Lemma~2.3 and Proposition~2.5]{Karp2019} for a proof covering
both parities, and \cite[Lemma~3.1(ii)]{GalashinKarpLam2022}.
The complexification of $K_{q,M}$ is spanned by the centered modes
$v_0,\ldots,v_{q-1}$ of
\eqref{eq:bg:Fourier-basis}, with $(k,N)=(q,M)$.
For remainder $\beta>0$, place $K_{\beta,p}$ on one $p$-element label
orbit; for $\beta=0$ start from the empty plane. Insert $\alpha$ full
coloop orbits and make all other orbits loops. These are the zero-cell
seeds of~\cite[Example~5.4]{Fraser2020}. In our atomic notation, with
positions understood in the \emph{final cyclic order}, the orbit operations and folded bridges are
\begin{equation}
\label{eq:bg:Fraser-atoms}
 \begin{gathered}
 O_i=\{i,i+\ell,\ldots,i+(p-1)\ell\},\qquad
 \pre^{r}_{O_i}=\prod_{a=0}^{p-1}\pre_{i+a\ell},\quad
 \inc^{r}_{O_i}=\prod_{a=0}^{p-1}\inc_{i+a\ell},\\
 B^r_{u,v}(t)=\prod_{a=0}^{p-1}B_{u+a\ell,v+a\ell}(t),
 \qquad u<v<u+\ell,\quad t>0.
 \end{gathered}
\end{equation}
The admissibility and signs of the ordinary bridges include intervening
lollipops and cyclic seams. The orbit members commute and share one
parameter. Applied equivariantly, Fraser's bridge reduction constructs every fixed cell from a
zero-cell seed; 
a maximal reduced sequence of orbit operations reaching the top cell has \(d^{\mathrm{cyc}}_{p;k,N}\) parameters
~\cite[Lemma~5.9 and Section~5.3]{Fraser2020}.
Its positroid index satisfies $f(i+\ell)=f(i)+\ell$; the top index is
$f(i)=i+k$. The atomic core can remain unresolved: resolving it by an
ordinary plabic graph need not preserve visible cyclic symmetry, although
the represented point is symmetric~\cite[Examples~6.6 and~6.8]{Fraser2020}.

\subsection{Shevchenko's rotated Lagrangian Grassmannian}
\label{subsec:bg:Shevchenko}

Let $N=2n$ and write $\vartheta(i)=i+n$ modulo $2n$. Using the
alternating diagonal matrix $A_n$ of \eqref{eq:sym:positive-duality},
Shevchenko's nonnegative Lagrangian Grassmannian~\cite{Shevchenko2025}
is defined by
\begin{equation}
\label{eq:bg:rotated-domain}
 \begin{gathered}
 \Omega_{\mathrm{rot}}=\begin{pmatrix}0&A_n\\-A_n&0\end{pmatrix}
                      =-S_n^nA_{2n},\\
 \LG_{\geq}(n,2n)^{\mathrm{rot}}
 =\{C\in\Gr_{\geq}(n,2n):C\Omega_{\mathrm{rot}}C^T=0\}.
 \end{gathered}
\end{equation}
Equivalently, it is the fixed locus of $C\mapsto S_n^n\mathcal D(C)$:
its Pl\"ucker coordinates satisfy
$\Delta_I(C)=\Delta_{\vartheta(I^c)}(C)$ for every $n$-subset $I$.
Thus the symmetry combines a half-turn with exchange, rather than
imposing either separately. The domain is a closed ball of dimension
$n(n+1)/2$. Its nonempty intersections with positroid cells are cells,
represented by plabic graphs invariant under a half-turn together with
exchange of black and white vertices; paired edges have equal
weights~\cite[Theorem~1.2]{Shevchenko2025}.

The atomic operations insert a loop and a coloop at opposite positions
on the evolving cyclic label set, increasing $(n,2n)$ to $(n+1,2n+2)$,
and attach paired bridges
\begin{equation}
\label{eq:bg:rotated-atoms}
 M_i^{\mathrm{rot},\rightarrow}(t)
   =I+t(e_{i,i+1}+e_{i+n+1,i+n}),\qquad
 M_i^{\mathrm{rot},\leftarrow}(t)
   =(M_i^{\mathrm{rot},\rightarrow}(t))^T,
 \quad 1\leq i<n,\quad t>0.
\end{equation}
At cyclic seams use the inherited signed bridges. Each pair consists of
an ordinary bridge and its half-turned color-dual, with one common
parameter. Together with the loop--coloop insertions, admissible paired
bridges give positive coordinates on all the cells
\cite[Sections~5.2--5.5]{Shevchenko2025}.
\section{Particle symmetries and their actions}
\label{sec:symmetries}
In this section we study the possible positivity preserving symmetry groups relevant to our study of nonnegative Grassmannians and amplituhedra. 
\subsection{Symmetry data and positive lifts}
\label{subsec:symmetry-data}
Let $N\geq3$ and let $[N]=\{1,\ldots,N\}$ be cyclically ordered, with indices understood
modulo $N$, and let $D_N$ be the dihedral group of order $2N$.
Write $\tau(i)=i+1$ and $j(i)=N+1-i$ for its standard generators.
A \emph{symmetry datum} is a finite group $G$ together with an injective
homomorphism
\begin{equation}
\label{eq:sym:datum}
 \rho:G\hookrightarrow D_N\times E,
 \qquad E=\mathbb Z/2\mathbb Z,
 \qquad \rho(\gamma)=(g_\gamma,\epsilon(\gamma)).
\end{equation}
We call $L:=\operatorname{pr}_{D_N}\rho(G)$ the \emph{label group},
$\epsilon$ the \emph{exchange character}, and $H:=\ker\epsilon$ the
\emph{linear subgroup}. The factor $E$ fixes particle labels and exchanges
the two spinor sectors, and likewise the two external subspaces.
Thus, on the spinor slots $[N]\times E$, with $0$ denoting $\lambda$ and
$1$ denoting $\widetilde\lambda$, the action is
\begin{equation}
\label{eq:sym:slots}
 \gamma\cdot(i,a)=(g_\gamma(i),a+\epsilon(\gamma)).
\end{equation}

The action on source planes uses the signed maps in
\eqref{eq:sym:positive-lifts}. On subspaces, $\mathcal D^2=1$,
$\mathcal D S_k=S_{N-k}\mathcal D$, and $\mathcal D J=J\mathcal D$.
Consequently they realize the direct product in~\eqref{eq:sym:datum}.
In particular, a fixed source of rank $k$ can have exchange symmetry only if
\begin{equation}
\label{eq:sym:middle-rank}
 \epsilon\ne0\quad\Longrightarrow\quad N=2k.
\end{equation}
Throughout this paper, since our construction involves $(G,\rho)-$fixed spaces, exchange-bearing models have even $N=2k$; no
additional evenness assumption is imposed on $k$.

For a fixed-rank problem satisfying~\eqref{eq:sym:middle-rank}, let
$\widehat g_k$ denote the signed label map induced by $S_k,J$, defined
up to a scalar sign. The source-plane, spinor and external actions are
\begin{subequations}
\label{eq:sym:actions}
\begin{align}
 T_\gamma&:=\widehat{g_\gamma}_k A_N^{\epsilon(\gamma)},
 \label{eq:sym:signed-map}\\
 \gamma\cdot C&=
 \begin{cases}
 T_\gamma C,&\epsilon(\gamma)=0,\\
 T_\gamma(C^\perp),&\epsilon(\gamma)=1,
 \end{cases}
 \label{eq:sym:source-action}\\
 \gamma\cdot(X_0,X_1)
   &=\bigl(T_\gamma X_{\epsilon(\gamma)},
           T_\gamma X_{1-\epsilon(\gamma)}\bigr),
 \qquad
 (X_0,X_1)=(\lambda,\widetilde\lambda)
 \text{ or }(\Lambda,\widetilde\Lambda).
 \label{eq:sym:pair-action}
\end{align}
\end{subequations}
All equalities concern \emph{subspaces}, not chosen matrix representatives.
In particular, the spinor and external subspaces inherit the ambient
signed maps; their seam signs are not chosen anew from their dimensions.
The alternating signs in an exchange can equivalently be absorbed into
its signed label map. Scalar signs disappear on Grassmannians but must
be retained for representation calculations: we use the finite signed
lift $\widehat G=\langle -I,T_\gamma:\gamma\in G\rangle$ when working
on vector spaces.

\subsection{Subgroups and principal families}
\label{subsec:symmetry-families}

We identify $G$ with $\rho(G)$ when no confusion can arise.
Every label group is cyclic or dihedral. For $N=p\ell$, write
\begin{equation}
\label{eq:sym:label-groups}
 r=\tau^\ell,\qquad s_c(i)=c-i\pmod N,\qquad
 C_p=\langle r\rangle,\qquad D_{p,c}=\langle r,s_c\rangle.
\end{equation}
We allow $p=1$, so that $C_1$ is trivial and $D_{1,c}$ is a single
reflection. For either label group $L$, every subgroup projecting onto
$L$ has exactly one of the forms
\begin{equation}
\label{eq:sym:subgroup-classification}
 G=L\times E
 \qquad\text{or}\qquad
 G=\Gamma_\chi(L):=\{(g,\chi(g)):g\in L\},
 \quad \chi\in\operatorname{Hom}(L,E).
\end{equation}
Indeed, the kernel of the label projection is either $E$ or trivial.
In the first case both lifts of every label symmetry occur; in the
second, the unique lifts define the character $\chi$.
The choice $\chi=0$ gives the purely linear actions.

For $L=C_p$, a character is determined by $\alpha=\chi(r)$, with
$p\alpha=0$ in $E$. Besides the linear cyclic family, this gives
\begin{equation}
\label{eq:sym:cyclic-exchange}
 G_{\mathrm{sAB}}=C_p\times E,
 \qquad
 G_{\mathrm{sSh}}=\langle(r,1)\rangle\quad(p\text{ even}),
\end{equation}
called the \emph{symmetric ABJM} and \emph{symmetric Shevchenko}
families. For odd $p$, the generator $(r,1)$ instead generates the
product $C_p\times E$, since $(r,1)^p=(1,1)$.

For $L=D_{p,c}$, put $s=s_c$. A character is determined by
\begin{equation}
\label{eq:sym:dihedral-character}
 \alpha=\chi(r),\qquad\beta=\chi(s),\qquad p\alpha=0,
 \qquad \chi(r^a s)=a\alpha+\beta\quad\text{in }E.
\end{equation}
The graph subgroups $\Gamma_\chi(L)$ are therefore
\begin{equation}
\label{eq:sym:dihedral-families}
\begin{array}{c|c|l}
 (\alpha,\beta)&\text{generators}&\text{family}\\ \hline
 (0,0)&(r,0),(s,0)&\text{linear dihedral}\\
 (0,1)&(r,0),(s,1)&\text{reflection-twisted}\\
 (1,0)&(r,1),(s,0)&\text{mixed phase }0\quad(p\text{ even})\\
 (1,1)&(r,1),(s,1)&\text{mixed phase }1\quad(p\text{ even}).
\end{array}
\end{equation}
The remaining family is the product $D_{p,c}\times E$, called
\emph{dihedral ABJM}. In the mixed families one reflection class acts
linearly and the other with exchange. The phase is relative to the
specified reflection $s$: replacing $s$ by $rs$ changes $\beta$ to
$\beta+1$, without changing the action itself.

\paragraph{Basic specializations.}
The trivial group gives ordinary momentum geometry;
$C_p\times\{0\}$ gives the cyclic domain studied by
Fraser~\cite[Theorem~5.1]{Fraser2020}; and $\langle(s,0)\rangle$ gives a linear
reflection. At $N=2n$, $k=n$, write $\vartheta=\tau^n$ for the
half-turn of Section~\ref{subsec:bg:Shevchenko}. The three basic
exchange actions are
\begin{equation}
\label{eq:sym:basic-exchange}
\begin{array}{c|c|l}
 G&\text{fixed-plane equation}&\text{domain geometry}\\ \hline
 \langle(1,1)\rangle&C=\mathcal D(C)&\text{orthogonal (ABJM)}\\
 \langle(j,1)\rangle&C=J\mathcal D(C)&\text{reflected Lagrangian}\\
 \langle(\vartheta,1)\rangle&C=S_n^n\mathcal D(C)&\text{rotated Lagrangian}.
\end{array}
\end{equation}
The first is the domain of the ABJM momentum amplituhedron
\cite{HuangEtAl2021,HeKuoZhang2021}; the other two are the positive
Lagrangian domains of Karpman--Su and Shevchenko,
respectively~\cite{KarpmanSu2015,Shevchenko2025}.
The distinction is already visible in the bilinear forms: $A_N$ is
symmetric, whereas $JA_N$ and $S_n^nA_N$ are skew-symmetric.

\subsection{Fixed vertices, fixed edges and canonical arcs}
\label{subsec:reflection-geometry}

Suppose $L=D_{p,c}$. Since $r^a s_c=s_{c+a\ell}$, the labelled subgroup
depends only on $c\pmod\ell$. Conjugating by $\tau^b$ sends $c$ to
$c+2b$. Thus the label action has one type up to cyclic relabelling
when $\ell$ is odd, and two when $\ell$ is even.

A \emph{fixed vertex} is a particle fixed by some reflection in $L$.
A \emph{fixed edge} is a boundary edge preserved setwise by some
reflection in $L$; that reflection exchanges its endpoints. These
endpoints are called \emph{nearly fixed vertices}. Writing
$b_i=\{i,i+1\}$ for a boundary edge, the marked sets are
\begin{subequations}
\label{eq:sym:marked-sets}
\begin{align}
 V_\rho&=\{i\in[N]:2i\equiv c\pmod\ell\},
 &E_\rho&=\{b_i:2i+1\equiv c\pmod\ell\},
 \label{eq:sym:fixed-congruences}
\end{align}
\end{subequations}
Here ``fixed'' is with respect to some reflection, not necessarily to all of $G$. 
On the quotient by $\langle r\rangle$, the possibilities are
\begin{equation}
\label{eq:sym:quotient-types}
\begin{array}{c|c|c}
 \text{quotient type}&\text{fixed vertices}&\text{fixed edges}\\ \hline
 \ell\text{ odd}&1&1\\
 \ell\text{ even},\ c\text{ even (vertex type)}&2&0\\
 \ell\text{ even},\ c\text{ odd (edge type)}&0&2.
\end{array}
\end{equation}
The quotient retains its $\ell$ distinct boundary-edge positions, also
when $\ell=1,2$. Each fixed quotient vertex or edge lifts to an orbit
of $p$ vertices or edges upstairs. Vertices outside $V_\rho$ have
label orbits of size $2p$.

The closed boundary arcs between consecutive elements of $M_\rho$,
with no fixed or nearly fixed vertex in their interiors, are the \emph{canonical arcs}.
Together with the designation of fixed vertices and fixed edges, they
form the \emph{marked-arc decomposition} of the label action.

For the full symmetry datum, the marks must also retain their exchange
information. A lift $(s,0)$ of a reflection is \emph{linear}, whereas
$(s,1)$ is \emph{exchange-twisted}; a product group contains both lifts
of every reflection. In a mixed family, the two reflection classes
alternate according to~\eqref{eq:sym:dihedral-character}. When $\ell$
is odd they have different geometric axis types, one through vertices
and one through edges; when $\ell$ is even they have the same axis type.
Thus neither the abstract group nor the unlabelled quotient polygon
alone specifies the symmetry relevant to a boundary: its stabilizer in
$G$, including the exchange character, must be retained.
\begin{example}\label{ex:fixed_vertex_edge}
    Consider subgroups of $D_4\times\mathbb{Z}_2$ which is generated by a single reflection. The subgroup generated by the reflection which fixes $1,3$ and flips $2,4$ has two fixed vertices, $1,3.$ The subgroup generated by the reflection which takes $i$ to $5-i$ has no fixed vertex, but two fixed edges $\{1,4\}$ and $\{2,3\}.$ All the four labels are nearly-fixed vertices. In the former case the canonical arcs are $\{1,2,3\},\{3,4,1\},$ and in the latter they are $\{1,2\},\{3,4\}.$
\end{example}

\section{Symmetric nonnegative Grassmannians}
\label{sec:domains}
This section studies the symmetric nonnegative Grassmannians, which are the fixed loci of the nonnegative Grassmannian under the symmetry actions of the previous section. We describe the symmetry action on the positroids, the topology and a representation theoretic perspective, the atomic operations and how they can be used to construct the cells. 
\subsection{Fixed loci, topology and representation type}
\label{subsec:domain:geometry}

Fix a symmetry datum $(G,\rho)$ as in \eqref{eq:sym:datum}, with the standard
positive lifts of \eqref{eq:sym:positive-lifts}. We define its
\emph{symmetric Grassmannian} and its nonnegative and positive parts by
\begin{equation}
\label{eq:dom:definition}
 \begin{gathered}
 X=\SymGr{G}{\rho}{k,N}
   :=\{C\in\Gr(k,N):\gamma\cdot C=C\text{ for every }\gamma\in G\},\\
 X^{\geq}=\SymGrNonneg{G}{\rho}{k,N}:=X\cap\Gr_{\geq}(k,N),\qquad
 X^>=\SymGrPos{G}{\rho}{k,N}:=X\cap\Gr_{>}(k,N).
 \end{gathered}
\end{equation}
Here the action is \eqref{eq:sym:source-action}, not the linear action of
all signed matrices. An exchanging element sends a $k$-plane to an
$(N-k)$-plane; hence $\epsilon\ne0$ forces $N=2k$, as in
\eqref{eq:sym:middle-rank}. We assume this rank compatibility throughout, whenever the exchange factor is non trivial.
We treat $0<k<N$; the rank-zero and full-rank linear cases are points.

\paragraph{Positroid symmetry.}
For $I\subset[N]$, write
\begin{equation}
\label{eq:dom:subset-action}
 \gamma\star I=
 \begin{cases}g_\gamma(I),&\epsilon(\gamma)=0,\\
               g_\gamma(I^c),&\epsilon(\gamma)=1.
 \end{cases}
 \qquad
 \mathcal B(\gamma\cdot C)=\gamma\star\mathcal B(C),
\end{equation}
where $\mathcal B(C)=\{I:\Delta_I(C)\ne0\}$ is the set of bases of its
matroid. Thus linear elements relabel the positroid, whereas exchanging
elements also take its matroid dual. More precisely, after normalizing
$\sum_I\Delta_I(C)=1$, the fixed-point equations on the nonnegative
Grassmannian are exactly
\begin{equation}
\label{eq:dom:Plucker-fixed}
 \Delta_I(C)=\Delta_{\gamma\star I}(C)
 \qquad(\gamma\in G,\ I\in\tbinom{[N]}k).
\end{equation}
Indeed, the signed lifts permute the projective Pl\"ucker coordinates
without relative signs. In the decorated trip-permutation convention
of~\cite{Postnikov2006}, this becomes
\begin{equation}
\label{eq:dom:permutation-action}
 \pi_{\gamma\cdot C}
   =g_\gamma\pi_C^{\,(-1)^{o(g_\gamma)+\epsilon(\gamma)}}g_\gamma^{-1},
 \qquad o(g)=\begin{cases}0,&g\text{ a rotation},\\1,&g\text{ a reflection}.
 \end{cases}
\end{equation}
Fixed-point decorations are interchanged precisely when $\epsilon=1$:
exchange swaps loops and coloops, while linear symmetries preserve them.
The nonempty intersections $X^{\geq}\cap\Pi_{\mathcal B}$ give the induced
positroid stratification. Invariance of $\mathcal B$ is necessary for such
an intersection. 
Theorem~\ref{thm:dom:cells} below proves that every nonempty intersection
is a cell. 

\paragraph{The distinguished component and its dimension.}
Let $K=K_{k,N}$ be the point in \eqref{eq:bg:Karp-point}, and put
$D=K^\perp$. Let $\widehat G$ be the finite signed lift from
Section~\ref{subsec:symmetry-data}, with lifted exchange character still
denoted $\epsilon$, and set $\widehat H=\ker\epsilon$.
Characters of $K$ and $D$ below are characters of $\widehat H$;
Note that an exchanging element is not an endomorphism of $K$.

For a finite-dimensional real $\widehat H$-representation $R$, we write
$\chi_R(h)=\operatorname{tr}_R(h)$, $h\in\widehat H$.
Thus $\chi_R$ is a real-valued representation character, 
not to be confused with the
$E$-valued homomorphism $\chi$ of
\eqref{eq:sym:subgroup-classification}. In particular $\chi_K(h)=\operatorname{tr}(h|_K)$ and $\chi_D(h)=\operatorname{tr}(h|_D)$ for $h\in\widehat H$. 
An exchanging element $a$ interchanges $K,D$, but $a^2\in\widehat H$,
so $\chi_K(a^2)$ is defined.
\begin{theorem}[The positive component]
\label{thm:dom:ball}
The point $K$ belongs to $X^>$. Exactly one irreducible component of the
complex fixed locus meets $X^{\geq}$: the component $X_0$ containing $K$.
Its complex dimension equals the real dimension of $X^>$, namely
\begin{equation}
\label{eq:dom:character-dimension}
 d_\rho=\frac1{|\widehat G|}
 \left(\sum_{h\in\widehat H}\chi_K(h^{-1})\chi_D(h)
       -\sum_{a\in\widehat G\setminus\widehat H}\chi_K(a^2)\right).
\end{equation}
The second sum is absent for a purely linear action. There is a
homeomorphism of pairs
\begin{equation}
\label{eq:dom:ball-pair}
 (X^{\geq},X^>)\simeq(\overline{\mathbb B}^{\,d_\rho},
                           \mathbb B^{d_\rho}),
\end{equation}
where $\mathbb B^d,\overline{\mathbb B}^d\subseteq\mathbb{R}^d$ are a $d-$dimensional open ball and its closure, respectively. Moreover, for every $C\in X^{\geq}$, including its boundary,
\begin{equation}
\label{eq:dom:constant-type}
 C\simeq K,\qquad C^\perp\simeq D
 \quad\text{as }\widehat H\text{-representations}.
\end{equation}
The orthogonal decompositions $C\oplus C^\perp$ and $K\oplus D$ are also
isomorphic as graded $\widehat G$-representations, with exchanging elements
interchanging the summands.
\end{theorem}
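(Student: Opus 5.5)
The plan is to use the Galashin--Karp--Lam contraction of the nonnegative Grassmannian to its cyclically symmetric point, made $\widehat G$-equivariant, and to read off the component and the dimension from representation theory at $K$.

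\emph{Step 1: $K$ is fixed.} First I show that $K=K_{k,N}$ is fixed by every $\gamma\in G$. By Karp's theorem \cite[Theorem~1.1]{Karp2019}, $K$ is the unique nonnegative point fixed by $S_k$.
\begin{itemize}
\item Reversal: $JK$ is nonnegative, since reversal changes all ordered minors by a common sign. It is $S_k$-fixed because $JS_kJ=S_k^{-1}$. Hence $JK=K$. Alternatively, the sine-product formula \eqref{eq:bg:Karp-point} is visibly reversal invariant.
\item Exchange: when $N=2k$, $\mathcal D(K)$ is nonnegative in $\Gr(k,2k)$ and fixed by $S_{N-k}=S_k$, by $\mathcal D S_k=S_{N-k}\mathcal D$. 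Hence $\mathcal D(K)=K$.
\end{itemize}
Since all Pl\"ucker coordinates of $K$ are positive, $K\in X^>$. On vector spaces, the elements of $\widehat H$ preserve $K$ and $D=K^\perp$. Each exchanging element $a=T_\gamma$ maps $K$ onto $T_\gamma A_N$-images of $K^\perp$. Since $A_N(K^\perp)=K$, this means $a$ swaps $K$ and $D$ as subspaces, which gives the graded $\widehat G$-structure on $K\oplus D$.

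\emph{Step 2: the equivariant flow and the ball.} Next I take the flow $C\mapsto \exp(tM)C$ of \cite{GalashinKarpLam2022}, with $M=S_k+S_k^{-1}$ (seam signs included). It preserves $\Gr_{\geq}$ and $\Gr_>$ for $t>0$ and contracts $\Gr_{\geq}$ to $K$.

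I check that $M$ commutes with every signed lift $T_\gamma$, up to the scalar signs that disappear on the Grassmannian.
\begin{itemize}
\item For rotations this is clear.
\item For $J$ it follows from $JS_kJ=S_k^{-1}$.
\item For exchange, $\exp(tM)$ is symmetric, so $(\exp(tM)C)^\perp=\exp(-tM)C^\perp$. Also $A_NMA_N=-M$ on $N=2k$ labels, with the seam sign $(-1)^{k-1}$ matching. Hence $\mathcal D(\exp(tM)C)=\exp(tM)\mathcal D(C)$.
\end{itemize}
Therefore the flow preserves $X^{\geq}$ and $X^>$ and contracts them to $K$. Applying the topological lemma of \cite{GalashinKarpLam2022} (a contractive flow on a compact set that is a smooth manifold near the attractor) gives \eqref{eq:dom:ball-pair}. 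This requires knowing that $X$ is a smooth manifold of dimension $d_\rho$ near $K$; that is Step 3.

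\emph{Step 3: tangent space and dimension.} The tangent space to $\Gr$ at $K$ is $\Hom(K,D)$. The fixed locus of a finite group acting on a smooth variety is smooth, with tangent space the invariants. The $\widehat H$-invariants are $\Hom_{\widehat H}(K,D)$, of dimension $\frac1{|\widehat H|}\sum_h\chi_K(h^{-1})\chi_D(h)$.

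An exchanging $a$ acts on $\Hom(K,D)$ via $\varphi\mapsto -a\varphi^{*}a^{-1}$, since the tangent map of $C\mapsto C^\perp$ is $\varphi\mapsto-\varphi^*$. The invariant dimension is the average over $\widehat G$ of traces. The trace on $\Hom(K,D)\cong K^*\otimes D$ of $\varphi\mapsto a\varphi^*a^{-1}$ is a twisted transpose, whose trace is $\operatorname{tr}(a^2|_K)$ by the standard swap-trace identity. With the minus sign this yields \eqref{eq:dom:character-dimension}. Because $X^>$ is open in the real points of this smooth germ, its real dimension is $d_\rho$.

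\emph{Step 4: constant type and uniqueness of the component.} The $\widehat H$-isomorphism type of $C$ is given by discrete multiplicities of isotypic components. These are constant on the connected set $X^{\geq}$, so $C\simeq K$. Moreover $C^\perp\simeq D$, because $C\oplus C^\perp$ is the fixed ambient representation $\mathbb R^N$. Exchanging elements swap $C$ and $C^\perp$, which gives the graded $\widehat G$-statement.

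For uniqueness of the component: in the purely linear case, the complex fixed locus is a disjoint union, over representation types, of products of (partial-flag-free) Grassmannians of isotypic multiplicity spaces. Each of these is irreducible, so only the piece of type $K$ meets $X^{\geq}$. In the exchange case, each such piece carries the induced involution, and I would show that its fixed locus near points of type $K$ is smooth and connected.

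\emph{Main obstacle.} I expect Step 4 in the exchange case to be the main obstacle: proving that the $K$-type fixed locus of the involution is irreducible, rather than possibly a union of several components such as the two families of $\OG$. I would first try to identify it with a product of orthogonal or Lagrangian Grassmannians on the isotypic spaces, according to whether $a^2$ acts with a symmetric or a skew form. Then I would use the fact that $X^{\geq}$ is connected and smooth near $K$ to select the single component, namely the one through $K$.
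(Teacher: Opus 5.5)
Your proposal is correct, but it reaches the component and ball statements by a different route from the paper. The paper works throughout in the global graph chart $\mathcal U_K=\{C:C\cap D=0\}\cong\Hom_{\mathbb R}(K,D)$. By Cauchy--Binet this chart contains all of $\Gr_{\geq}(k,N)$. The key observation is that every element of $\widehat G$, exchanges included, acts \emph{linearly} on it, by $\xi\mapsto h_D\xi h_K^{-1}$ and $\xi\mapsto -a_K\xi^Ta_D^{-1}$. Hence $X\cap\mathcal U_K$ is the linear space $\mathcal T_\rho=\Hom(K,D)^{\widehat G}$, and this one fact gives everything at once:
\begin{itemize}
\item the only component meeting the chart, and hence $X^{\geq}$, is the closure of an affine space;
\item its dimension is the trace average;
\item the GKL flow is a linear contraction of $\mathcal T_\rho$, so \cite[Lemma~2.3]{GalashinKarpLam2022} applies with $Q=\mathcal T_\rho$;
\item the graph maps $u\mapsto u+\xi u$ and $v\mapsto-\xi^Tv+v$ are explicit $\widehat H$-isomorphisms $K\to C$ and $D\to C^\perp$. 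A polar decomposition then gives an orthogonal graded isomorphism.
\end{itemize}
You instead use smoothness of fixed loci of finite groups, the tangent space at $K$, connectedness of $X^{\geq}$ via the flow, and continuity of characters. This is softer and valid. It does not recover the affine-space structure of $X_0\cap\mathcal U_K$, and it needs the few supplements below.

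\emph{The step you flag as the main obstacle is not one.} The complex fixed locus is smooth by your Step~3, so its irreducible components are its connected components and are pairwise disjoint. $X^{\geq}$ is connected, since each point flows inside $X^{\geq}$ to $K$. Hence $X^{\geq}$ lies in exactly one component, the one through $K$. No identification with orthogonal or Lagrangian Grassmannians is needed. Trying to prove the $K$-type locus connected would actually fail: for pure exchange it is $\OG(n,2n)$, which has two families.

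\emph{The ball lemma needs more than preservation.} Its hypothesis is positivity improvement, $F_t(X^{\geq})\subset X^>$ for $t>0$, which is \cite[Corollary~3.8]{GalashinKarpLam2022}. It also requires an invariant smooth submanifold $Q$ of the Euclidean chart containing all of $X^{\geq}$, not just smoothness near the attractor. You can take $Q$ to be the component through $K$ of the real fixed locus in GKL's chart. Your Step~3 makes it smooth everywhere, and it is flow-invariant because the flow commutes with $\widehat G$ and fixes $K$.

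\emph{The graded statement deserves one line.} Since $aK=D$ and $aC=C^\perp$, any $\widehat H$-isomorphism $\phi:K\to C$ extends to the $\widehat G$-equivariant map $x+ay\mapsto\phi(x)+a\phi(y)$ for $x,y\in K$. In other words, both pairs are $\operatorname{Ind}_{\widehat H}^{\widehat G}$ of their first summand.
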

For cyclic symmetry, component selection and dimension are already Fraser's~\cite[Definition/Lemma~4.8]{Fraser2020}, as is the ball theorem~\cite[Theorem~5.1 and Section~5.2]{Fraser2020}. The proof extends Fraser's argument to allow the reflection and exchange symmetries. In particular, the characters in \eqref{eq:dom:character-dimension} can also be computed
explicitly from Fraser's cyclic multiplicities and the signed
reflection and exchange actions, see the paragraph
``Reading the representations'' below. The proof itself is technical, and may be skipped upon first reading.

\begin{proof}
\emph{The common positive point.}
Karp's uniqueness theorem~\cite[Theorem~1.1]{Karp2019} and
\eqref{eq:sym:linear-relations} imply $JK=K$: reversal takes $K$ to
another positive $S_k$-fixed point. At middle rank, the same argument
applies to $\mathcal D$, since it commutes with the signed shift on
Grassmannian points. Thus $K$ is fixed by every admissible symmetry.
In the Fourier basis of \eqref{eq:bg:Fourier-basis},
\begin{equation}
\label{eq:dom:Karp-decomposition}
 K_{\mathbb C}=\bigoplus_{t=0}^{k-1}\mathbb C v_t,\qquad
 D_{\mathbb C}=\bigoplus_{t=k}^{N-1}\mathbb C v_t.
\end{equation}
Strict positivity follows from \eqref{eq:bg:Karp-point}.

\emph{A global chart and linear fixed equations.}
Consider the graph chart centered at $K$,
\begin{equation}
\label{eq:dom:graph-chart}
 \mathcal U_K=\{C:C\cap D=0\},\qquad
 C_\xi=\{u+\xi u:u\in K\},\quad \xi\in\Hom_{\mathbb R}(K,D).
\end{equation}
It contains the entire nonnegative Grassmannian, as in
\cite[Proposition~3.4]{GalashinKarpLam2022}. Indeed, for positive and
nonnegative row representatives $B$ of $K$ and $C$, respectively,
Cauchy--Binet gives
$\det(CB^T)=\sum_I\Delta_I(C)\Delta_I(B)>0$.
Consequently orthogonal projection $C\to K$ is invertible.

In the orthogonal splitting $K\oplus D$, write a linear element as
$\operatorname{diag}(h_K,h_D)$ and an exchanging element as
$\left(\begin{smallmatrix}0&a_D\\a_K&0\end{smallmatrix}\right)$.
For $u\in K$ and $v\in D$, the identity
$\langle u+\xi u,-\xi^Tv+v\rangle=0$ and dimension comparison give
$C_\xi^\perp=\{-\xi^Tv+v:v\in D\}$.
Applying $h$ and $a$ to these respective graph vectors gives
$h_Ku+h_D\xi u$ and $a_Dv-a_K\xi^Tv$; using $h_Ku$ and $a_Dv$
as the new $K$-coordinates yields the \emph{linear} actions
\begin{equation}
\label{eq:dom:graph-action}
 h\cdot\xi=h_D\xi h_K^{-1},\qquad
 a\cdot\xi=-a_K\xi^Ta_D^{-1}.
\end{equation}
Since each plane in $\mathcal U_K$ has a unique graph coordinate,
$\gamma\cdot C_\xi=C_\xi$ is equivalent to $\gamma\cdot\xi=\xi$.
Consequently,
\begin{equation}
\label{eq:dom:fixed-tangent}
 \mathcal T_\rho:=\Hom_{\mathbb R}(K,D)^{\widehat G},
 \qquad X\cap\mathcal U_K=\{C_\xi:\xi\in\mathcal T_\rho\}.
\end{equation}
Thus $\mathcal T_\rho$ describes the fixed locus throughout this chart,
not just its first-order deformations. Its complexification is an
irreducible affine space containing $\xi=0$, so only $X_0$ meets this
chart; since $X^{\geq}\subset\mathcal U_K$, no other component meets
$X^{\geq}$. All minors of $K$ are strictly positive, so the same holds
for sufficiently small $\xi\in\mathcal T_\rho$: $K$ is a relative
interior point of $X^{\geq}$ in $X_0(\mathbb R)$.

\emph{The dimension $d_\rho$.} 
Recall $\xi\in\Hom(K,D)\simeq D\otimes K^*$. Thus, the first operator in
\eqref{eq:dom:graph-action} has trace $\chi_K(h^{-1})\chi_D(h)$. 
For the second, the matrix-unit identity
$\operatorname{tr}(X\mapsto AX^TB)=\operatorname{tr}(AB^T)$ and
$(a_D^{-1})^T=a_D$ give trace
$-\operatorname{tr}(a_Da_K)=-\chi_K(a^2)$.
To calculate the fixed locus dimension we need to calculate the projection on the trivial representation. Writing $\mathsf R_\gamma$ for the induced operator,
$P_{\mathrm{inv}}=|\widehat G|^{-1}\sum_{\gamma\in\widehat G}\mathsf R_\gamma$
is an idempotent with image $\mathcal T_\rho$; hence its trace equals
$\dim\mathcal T_\rho$. Averaging the displayed traces therefore gives
\eqref{eq:dom:character-dimension}, the dimension of the affine chart.
In particular the actual signed square $a^2$ is essential, even when it
is a scalar on the ambient space.

\emph{The contracting flow.}
Set
\begin{equation}
\label{eq:dom:heat-flow}
 \mathsf H=S_k+S_k^T,\qquad F_t=e^{t\mathsf H},\qquad
 \widehat\gamma\mathsf H\widehat\gamma^{-1}
       =(-1)^{\epsilon(\gamma)}\mathsf H.
\end{equation}
Rotations and $J$ commute with $\mathsf H$, whereas $A_N\mathsf H A_N
=-\mathsf H$ when $N$ is even. These are exact matrix identities.
Since $(F_tC)^\perp=F_{-t}C^\perp$ and $A_NF_{-t}=F_tA_N$,
\[
 \mathcal D(F_tC)=A_NF_{-t}C^\perp
                 =F_tA_NC^\perp=F_t\mathcal D(C).
\]
The signed dihedral label maps also commute with $F_t$, so the
\emph{action on source planes} commutes with $C\mapsto F_tC$,
including for exchange.
Galashin--Karp--Lam's positivity-improving property gives
\begin{equation}
\label{eq:dom:flow-positivity}
 F_t(X^{\geq})\subset X^>\qquad(t>0)
\end{equation}
by~\cite[Corollary~3.8]{GalashinKarpLam2022}.

In the graph chart, the flow takes the form
\begin{equation}
\label{eq:dom:flow-contraction}
 F_tC_\xi=C_{f_t(\xi)},\qquad
 f_t(\xi)=e^{t\mathsf H|_D}\xi e^{-t\mathsf H|_K}.
\end{equation}
This is the contractive flow of
\cite[Section~3.3]{GalashinKarpLam2022} on $\Hom_{\mathbb R}(K,D)$.
By the equivariance established above, it preserves $\mathcal T_\rho$;
its restriction is therefore contractive with respect to the induced norm.

The graph preimage of $X^{\geq}$ is compact in $\mathcal T_\rho$,
and that of $X^>$ is open. By \eqref{eq:dom:flow-positivity}, every
positive-time map sends the former into the latter; letting $t\downarrow0$
also shows that the former is the closure of the latter.
Thus \cite[Lemma~2.3]{GalashinKarpLam2022} applies on $\mathcal T_\rho$,
proving \eqref{eq:dom:ball-pair}. In particular, $X^>$ is connected
and has dimension $d_\rho$.


\emph{Representation type, including the boundary.}
For $\xi\in\mathcal T_\rho$, the graph maps $u\mapsto u+\xi u$ and
$v\mapsto -\xi^Tv+v$ are $\widehat H$-equivariant isomorphisms. This proves
\eqref{eq:dom:constant-type} without a genericity assumption. For the
full graded assertion, let $P_U$ denote orthogonal projection onto $U$.
The construction
\begin{equation}
\label{eq:dom:graded-trivialization}
 M_C=P_CP_K+P_{C^\perp}P_D,\qquad
 R_C=M_C(M_C^TM_C)^{-1/2}
\end{equation}
is well-defined: $M_C$ is invertible, and its polar part $R_C$ is
orthogonal. Both send the two summands to $C$ and $C^\perp$. Every element of $\widehat G$ either preserves both
pairs of projections or exchanges both, so it commutes with $M_C$ and
$R_C$. Thus $R_C$ is the required equivariant orthogonal isomorphism.
\end{proof}

\paragraph{Reading the representations.}
We make \eqref{eq:dom:constant-type} explicit for later use. For real
irreducible $\widehat H$-modules $R_\nu$, write
\begin{equation}
\label{eq:dom:real-multiplicities}
 \begin{gathered}
 \mathbb R^N\simeq\bigoplus_\nu R_\nu^{\oplus m_\nu},\qquad
 C\simeq\bigoplus_\nu R_\nu^{\oplus c_\nu},\qquad
 C^\perp\simeq\bigoplus_\nu R_\nu^{\oplus(m_\nu-c_\nu)},\\
 \mathbb D_\nu=\End_{\widehat H}(R_\nu),\qquad
 e_\nu=\dim_{\mathbb R}\mathbb D_\nu,\qquad
 c_\nu=\frac1{e_\nu|\widehat H|}
       \sum_{h\in\widehat H}\chi_K(h)\chi_{R_\nu}(h).
 \end{gathered}
\end{equation}
For a purely linear action, the real fixed locus of this representation
type is a product of Grassmannians over $\mathbb D_\nu$.
A chart in its $\nu$th factor has $c_\nu(m_\nu-c_\nu)$ free entries in
$\mathbb D_\nu$, each contributing $e_\nu$ real coordinates. Hence
\begin{equation}
\label{eq:dom:linear-multiplicity-dimension}
 d_\rho=\sum_\nu e_\nu c_\nu(m_\nu-c_\nu)\qquad(\epsilon=0).
\end{equation}
Crucially, this is an ambient description, not a product decomposition of the
nonnegative part. A nonreal cyclic character pair is a real rotation
plane whose commuting endomorphisms are the complex scalars; adding a
reflection requires these scalars to commute with complex conjugation,
leaving only real scalars. Thus $e_\nu=2$ for such cyclic pairs and
$e_\nu=1$ for real one-dimensional characters and dihedral irreducibles.

Here is a direct prescription in the Fourier notation already fixed.
Let $q$ be the order of the rotational \emph{label subgroup of $H$}, and
put $\ell_H=N/q$, $r_H=S_k^{\ell_H}$. For mixed and symmetric-Shevchenko
families $q=p/2$, not $p$; in the other families $q=p$.
Fraser's component selection~\cite[Definition/Lemma~4.8]{Fraser2020}
applies to this cyclic subgroup. In
\eqref{eq:bg:cyclic-characters}--\eqref{eq:bg:balanced-multiplicities}
replace $p$ by $q$ to obtain
\begin{equation}
\label{eq:dom:kernel-cyclic-types}
 \begin{gathered}
 E_j^H=\bigoplus_{t\equiv j\, (q)}\mathbb Cv_t,\qquad
 \dim E_j^H=\ell_H,\\
 c_j^H=\#\{0\leq t<k:t\equiv j\pmod q\},\qquad
 j^\vee=k-1-j\pmod q.
 \end{gathered}
\end{equation}
Scalar $-I$ acts by $-1$ throughout. 
The remaining ambient generators
are explicitly
\begin{equation}
\label{eq:dom:Fourier-generators}
 \iota(t)=k-1-t\pmod N,\qquad
 Jv_t=(-1)^{k-1}z_t v_{\iota(t)},\qquad
 A_Nv_t=v_{t+n}\quad(N=2n).
\end{equation}
These identities, together with $S_kv_t=z_tv_t$, determine all the
characters in \eqref{eq:dom:character-dimension} and
\eqref{eq:dom:real-multiplicities}.

If $H$ contains a reflection $s_c$, choose its lift
$\widehat s_c=S_k^{c-1}J$. A two-element orbit $\{j,j^\vee\}$ in
\eqref{eq:dom:kernel-cyclic-types} forms a \emph{dihedral multiplet}:
the two cyclic characters are the two eigenlines of one real
two-dimensional irreducible, occurring $c_j^H$ times in $C$ and
$\ell_H$ times in the ambient space. For a self-paired mode,
$2j=k-1\pmod q$, split by the reflection sign. With
$K_j=K_{\mathbb C}\cap E_j^H$, its multiplicities are
\begin{equation}
\label{eq:dom:self-paired-types}
 m_{j,\pm}=\frac{\ell_H\pm\operatorname{tr}(\widehat s_c|_{E_j^H})}2,
 \qquad
 c_{j,\pm}=\frac{c_j^H\pm\operatorname{tr}(\widehat s_c|_{K_j})}2.
\end{equation}Indeed, $\widehat s_c^2=I$, so its two eigenspace dimensions sum to the
total dimension and their difference is its trace.
There is no unspecified choice of reflection signs here: for either
Fourier-index set $F$ defining $E_j^H$ or $K_j$, the trace is
\begin{equation}
\label{eq:dom:reflection-trace}
 \operatorname{tr}\!\left(\widehat s_c\bigm|\bigoplus_{t\in F}\mathbb Cv_t\right)
 =(-1)^{k-1}
   \sum_{\substack{t\in F\\2t\equiv k-1\, (N)}}z_t^{\,2-c}.
\end{equation}
Thus \eqref{eq:dom:kernel-cyclic-types}--\eqref{eq:dom:reflection-trace}
fully specify the decomposition of every $C\in X^{\geq}$ and of its
complement, with the common character labels needed for external data.

When exchange is present, choose $a\in\widehat G\setminus\widehat H$.
Although $\widehat H$ preserves $C$, the linear map $a$ carries $C$ to
$C^\perp$. Since $h(ax)=a(a^{-1}ha)x$, transporting an irreducible type
by $a$ twists its action by conjugation. Thus
\begin{equation}
\label{eq:dom:exchange-partners}
 \nu^a(h)=\nu(a^{-1}ha),\qquad
 a(C)=C^\perp,\qquad
 m_{\nu^a}=m_\nu,\qquad c_{\nu^a}=m_\nu-c_\nu.
\end{equation}
The operation $\nu\mapsto\nu^a$ is an involution on isomorphism classes,
independent of the choice in the exchange coset: $a^2\in\widehat H$
and inner conjugation does not change an irreducible type.
For distinct partner types, choosing the component of $C$ in one
isotypic summand determines its component in the partner summand:
take the orthogonal complement there of its $a$-image. On a self-paired
summand, the same relation constrains the chosen component itself.

For example, if $a^2=\sigma I$ with $\sigma\in\{1,-1\}$, define the
nondegenerate bilinear form $b_a$ on $\mathbb R^N$ by twisting the
Euclidean inner product with $a$:
\begin{equation}
\label{eq:dom:exchange-form}
 b_a(x,y)=\langle ax,y\rangle,\qquad
 b_a(y,x)=\sigma b_a(x,y),\qquad b_a|_{C\times C}=0.
\end{equation}
Orthogonality gives $a^T=a^{-1}=\sigma a$, explaining the symmetry
identity; the vanishing is equivalent to $a(C)=C^\perp$ at middle rank.
Thus $\sigma=1$ gives the orthogonal condition and $\sigma=-1$ the
Lagrangian condition. In a larger group the same calculation is made
on its paired multiplicity spaces, keeping the actual signed maps.
The full exchange action is therefore recorded on the graded pair
$C\oplus C^\perp$, not on $C$ alone.

The resulting closed dimension formulas for the principal families,
including all axis and rank parities, are collected in
Appendix~\ref{app:domain-dimensions}.
\subsection{Equivariant atomic operations and local stabilizers}
\label{subsec:domain:atoms}
Below we describe the atomic operations for our setting.
The operations are organized for a \emph{fixed group $G$ and exchange character},
while the cyclic label set, its action and the rank may change. In particular,
deleting a fixed vertex can create a fixed edge. Subsequent operations use
this induced action, not the original axis type; an ineffective action is
replaced by its effective image. All insertions and contractions below have
the ordered-minor signs of the new rank.

\paragraph{Lollipop insertions.}
Let $O$ be an orbit of new labels. A lollipop assignment
$\eta:O\to\mathbb Z/2\mathbb Z$ specifies loops by $0$ and coloops by $1$.
The compatibility and rank rules are
\begin{equation}
\label{eq:atoms:lollipop-rule}
 \eta(g_\gamma i)=\eta(i)+\epsilon(\gamma),\qquad
 O_a=\eta^{-1}(a),\qquad (N',k')=(N+|O|,k+|O_1|).
\end{equation}
Choosing the assignment at one representative gives $\pre^\rho_O$ or
$\inc^\rho_O$. Their signed matrix embeddings are characterized by
\begin{equation}
\label{eq:atoms:insertion-minors}
 \Delta_{I\cup O_1}(\mathsf{Ins}_{O,\eta}C)=\Delta_I(C),\qquad
 \Delta_J(\mathsf{Ins}_{O,\eta}C)=0\quad(O_1\not\subset J\text{ or }J\cap O_0\ne\varnothing).
\end{equation}
An assignment exists exactly when a label stabilizer contains no exchanging
element. Linear actions preserve its type, whereas exchange interchanges
loops and coloops. Existing lollipop orbits can always be deleted with the
induced signs; their rank feasibility is automatic. This is the equivariant
form of~\cite[Lemmas~7.8--7.9]{Lam2015}, including the paired insertions of
Karpman and Shevchenko~\cite{Karpman2018,Shevchenko2025}.

\paragraph{Small blocks at boundary stabilizers.}
At a boundary vertex or edge midpoint the effective local stabilizer is one of
\begin{equation}
\label{eq:atoms:local-stabilizers}
 1,\qquad\langle R\rangle,\qquad\langle\mathcal D\rangle,
 \qquad\langle R\mathcal D\rangle,\qquad\langle R,\mathcal D\rangle.
\end{equation}
Here $R$ is the local reflection. 
A \emph{normalized rigid graft} is a prescribed positive block,
with specified attaching legs and local signed action, whose gluing adds no
parameter. Its entire $G$-orbit is inserted simultaneously, using dual
connector normalizations on exchange-paired grafts. For a \emph{single copy},
before taking the orbit, grafting $B\in\Gr(b,M)$ at one existing leg or
inserting it as a detached summand gives, respectively,
\begin{equation}
\label{eq:atoms:graft-counts}
 (\Delta N,\Delta k)=(M-2,b-1),\qquad (\Delta N,\Delta k)=(M,b).
\end{equation}
For disjoint orbit copies, add these increments, using rank $M-b$ for an
exchange-dual copy. Thus the rank increment is multiplied by the orbit
size only when all copies have the same rank. An exchange-stabilized graft
has $2b=M$. We describe the required blocks by their local stabilizers;
all displayed local lifts are understood up to a common scalar sign and
the inherited cyclic signs.

\emph{Fixed edge: $\langle R\rangle$, $\langle\mathcal D\rangle$, or
$\langle R,\mathcal D\rangle$.}
On a positive two-label line $(1,a)$, reflection swaps the two labels,
and both $R$ and $\mathcal D$ send $a$ to $a^{-1}$. Either condition
therefore forces $a=1$, giving $K_{1,2}=\operatorname{rowspan}(1,1)$.
If only $R\mathcal D$ is imposed, every $a>0$ is fixed: this is the free
Karpman axis-bridge case~\cite[Section~4]{Karpman2018}, not a forced rigid
pair. The detached $K_{1,2}$ insertion is the edge-increment:
\begin{equation}
\label{eq:atoms:edge-increment}
 \einc(C)=\operatorname{rowspan}
 \begin{pmatrix}1&0_{1\times N}&(-1)^k\\0_{k\times1}&C&0_{k\times1}\end{pmatrix},
 \qquad(\Delta N,\Delta k)=(2,1).
\end{equation}
The seam is placed between the two new labels. For a purely linear
dihedral action, the type-preserving insertions on $p$ disjoint fixed
edges have
\begin{equation}
\label{eq:atoms:dihedral-increments}
 \begin{array}{c|ccc}&\pre^\rho&\einc^\rho&\inc^\rho\\\hline
 \Delta N&2p&2p&2p\\\Delta k&0&p&2p\end{array}.
\end{equation}
These are not the only allowed changes of action.
\emph{Linearly fixed vertex: $\langle R\rangle$.}
There are two dual grafts, both written in local cyclic order $(a,b,*)$,
where $*$ is the attaching leg.

\emph{Parallel splitting.}
A fixed vertex with a nonzero column $u$ (that is, not a loop) is
replaced by two neighboring columns $u,u$. This is gluing with
\begin{equation*}
 K_{1,3}=\operatorname{rowspan}(1,1,1).
\end{equation*}
The reflection exchanges $a,b$ and fixes the attaching coordinate $*$.
Before normalization, the reflection-fixed positive block is the line
$(1,1,c)$, $c>0$; fixing the relative scale of the attaching coordinate
sets $c=1$. The inverse operation fuses the parallel pair, and the
per-copy change is $(\Delta N,\Delta k)=(1,0)$.

\emph{Series splitting.}
The positive-dual operation replaces a fixed vertex that is not a
coloop by a series pair. Its gluing block is
\begin{equation*}
 K_{2,3}=\operatorname{rowspan}
          \begin{pmatrix}1&0&-1\\0&1&1\end{pmatrix},
\end{equation*}
the positive dual of $K_{1,3}$. 
Explicitly, place the attaching column last, using the signed cyclic
convention, and write $C=(D\mid u)$, with $\operatorname{rank}D=k$.
Series splitting is represented by
\begin{equation*}
 (D\mid u)\longmapsto
 \begin{pmatrix}
 D&u&0\\
 0_{1\times(N-1)}&1&1
 \end{pmatrix},
\end{equation*}
where the last two columns replace $u$.
Its inverse quotients out the supported line
$\mathbb R(0,\ldots,0,1,1)$ and replaces the two final coordinates
$x_a,x_b$ by $x_a-x_b$. 
We use the connector normalization dual to that of parallel splitting.
The per-copy change
is $(\Delta N,\Delta k)=(1,1)$.

Neither normalized graft introduces a parameter. An exchanging element
can transport a parallel graft to a series graft at a different vertex,
but cannot stabilize either three-leg block individually, since it
interchanges their ranks $1$ and $2$.

\emph{Vertex fixed by reflection and exchange separately:
$\langle R,\mathcal D\rangle$.}
Here the rigid block is $K_{2,4}$. In cyclic order $(a,b,c,*)$, one
representative and its signed reflection are
\begin{equation*}
 \begin{gathered}
 K_{2,4}=\operatorname{rowspan}
 \begin{pmatrix}1&0&-1&-\sqrt2\\1&\sqrt2&1&0\end{pmatrix},\\
 \widehat R e_a=e_c,\quad \widehat R e_c=e_a,\quad
 \widehat R e_b=e_b,\quad \widehat R e_*=-e_*.
 \end{gathered}
\end{equation*}
This plane is fixed by $\widehat R$ and is isotropic for
$A_4=\operatorname{diag}(1,-1,1,-1)$, hence is also fixed by
$\mathcal D$. Its graft replaces the old fixed vertex by $a,b,c$:
reflection fixes $b$ and exchanges $a,c$. The per-copy change is $(\Delta N,\Delta k)=(2,1)$. 
For an explicit gluing formula, put the old attaching column last and
write $C'=(D\mid u)$ of rank $q$. Set
\begin{equation}
\label{eq:atoms:vertex-graft}
 \mathsf V_*(C')=
 \operatorname{rowspan}\begin{pmatrix}
 0&(-1)^q&(-1)^q\sqrt2&(-1)^q\\
 D&u/\sqrt2&0&-u/\sqrt2
 \end{pmatrix}.
\end{equation}
The three new columns replace $u$. A maximal minor choosing one new column
is a lower minor not using $u$, multiplied by $1,\sqrt2,1$; one choosing two
is a lower minor using $u$, multiplied by $1,\sqrt2,1$ for the first two, first and last, or last two new columns, respectively. Minors choosing zero or three vanish. Hence positivity and the
positroid correspondence are exact. The supported triple line is
$\mathbb R(1,\sqrt2,1)$. The inverse sends the new coordinates to the
old attaching coordinate by $(x_a,x_b,x_c)\mapsto(x_a-x_c)/\sqrt2$,
quotienting out that supported line and leaving the other coordinates
unchanged. Reversal fixes the supported line and acts by $-1$ on the
quotient coordinate, as on $*$ in the displayed block. For the local
alternating form the supported line is isotropic and orthogonal to
$(1,0,-1)/\sqrt2$, whose squared norm is that of the old coordinate.
Thus both reflection and orthogonality pass to the quotient. This is the
signed one-particle gluing with $K_{2,4}$.

\paragraph{Continuous operations.}
A continuous operation acts by $C\mapsto CM(t)$, where
$M(t)=\exp(tX)\in\GL_N$ and $X$ is its infinitesimal generator.
Transport by a group element acts on these matrices by
\begin{equation}
\label{eq:atoms:bridge-action}
 \mathcal A_\gamma(M)=
 \begin{cases}T_\gamma MT_\gamma^{-1},&\epsilon=0,\\
 T_\gamma M^{-T}T_\gamma^{-1},&\epsilon=1,
 \end{cases}
 \qquad
 a_\gamma(X)=
 \begin{cases}T_\gamma XT_\gamma^{-1},&\epsilon=0,\\
 -T_\gamma X^TT_\gamma^{-1},&\epsilon=1.
 \end{cases}
\end{equation}
Equivariance means $\mathcal A_\gamma(M(t))=M(t)$ for every $\gamma$,
equivalently $a_\gamma(X)=X$. We retain the ordinary bridge notation
$x_i,y_i$ from
\eqref{eq:bg:ordinary-atoms}, now including the seam:
\begin{equation*}
 x_i(t)=I+t\sigma_i e_{i,i+1},\qquad
 y_i(t)=I+t\sigma_i e_{i+1,i},\qquad
 \sigma_i=\begin{cases}1,&i<N,\\(-1)^{k-1},&i=N.\end{cases}
\end{equation*}
Here $i+1$ is cyclic. We will sometimes call a signed adjacent generator a \emph{bridge} or an \emph{arrow}, drawing
the generators of $x_i$ and $y_i$ as $i\to i+1$ and
$i+1\to i$, respectively, with the seam signs understood.
An arrow crosses the cut $I\mid I^c$ if exactly one of its
endpoints belongs to $I$. If $\mathcal O$ is a full orbit of distinct signed
directed adjacent generators $\sigma_i e_{i,i+1}$ or
$\sigma_i e_{i+1,i}$, define
\begin{equation}
\label{eq:atoms:orbit-exponential}
 X_{\mathcal O}=\sum_{X\in\mathcal O}X,\qquad
 M_{\mathcal O}(t)=\exp(tX_{\mathcal O}),\qquad t\geq0.
\end{equation}
Each generator occurs once. When the generators commute, we call their product with the
common parameter $t$ an \emph{ordinary bridge packet}:
\begin{equation}
\label{eq:atoms:bridge-packet}
 \mathsf B^\rho_{\mathcal O}(t)=\prod_{X\in\mathcal O}(I+tX).
\end{equation}
For distinct arrows, the commutation criterion follows from
$[e_{ab},e_{cd}]=\delta_{bc}e_{ad}-\delta_{da}e_{cb}$:
disjoint arrows, arrows with a common tail, and arrows with a common head
commute; a directed two-step path and opposite arrows on one edge do not.
For example, $x_i$ and $y_{i-1}$ commute
when $N>2$, whereas $x_i$ and $y_i$ do not.
Unequal orbit weights or arbitrarily ordered noncommuting products do
not satisfy the definition of an orbit operation.

The two-column hyperbolic operations are
\begin{subequations}
\label{eq:atoms:hyperbolic-data}
\begin{align}
 X_i&=\sigma_i(e_{i,i+1}+e_{i+1,i}),\label{eq:atoms:hyperbolic-generator}\\
 \mathsf H_i(t)&=e^{tX_i}
 =\begin{pmatrix}\cosh t&\sigma_i\sinh t\\
 \sigma_i\sinh t&\cosh t\end{pmatrix}_{i,i+1}.
 \label{eq:atoms:orthogonal-layer}
\end{align}
\end{subequations}
For disjoint edge orbits their product is
\begin{equation}
\label{eq:atoms:hyperbolic-orbit}
 \mathsf H^\rho_{\mathcal O}(t)
 =\prod_{b_i\in\mathcal O}\mathsf H_i(t)
 =\exp\!\left(t\sum_{b_i\in\mathcal O}X_i\right).
\end{equation}
These generalize the orthogonal bridges~\cite[Section~4.2]{KimLee2014}.
At an edge fixed only by $R\mathcal D$, $x_i$ and
$y_i$ are individually
allowed. With $R$ or $\mathcal D$ separately imposed they are paired into
the hyperbolic operation.

At a vertex $b$ fixed by the local reflection, let $(a,b,c)$ be
consecutive in cyclic order, with the reflection exchanging $a,c$.
The local directed orbits give the following descriptions
(and their transposes):
\begin{equation}
\label{eq:atoms:vertex-generators}
 \begin{array}{c|c|l}
 \text{stabilizer}&\text{generator}&\text{operation}\\\hline
 \langle R\rangle&e_{ab}+e_{cb}&\text{commuting inward pair}\\
 \langle R\mathcal D\rangle&e_{ab}+e_{bc}&\text{unipotent palindrome}\\
 \langle R,\mathcal D\rangle&e_{ab}+e_{ba}+e_{bc}+e_{cb}&\text{three-vertex boost}.
 \end{array}
\end{equation}
In a non-seam chart, with $b=a+1$ and $c=a+2$, the last two are
\begin{subequations}
\label{eq:atoms:vertex-operations}
\begin{align}
 U(t)&=I+t(e_{ab}+e_{bc})+\tfrac12t^2e_{ac}
      =x_a(t/2)x_b(t)x_a(t/2),
 \label{eq:atoms:palindrome}\\
 V(\tau)&=I+\frac{\sinh\tau}{\sqrt2}X
                  +\frac{\cosh\tau-1}{2}X^2,\qquad
 X=e_{ab}+e_{ba}+e_{bc}+e_{cb},\quad X^3=2X.
 \label{eq:atoms:vertex-boost}
\end{align}
\end{subequations}
Both operations are local instances of
\eqref{eq:atoms:orbit-exponential}: with
$\mathcal O_{\mathrm{pal}}=\{e_{ab},e_{bc}\}$ and
$\mathcal O_{\mathrm{boost}}=\{e_{ab},e_{ba},e_{bc},e_{cb}\}$,
the respective local stabilizer orbits, we have
\[
 U(t)=M_{\mathcal O_{\mathrm{pal}}}(t),\qquad
 V(\tau)=M_{\mathcal O_{\mathrm{boost}}}(\tau/\sqrt2).
\]
For the global equivariant operation, take the full $G$-orbit of
the initial signed generator, including the inherited seam signs
and counting each generator once. 
Orbit translates are imposed together. For overlapping patches use the
full generator, not an arbitrarily ordered product of local blocks.

We call the orbit operations $M_{\mathcal O}(t)$ with nilpotent
generator $X_{\mathcal O}$ \emph{additive operations}, with parameter
$t>0$. The two-column and normalized three-vertex hyperbolic
operations, including their disjoint orbit products, are called
\emph{boosts}; for these we use the multiplicative parameter
$u=e^\tau>1$.
\begin{lemma}[Positive equivariant orbit operations]
\label{lem:atoms:hyperbolic}
The orbit exponentials~\eqref{eq:atoms:orbit-exponential} and hyperbolic
edge layers~\eqref{eq:atoms:hyperbolic-orbit} preserve $X^{\geq}$ and
$X^>$. The rigid grafts preserve nonnegativity and their stated induced
symmetries.
\end{lemma}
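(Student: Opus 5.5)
The proof splits into two independent parts: positivity of the matrices in the ordinary (non-equivariant) sense, and equivariance. Together they give the claim, since $X^{\geq}=X\cap\Gr_{\geq}(k,N)$ and $X^>=X\cap\Gr_>(k,N)$.

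\emph{Equivariance.} By \eqref{eq:atoms:bridge-action}, for $M=\exp(tX)$ one has $\mathcal A_\gamma(M)=\exp(t\,a_\gamma(X))$. Moreover $\gamma\cdot(CM)=(\gamma\cdot C)\,\mathcal A_\gamma(M)$ on source planes. In the exchange case this uses $(CM)^\perp=C^\perp M^{-T}$, with the convention that operators act on row representatives.

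For an orbit exponential, the generator $X_{\mathcal O}$ is the sum over a full $G$-orbit of signed generators, each counted once. Hence $a_\gamma$ permutes its summands. The one point to check is that the exchange transport $X\mapsto -T_\gamma X^TT_\gamma^{-1}$ sends a signed arrow $\sigma_ie_{i,i+1}$ to another signed arrow of the orbit with its correct seam sign. I would verify this directly from $A_N e_{ab}^T A_N=(-1)^{a+b}e_{ba}$, which gives $-1$ for adjacent $a,b$ and absorbs the minus sign. The seam factors $(-1)^{k-1}$ then match by \eqref{eq:sym:linear-relations}.

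The same argument applies to the hyperbolic layers \eqref{eq:atoms:hyperbolic-orbit}, since each $X_i$ is symmetric and the set of edges is an orbit. So $\mathcal A_\gamma(M)=M$, and $X$ is preserved.

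For the rigid grafts I would use the local computations already displayed. $K_{1,2}$, $K_{1,3}$ and $K_{2,3}$ are fixed by the indicated local lifts, and $K_{2,4}$ is reflection-fixed and $A_4$-isotropic. The minor formulas in \eqref{eq:atoms:insertion-minors}, in \eqref{eq:atoms:vertex-graft}, and in the parallel and series splitting formulas show that the glued plane's Plücker vector is obtained from that of $C$ by a positive monomial reindexing. This reindexing intertwines $\gamma\star$ on subsets before and after the graft. Combined with \eqref{eq:dom:Plucker-fixed}, this gives the induced symmetry. Performing the graft on the whole orbit simultaneously, with dual normalizations on exchange-paired copies, is what keeps the reindexing compatible with the exchange action.

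\emph{Positivity.} For packets with commuting generators, $M_{\mathcal O}(t)=\prod(I+tX)$ is a product of ordinary signed bridges $x_i(t)$ and $y_i(t)$. These preserve $\Gr_{\geq}$ by the standard bridge lemma, including seam signs. For the palindrome, \eqref{eq:atoms:palindrome} exhibits $U(t)$ as such a product.

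The main obstacle is the general noncommuting orbit exponential and the boosts, whose entries are not all nonnegative. There I would use a Trotter/Lie product argument:
\[
\exp\Bigl(t\sum X\Bigr)=\lim_{n\to\infty}\Bigl(\prod_{X\in\mathcal O}(I+\tfrac tnX)\Bigr)^n.
\]
Each factor preserves $\Gr_{\geq}$, and $\Gr_{\geq}$ is closed, so the limit preserves $X^{\geq}$. Note that the limit is a nonnegativity statement and does not require the individual factors to be equivariant. The hyperbolic layers are orbit exponentials of this form, since $X_i$ is a sum of two opposite arrows, and the boost is $M_{\mathcal O_{\mathrm{boost}}}$. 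So both are covered.

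For $X^>$, I would use that $M$ is invertible and continuous in $t$. It therefore maps the open set $\Gr_>(k,N)$ into an open set. To show the image stays inside $\Gr_>$, I would argue as follows. A minor that vanishes at $CM(t)$ would, by total nonnegativity of the Trotter factors and Cauchy--Binet, force a vanishing term $\Delta_I(C)\Delta_{I\to J}(M)$ with $I=J$. But the diagonal minors $\Delta_{J,J}(M)$ are positive: the Trotter factors are unipotent, so their diagonal minors are at least $1$ in the limit. This would give $\Delta_J(CM)\geq\Delta_J(C)\cdot c>0$. If the diagonal-minor bound proves delicate for the boosts, the fallback is the same argument applied to the commuting factorization $\mathsf H_i(t)=x_i(\tanh t)\,\mathrm{diag}(\cosh t,1/\cosh t)\,y_i(\tanh t)$ up to the seam sign, with a positive diagonal factor, which preserves $\Gr_>$ directly.
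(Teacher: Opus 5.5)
Your proof is correct. The equivariance half is the paper's argument: the full orbit sum is fixed by $a_\gamma$, so its exponential is fixed by $\mathcal A_\gamma$. You add a worthwhile check that exchange transport sends correctly signed arrows to correctly signed arrows, including at the seam.

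\textbf{Positivity: a different route.} You reduce the noncommuting orbit exponentials and the boosts to ordinary bridges via the Lie--Trotter product formula, use closedness of $\Gr_{\geq}$, and then treat $X^>$ separately by Cauchy--Binet and a diagonal bound. The paper works infinitesimally on $\bigwedge^k$ instead. It writes $p(CM_{\mathcal O}(t))=e^{t\mathcal L_{\mathcal O}}p(C)$ and observes that $\mathcal L_{\mathcal O}$ is entrywise nonnegative: it consists of adjacent basis exchanges, with the seam sign cancelling the ordering sign. Hence $e^{t\mathcal L_{\mathcal O}}\geq I$ entrywise, which gives preservation of $\Gr_{\geq}$ and of $\Gr_{>}$ in one stroke, with no limit.

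\textbf{One imprecision.} Both arguments rest on the same fact, and your strict-positivity step actually needs it in the paper's form. What you use is not total nonnegativity of the Trotter factors: a seam bridge $I+s(-1)^{k-1}e_{N,1}$ is not a totally nonnegative matrix. What you use is that its $k$-th compound, in the ordered Pl\"ucker basis, is entrywise nonnegative with unit diagonal. With that correction the argument is complete, and your fallback factorization for boosts becomes unnecessary. (As written, that fallback also has the diagonal reversed: locally $\mathsf H_i(t)=x_i(\tanh t)\operatorname{diag}(1/\cosh t,\cosh t)\,y_i(\tanh t)$.)

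\textbf{What each route buys.} Yours reuses the standard bridge lemma as a black box for nonnegativity. The paper's gives a uniform one-line proof covering packets, palindromes, boosts and overlapping orbits alike.

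\textbf{Grafts.} Your Pl\"ucker-reindexing argument, combined with \eqref{eq:dom:Plucker-fixed}, is a valid alternative to the paper's defining-line and isotropic-quotient argument. You should say explicitly that the monomial weights $1,\sqrt2,1$ in \eqref{eq:atoms:vertex-graft} are invariant under the local reflection and compatible with complementation, which they are. That is what makes the reindexing intertwine the $\gamma\star$ actions.
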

\begin{proof}
A full orbit sum, or a union of such orbits, is fixed by
\eqref{eq:atoms:bridge-action}, so its exponential is equivariant.
Let $\mathcal L_{\mathcal O}$ be the induced infinitesimal action on the
$k$th exterior power: for row vectors $v_1,\ldots,v_k$,
\begin{equation*}
 \mathcal L_{\mathcal O}(v_1\wedge\cdots\wedge v_k)
 =\sum_{j=1}^k v_1\wedge\cdots\wedge
       (v_jX_{\mathcal O})\wedge\cdots\wedge v_k.
\end{equation*}
Thus, for the column $p(C)=(\Delta_I(C))_I$ of ordered Pl\"ucker
coordinates, $p(CM_{\mathcal O}(t))=e^{t\mathcal L_{\mathcal O}}p(C)$.
In this basis $\mathcal L_{\mathcal O}$ has nonnegative entries: its
only nonzero entries are the adjacent basis exchanges, and the seam
sign cancels the ordering sign. Consequently
$e^{t\mathcal L_{\mathcal O}}$ is entrywise nonnegative and has positive
diagonal. This proves both positivity assertions. 
This argument applies uniformly to the ordinary bridge packets,
hyperbolic edge layers, vertex palindromes and three-vertex boosts,
including overlapping orbit configurations. 

For the grafts, the minor formulas above prove positivity; their defining
line or isotropic quotient proves equivariance. Transport these normalized
maps along the orbit. Exchanging elements transport a parallel graft to
its series dual, and the middle-rank graft to itself.
\end{proof}
As in the bare Grassmannian case, positive operation need not add a coordinate on every stratum. 
\subsection{Symmetric positroid cells}
\label{subsec:domain:cells}

Let $\Pi_f^>$ be the ordinary nonnegative positroid cell indexed by the
bounded affine permutation $f$, and set
\begin{equation}
\label{eq:cells:fixed-stratum}
 \Sigma_f^{G,\rho}=\Pi_f^>\cap\SymGrNonneg{G}{\rho}{k,N}.
\end{equation}
We suppress $G,\rho$ when the current action is clear. The following theorem supplies
positive coordinates and a symmetric atomic decomposition. 

\begin{theorem}[Equivariant cell reduction]
\label{thm:dom:cells}
For every symmetry datum as above, each nonempty
$\Sigma_f^{G,\rho}$ is homeomorphic to $(\mathbb R_{>0})^{d_f}$ for some
$d_f\geq0$. It has an inductive parametrization by the operations of
Section~\ref{subsec:domain:atoms}, starting from coordinate planes or cyclically symmetric positive
points. The reductions may change the action on the cyclic label set.
Each continuous reducing step contributes one parameter; normalized rigid
grafts and lollipop insertions contribute none.
\end{theorem}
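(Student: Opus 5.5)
We prove Theorem~\ref{thm:dom:cells} by induction on $N$, with the fixed group $G$ and exchange character $\epsilon$ held fixed, and the label set, its induced action and the rank allowed to vary as in Section~\ref{subsec:domain:atoms}. The model is the Marsh--Rietsch/Lam bridge-reduction argument for ordinary positroid cells~\cite{Lam2015}, carried out equivariantly as in Fraser's cyclic case~\cite[Section~5.3]{Fraser2020} and Shevchenko's rotated case. Fix a nonempty $\Sigma_f=\Sigma_f^{G,\rho}$. By \eqref{eq:dom:Plucker-fixed} and \eqref{eq:dom:permutation-action}, the decorated permutation $f$ is then $G$-invariant in the twisted sense.

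\emph{Case 1: $f$ has fixed points.} Then the set of loops and coloops is a union of $G$-orbits, and exchanging elements swap loops with coloops. If every label stabilizer of such an orbit avoids exchanging elements, we delete the orbit by inverting $\pre^\rho_O$ or $\inc^\rho_O$ as in \eqref{eq:atoms:insertion-minors}. This is a homeomorphism $\Sigma_f\to\Sigma_{f'}$ that adds no parameters. If instead a stabilizer contains an exchange, a fixed label cannot be a lollipop, so this subcase cannot occur.

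\emph{Case 2: $f$ has no fixed points, and some ordinary reduction is available.} Suppose some adjacent pair $i,i+1$ admits an ordinary bridge reduction, i.e.\ $f(i)<f(i+1)$ in the affine sense, possibly across intervening lollipops. We then transport this pair over its $G$-orbit $\mathcal O$, and the local stabilizer of the pair dictates the operation:
\begin{itemize}
\item with trivial stabilizer, or stabilizer $\langle R\mathcal D\rangle$ at an edge, we use an ordinary packet \eqref{eq:atoms:bridge-packet};
\item at an edge fixed by $R$ or $\mathcal D$, we use the hyperbolic layer $\mathsf H^\rho_{\mathcal O}$;
\item at a reflection-fixed vertex, we use the inward pair, the palindrome $U$, or the boost $V$ of \eqref{eq:atoms:vertex-generators}.
\end{itemize}
For each of these we prove a \emph{reduction lemma}. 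There is a unique $t>0$ (respectively $u>1$) and a fixed point $C'\in\Sigma_{f'}$, where $f'$ is the orbit-reduced permutation, such that $C=C'M_{\mathcal O}(t)$. Moreover the map $\Sigma_{f'}\times\mathbb R_{>0}\to\Sigma_f$ is a homeomorphism.

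For ordinary packets with commuting generators, this lemma follows by applying Lam's single-bridge lemma to each generator in turn. Uniqueness of the parameter and of $C'$ then forces equal parameters across the orbit, because $\mathcal A_\gamma$ permutes the generators and the decomposition is unique. For the hyperbolic and vertex operations, we first write $e^{tX}$ as a product of ordinary bridges: \eqref{eq:atoms:palindrome} does this explicitly for $U$, and for $\mathsf H_i$ and $V$ we use a Gauss-type factorization. We then solve for the parameter from one extremal Plücker ratio, which is a monotone function of $t$ (or of $u$). Injectivity follows from this monotonicity, and equivariance of $C'$ follows from uniqueness.

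\emph{Case 3: $f$ has no fixed points and no admissible equivariant continuous reduction.} This happens only when the remaining freedom near a fixed vertex or edge is rigid. In that case we show that $C$ contains one of the normalized blocks $K_{1,2}$, $K_{1,3}$, $K_{2,3}$, $K_{2,4}$ at a local stabilizer. We then invert the corresponding graft ($\einc$, parallel/series splitting, or $\mathsf V_*$), using the exact minor formulas preceding \eqref{eq:atoms:vertex-graft}. This reduces $N$ without changing the number of parameters; the induced action may change type, e.g.\ a fixed vertex becomes a fixed edge. If no such block occurs either, we argue that $f$ is the cyclically symmetric top permutation $f(i)=i+k$ on a single orbit. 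Then Karp's uniqueness~\cite{Karp2019} identifies $\Sigma_f$ with the point $K_{q,M}$, which is a seed.

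Counting parameters (one per continuous step, none per graft or insertion) gives $\Sigma_f\simeq\mathbb R_{>0}^{d_f}$. The main obstacle is the reduction lemma for the non-unipotent boosts $\mathsf H^\rho_{\mathcal O}$ and $V$, since the parameter must be extracted from a nonlinear equation. My first attempt would be to show directly, via the nonnegative-entry argument for $\mathcal L_{\mathcal O}$ from Lemma~\ref{lem:atoms:hyperbolic}, that a suitable Plücker ratio is strictly increasing in $\tau$ along $\Sigma_{f'}$. A secondary difficulty is proving exhaustiveness in Case 3, namely that whenever all equivariant continuous reductions fail, a rigid block must exist; this will likely require a case check over the five local stabilizers in \eqref{eq:atoms:local-stabilizers}.
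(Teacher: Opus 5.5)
\textbf{Main gap: the boost reduction lemma is false.} Your Case 2 relies on a reduction lemma that fails for the non-unipotent orbit operations, and this is where the substance of the theorem lies.

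Take the gap reflection $J$ on $N=4$, $k=2$, and $C=\operatorname{rowspan}\{(1,0,0,-1),(0,1,1,0)\}$.
\begin{itemize}
\item Its ordinary cell $\{(1,0,0,-a),(0,1,b,0)\}$ admits Lam's removable bridge $x_2$, so your criterion $f(i)<f(i+1)$ sends it to Case 2.
\item The orbit of $e_{23}$ is $\{e_{23},e_{32}\}$, so you use $\mathsf H_2$. But $(0,1,1,0)\mathsf H_2(\tau)=e^\tau(0,1,1,0)$, hence $C\mathsf H_2(\tau)=C$ for every $\tau$.
\item So there is no $C'$ in a smaller cell and no monotone Pl\"ucker ratio. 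Here $\Sigma_f$ is a point, and the correct step is the inverse of $\einc$.
\end{itemize}
The same failure occurs for full-cycle orbits at the Karp point. Thus the combinatorial test does not separate your Cases 2 and 3. The exhaustiveness of Case 3, which you defer, is the actual crux, and so is your unsupported claim that otherwise $f$ is the top cyclic permutation. The paper gets the latter from the orbit classification, Lemma~\ref{lem:cells:orbit-shapes}.

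The paper resolves this in three steps:
\begin{itemize}
\item It does not solve for a parameter. Instead it defines the simultaneous first-exit time $t_*(C)=\sup\{t:CM(-t)\in\Gr_{\geq}(k,N)\}$.
\item For acyclic orbits it proves $t_*<\infty$ via nilpotence of $\mathcal L_{\mathcal O}$ and the bounds $w\geq w(t)\geq0$.
\item For boosts it proves Lemma~\ref{lem:cells:backward-path}. If $CM(-t)\geq0$ for all $t$, every nonzero minor vector $w_{T,d}$ must lie on the strictly positive top eigenline of $X_J^{[d]}$. This forces $\dim\pi_JC-\dim(C\cap\mathbb R^J)\leq1$, with supported and projected planes spanned by top eigenvectors.
\end{itemize}
That yields exactly the rigid list: lollipops, isolated $K_{1,2}$, parallel and series pairs, and the $(1,\sqrt2,1)$ triple. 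The paper also shows this alternative is uniform over all of $\Sigma_f$, a point your proposal does not address.

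\textbf{Second gap: the ordinary-packet lemma.} Applying Lam's lemma to each generator in turn presupposes two things: after one removal the other orbit members stay removable, and the endpoint cell is $G$-stable. Neither holds in general.
\begin{itemize}
\item A linear reflection carries a rightward arrow, which acts on $f$ from the right, to a leftward arrow, which acts from the left.
\item When $f(i)=j+1$ and $f(i+1)=j$, one has $fs_i=s_jf$. After removing $x_i$, its partner $y_j$ is no longer removable, and the packet lowers the ordinary dimension by only one.
\item So there is no multi-parameter decomposition whose uniqueness could force the parameters to be equal.
\item Non-commuting acyclic orbits (palindromes, $\rightarrow\,\rightarrow\,\leftarrow\,\leftarrow$) are not handled by sequential removal either.
\end{itemize}
The paper replaces this with Lemma~\ref{lem:cells:two-predecessors}. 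Any $G$-stable predecessor lies in $W_BfW_A$ and must be the unique ``decreasing'' choice, so every first-exit point of $\Sigma_f$ lands in one $\Sigma_{f_\downarrow}$. Together with continuity of $t_*$, this gives the product chart $\Sigma_{f_\downarrow}\times\mathbb R_{>0}\simeq\Sigma_f$.

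Finally, induction on $N$ alone is not well founded, because continuous steps keep $N$ fixed. You need lexicographic induction on $(N,\dim\Pi_f)$.
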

\paragraph{Idea of the proof.}
The proof is an equivariant version of ordinary bridge reduction \cite[Lemmas~7.8--7.9, Proposition~7.10 and Theorem~7.12]{Lam2015}. The different types of atomic operations and grafts arise naturally from this reduction process.
Starting from a nonempty symmetric positroid cell, we first remove any
loop--coloop orbit. Otherwise Lam's bridge theorem supplies an ordinary
removable adjacent bridge; we replace it by its full symmetry orbit.
The possible directed orbit shapes are very restricted: apart from the
terminal cyclic cases, they are built from single edges, two-edge paths,
opposite-arrow pairs or three-vertex bidirectional patches. The key combinatorial point is that there is at most one strictly smaller symmetric positroid to which applying the symmetric operation yields the original symmetric positroid. Hence, whenever
backward motion reaches the boundary in finite time, every point of the
cell exits into the same smaller symmetric positroid cell. The exit time
is then a positive coordinate and gives a product decomposition
\[
   \Sigma_f\simeq\Sigma_{f_\downarrow}\times\mathbb R_{>0}.
\]
Thus one symmetric orbit operation removes one continuous cell
coordinate, even though it may consist of several ordinary bridges. 

The only extra issue is that for the hyperbolic two- and three-label
operations the backward flow can remain nonnegative for all time.
Positivity then forces the affected labels into one of a short list of
rigid local configurations: loops or coloops, a normalized parallel or
series pair, an isolated $K_{1,2}$ block, or the rigid three-label
self-dual flag. These patches can be contracted by the rigid graft
operations, reducing the number of labels rather than the cell
dimension. The same alternative holds uniformly on the whole positroid
stratum. Inducting on the pair
\[
   (\text{number of labels},\ \text{ordinary positroid dimension})
\]
therefore repeatedly either removes one positive parameter or contracts
a rigid patch, until only a zero-dimensional coordinate plane or cyclic
Karp point remains. Reversing the reductions gives the claimed atomic
parametrization and identifies every nonempty symmetric positroid
stratum with $(\mathbb R_{>0})^{d_f}$.
\subsubsection{Proof of Theorem~\ref{thm:dom:cells}.}
\paragraph{The possible orbit shapes.}
Draw the generator of $x_i$ as $i\to i+1$ and that
of $y_i$ as its reverse. Forget coefficients but keep both
arrows when both occur. Path length means the number of edges, not the
number of vertices; a \emph{directed} path must follow the arrows.
The following elementary classification also isolates the terminal cases. For an arrow orbit $\mathcal O$, its \emph{rightward} and \emph{leftward index sets} are $A=\{i\in[N]:(i\to i+1)\in\mathcal O\}$ and $B=\{i\in[N]:(i+1\to i)\in\mathcal O\}$, respectively, with indices read modulo $N$.

\begin{lemma}[Directed orbit classification]
\label{lem:cells:orbit-shapes}
For $N>2$, a $G$ orbit of a single arrow has one of the following forms:
\begin{enumerate}
\item an acyclic directed graph, with every directed path having at most
      two edges;
\item a disjoint union of opposite-arrow pairs $a\rightleftarrows b$
      (directed cycles of length two);
\item a disjoint union of bidirectional paths on three vertices, whose
local stabilizer contains $R$ and $\mathcal D$ separately;
\item the whole coherently directed cycle, or the whole bidirectional cycle.
\end{enumerate}
In the last case the nonnegative fixed locus is a single cyclically symmetric positive point.
In the other cases neither all rightward bridges nor all leftward bridges
occur: each orientation's index set is a proper subset of the cyclic
set $\{1,\ldots,N\}$.
\end{lemma}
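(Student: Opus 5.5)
We classify $G$-orbits of a single arrow $i\to i+1$ through the action on arrows. Rotations translate an arrow and keep its orientation. A reflection $s_c$ sends the arrow $i\to i+1$ to $c-i\to c-i-1$, so it reverses the orientation relative to the cyclic order. Exchange acts on generators by $X\mapsto -T_\gamma X^T T_\gamma^{-1}$ as in~\eqref{eq:atoms:bridge-action}, so it transposes, turning $i\to i+1$ into $i+1\to i$. Forgetting coefficients and seam signs, the orbit of an arrow $a=(i\to i+1)$ is the image of $a$ under the label group $L$, acting on directed boundary edges, where elements with $o(g_\gamma)+\epsilon(\gamma)$ odd reverse orientation. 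The plan is to describe the orbit through the stabilizer in $G$ of the underlying undirected edge $b_i$ and of the two endpoints, and then to see how directed paths can form.

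\textbf{Step 1 (rotational part).} Let $C_q\subseteq L$ be the rotational label subgroup, $q\mid N$, acting by shifts of $\ell'=N/q$. The orbit of $a$ under $C_q$ together with orientation-preserving elements is a set of translates $\{i+m\ell'\}$, all with the same direction. These arrows form a directed path of length $\geq2$ only if $\ell'=1$, in which case they form the whole coherently directed cycle. If $\ell'=1$ and an orientation-reversing element is also present, we get the whole bidirectional cycle; this is case 4. In case 4, $G$ contains the full rotation $\tau$ with its signed lift $S_k$, possibly composed with an exchange. If the rotation acts linearly, Karp's theorem~\cite[Theorem~1.1]{Karp2019} gives that the fixed nonnegative locus is $\{K\}$. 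If only $(\tau,1)$ occurs, we apply Karp to $(\tau,1)^2=(\tau^2,0)$ when $N$ is odd (impossible with exchange, since $N=2k$), and otherwise argue via the flow of Theorem~\ref{thm:dom:ball}. This exchange-twisted full rotation is the delicate sub-case. There I would check that $\Hom(K,D)^{\widehat G}=0$ by the character formula~\eqref{eq:dom:character-dimension}; then $d_\rho=0$ and, by~\eqref{eq:dom:ball-pair}, the locus is the single point $K$.

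\textbf{Step 2 ($\ell'\geq2$).} Now same-direction translates are pairwise vertex-disjoint or share at most one vertex; they share one exactly when $\ell'=1$, which is excluded. Hence any directed path of length $\geq2$ must use an orientation-reversed arrow, produced by a reflection or an exchange. Such an arrow lies on an edge $b_j$ with $j\equiv\pm i\pmod{\ell'}$ depending on the element. A reversed image is adjacent to $a$ in one of three ways.
\begin{itemize}
\item It lies on the same edge $b_i$. The stabilizer of $b_i$ then contains a reflection swapping $i,i+1$, or an exchange. This gives an opposite pair $i\rightleftarrows i+1$, which is case 2, unless further adjacency creates a longer path.
\item It shares the vertex $i+1$. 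This happens exactly when $i+1$ is fixed by some reflection $R$ (a fixed vertex in $V_\rho$). With $R$ linear, the image of $i\to i+1$ is $i+2\to i+1$, a common head; such arrows commute, and the orbit is acyclic with paths of length $1$. With $R\mathcal D$, the image is $i+1\to i+2$, giving the palindrome $i\to i+1\to i+2$ of length two. With $R$ and $\mathcal D$ separately, we get all four arrows on $\{i,i+1,i+2\}$, which is case 3. This matches table~\eqref{eq:atoms:vertex-generators}.
\item It shares the vertex $i$; this is symmetric to the previous case.
\end{itemize}
Since $\ell'\geq2$, distinct such local patches are separated, so paths cannot chain further. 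A path of length three would require two consecutive fixed vertices $i+1,i+2$ of reflections, or a fixed vertex adjacent to a fixed edge. In $D_{p,c}$ this forces $\ell'=1$ via~\eqref{eq:sym:fixed-congruences}, since the difference $1$ of axes generates the rotation $r$ with shift $\ell$ dividing $2$. The cases $\ell'=2$, $N$ small, need a separate check, and they again reduce to the whole cycle. Directed cycles other than $2$-cycles and the full cycle are impossible for the same reason.

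\textbf{Step 3 (properness).} In cases 1--3 the rightward index set $A$ is a union of residues $i+m\ell'$ together with, at most, reflected residues $c-i-1+m\ell'$ and so on. If $A=[N]$, then all rightward arrows occur, and in particular $i+1\to i+2$; by Step 2 this yields a length-$2$ directed path to be completed cyclically, hence $\ell'=1$, which is case 4. The same argument applies to $B$.

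The main obstacle is Step 2's exclusion of longer chains near small $\ell'$ (that is, $\ell'\le 2$ or $N\le 4$), together with the exchange-twisted full-rotation point in case 4. I would handle the former by direct enumeration of $D_{p,c}\times E$ subgroups with $\ell\in\{1,2\}$.
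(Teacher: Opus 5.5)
\textbf{There is a genuine gap in your small-$\ell$ analysis.} It comes from not applying your own orientation rule to rotations. You correctly note that elements with $o(g_\gamma)+\epsilon(\gamma)$ odd reverse arrows, so an exchange-twisted rotation $(r,1)$ reverses them. Step~1 then nevertheless treats the whole rotational label subgroup as acting by same-direction translates, and concludes that a one-step rotation in $L$ forces case~4. This is false.

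For $G=\langle(\tau,1)\rangle$ at $N=2k$, the orbit of $1\to2$ is $1\to2\leftarrow3\to4\leftarrow\cdots$. This is the alternating source--sink pattern, which is acyclic and belongs to case~1. Moreover, the statement you intend to verify in your ``delicate sub-case'', $\Hom(K,D)^{\widehat G}=0$, fails there. This is the symmetric Shevchenko family with $\ell=1$, whose dimension $S_{N,1}$ in \eqref{eq:dom:cyclic-exchange-dimensions} is positive. For instance, at $N=4$ the fixed equations $\Delta_{12}=\Delta_{23}=\Delta_{34}=\Delta_{14}$ leave the one-parameter positive family $\Delta_{13}\Delta_{24}=2\Delta_{12}^2$. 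The same failure occurs for a mixed group at $\ell=1$ whose vertex reflections are linear.

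Likewise, at $\ell=2$ a mixed group of vertex type (containing $(\tau^2,1)$ and vertex reflections) gives the full-cycle pattern $\rightarrow\rightarrow\leftarrow\leftarrow$. It is acyclic with directed paths of length two, so it is case~1, not ``the whole cycle'' of case~4 as you predict. Its palindromes also overlap at their endpoints, so your claim that local patches are separated once $\ell'\geq2$ fails here.

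Conversely, case~4 can occur with no lift of $\tau$ in $G$ at all. For $G=\langle(\tau^2,0),(s_c,1)\rangle$ with $c$ even, every label is fixed by an exchange-twisted reflection, so the palindromes chain into the coherent cycle. Neither Karp's theorem nor your $(\tau,1)$ computation covers this. The point statement needs the separate fact that this reflection-twisted vertex model with $\ell=2$ has $d_\rho=\tfrac12F+\tfrac12 n\eta_\rho=\tfrac n2-\tfrac n2=0$ in \eqref{eq:dom:family-dimensions}, after which Theorem~\ref{thm:dom:ball} gives a single point.

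What is missing is the paper's explicit bookkeeping on edge--direction pairs:
$r:(i,d)\mapsto(i+\ell,(-1)^{\alpha}d)$ and $s_c:(i,d)\mapsto(c-1-i,(-1)^{1+\beta}d)$.
\begin{itemize}
\item For $\ell\geq3$, the orbit uses only the edge residues $i$ and $c-1-i$ modulo $\ell$, so its components have at most two edges. Your local vertex/edge analysis is then correct.
\item For $\ell\in\{1,2\}$, you must enumerate the four orientation patterns on the full cycle: alternating, $\rightarrow\rightarrow\leftarrow\leftarrow$, coherent, and bidirectional. Discard the first two as acyclic. Then show the last two force either a linear one-step rotation (Karp), or, at $N=2k$, a linear two-step rotation together with an exchange-twisted vertex reflection (dimension zero).
\end{itemize}
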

\begin{proof}
For a graph subgroup, encode the direction by $d\in\{1,-1\}$. The two
label generators act on edge--direction pairs by
\begin{equation}
\label{eq:cells:arrow-arithmetic}
 r:(i,d)\mapsto(i+\ell,(-1)^\alpha d),\qquad
 s_c:(i,d)\mapsto(c-1-i,(-1)^{1+\beta}d),
\end{equation}
where $\alpha=\epsilon(r)$, $\beta=\epsilon(s_c)$. Pure exchange reverses
$d$; we omit the reflection in a cyclic group. For $\ell\geq3$ the underlying
edge set uses at most two residues modulo $\ell$, so its connected
components are single edges or two-edge paths. If both directions occur on
one edge they occur on its entire orbit. A bidirectional two-edge path
requires a vertex reflection together with pure exchange: a linear
reflection alone gives inward/outward arrows, and reflection with exchange
gives a directed two-step path. Thus the two-edge directed paths in the
acyclic case are the vertex palindromes of
\eqref{eq:atoms:palindrome}. They occur at exchange-twisted fixed vertices,
including those in reflection-twisted and mixed groups, rather than at
the fixed edges of the original Karpman model.

For $\ell=1,2$ the only additional underlying graph is the full cycle.
Formula~\eqref{eq:cells:arrow-arithmetic} gives the alternating source--sink
pattern, the repeated $\rightarrow\,\rightarrow\,\leftarrow\,\leftarrow$
pattern, a coherent orientation, or both
directions. The first two are acyclic. In either of the last two cases the
group contains a one-step linear rotation, or, at $N=2k$, a linear two-step
rotation and a vertex reflection with exchange. The former fixed locus is
the cyclically symmetric positive point~\cite[Theorem~1.1]{Karp2019}. The latter is the
reflection-twisted vertex model with quotient size two; its dimension in
\eqref{eq:dom:family-dimensions} is zero, so Theorem~\ref{thm:dom:ball}
again gives that single point. Additional symmetries do not change it.
Finally, a full rightward or leftward index set contains the coherent cycle,
so this occurs only in the terminal case.
\end{proof}

\paragraph{A uniform predecessor, including collisions.}
Let $A$ and $B$ be the rightward and leftward index sets of a nonterminal
orbit, and let $s_i$ exchange the adjacent positions $i,i+1$, with periodic
continuation across the seam. The subgroups
$W_A=\langle s_i:i\in A\rangle$ and $W_B=\langle s_j:j\in B\rangle$
permute the consecutive blocks joined by these adjacent exchanges.
Because both index sets are proper, these are finite groups (finite
parabolic subgroups of the affine symmetric group). A single adjacent exchange gives an $S_2$ factor. On a consecutive
triple, a vertex palindrome gives an $S_3$ factor on one side
(right or left, according to its orientation), while the
three-vertex boost gives an $S_3$ factor on each side.
A positive elementary bridge acts on the positroid index by the
corresponding adjacent transposition when this enlarges the positroid,
and leaves the index unchanged otherwise; these conditional
right and left actions are the elementary $0$-Hecke operations~\cite[Lemma~2.12]{Shevchenko2025}.

\begin{lemma}[Uniqueness of an invariant predecessor]
\label{lem:cells:two-predecessors}
Let $\mathcal O$ be the $G$-orbit of one arrow joining cyclically
consecutive labels, excluding the two full-cycle cases of
Lemma~\ref{lem:cells:orbit-shapes}.
Let $\Pi_f^>$ be a $G$-stable ordinary positroid cell, indexed by
the bounded affine permutation $f$, and suppose that
\[
 C M_{\mathcal O}(t)\in\Pi_f^>
 \qquad\text{for every }C\in\Pi_f^>\text{ and }t>0.
\]
Then there is at most one other $G$-stable ordinary positroid cell
$\Pi_g^>$ such that
\[
 C_0 M_{\mathcal O}(t)\in\Pi_f^>
 \qquad\text{for every }C_0\in\Pi_g^>\text{ and }t>0.
\]
If such a cell exists, denote its index by $f_\downarrow^{\mathcal O}$.
Its set of nonzero Pl\"ucker coordinates is a proper subset of
that of $\Pi_f^>$.
\end{lemma}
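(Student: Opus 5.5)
The plan is to translate the statement into the $0$-Hecke language set up just before the lemma, and then reduce uniqueness to a uniqueness statement in a finite parabolic subgroup. By Lemma~\ref{lem:cells:orbit-shapes}, outside the full-cycle cases the rightward and leftward index sets $A,B$ are proper. So $W_A$ and $W_B$ are finite parabolic subgroups of the affine symmetric group, and each is a product of $S_2$ and $S_3$ factors on disjoint consecutive blocks.

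\emph{Step 1: the positroid of $C_0M_{\mathcal O}(t)$.} Applying $M_{\mathcal O}(t)=\exp(tX_{\mathcal O})$ to $C_0\in\Pi_g^>$, I will show that the resulting cell is $\pi_A\star g\star\pi_B$. Here $\pi_A,\pi_B$ are the longest-element $0$-Hecke operators of $W_A$ (acting on one side) and $W_B$ (acting on the other side), and $\star$ is the Demazure product. The tool is the Plücker calculation from Lemma~\ref{lem:atoms:hyperbolic}: $p(CM_{\mathcal O}(t))=e^{t\mathcal L_{\mathcal O}}p(C)$, and $e^{t\mathcal L_{\mathcal O}}$ is entrywise nonnegative with positive diagonal. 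Hence the nonzero Plücker coordinates of the image form the closure of $\mathcal B(g)$ under all adjacent basis exchanges generated by the arrows of $\mathcal O$. Iterating these single exchanges is exactly the Demazure closure by the longest elements of $W_A$ and $W_B$, using \cite[Lemma~2.12]{Shevchenko2025} for each elementary bridge. Because the blocks are disjoint and proper, the iteration stabilizes: one application of $t>0$ already realizes the full closure, and the result does not depend on $t$ or on the point $C_0$. This step gives $\mathcal B(g)\subseteq\mathcal B(f)$. Equality would force $g=f$, so any $g\neq f$ satisfies the strict inclusion claimed in the last sentence of the lemma.

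\emph{Step 2: uniqueness of $g$.} The hypotheses say $f=\pi_A\star f\star\pi_B$, that is, $f$ is maximal in its double coset $W_AfW_B$ for the Bruhat order. We also need $\pi_A\star g\star\pi_B=f$ with $g$ $G$-stable and $g\ne f$, so $g$ lies in the same double coset (a Demazure product stays in the coset). Double cosets of finite parabolic subgroups have unique minimal and maximal elements. I would first try to show that the only candidates are $f$ itself and the \emph{minimal} element $f_{\min}$ of the coset; that candidate is then $f_\downarrow^{\mathcal O}$.

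Why should the intermediate elements of the coset be excluded? For them I will use $G$-stability together with the fact that $\mathcal O$ is a single $G$-orbit. The group $G$ acts on the index by \eqref{eq:dom:permutation-action}, and this action permutes the blocks of $W_A\times W_B$ transitively within each orbit type. A $G$-stable $g$ must therefore be "equally reduced" on every block, so its coset position is determined block-by-block by one common local datum. For $S_2$ blocks the local datum has two values, so it is either $f$ or $f_{\min}$. The $S_3$ blocks need extra care. For a palindrome orbit, stability under the stabilizing exchange-twisted reflection (which reverses the block and conjugates by inversion) leaves only the identity and the longest element among the $G$-invariant local elements. For the three-vertex boost, stability under $R$ and $\mathcal D$ separately does the same.

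\emph{Main obstacle.} The hardest part is the collision issue in Step 2. Adjacent blocks of $A$ and $B$ can interact through the affine permutation, so the coset need not factor as a product of local cosets. In that case the stabilizer in $W_A\times W_B$ of $f$ is nontrivial and "minimal element" must be taken in the quotient. I will handle this by working with the parabolic double-coset normal form $f=u\,f_{\min}\,v$ with $u\in W_A$ and $v\in W_B$ minimal, and checking that the invariant choices of $(u,v)$ are exactly the trivial pair and the longest pair modulo the stabilizer.
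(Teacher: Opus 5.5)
Your Step 1 agrees with the paper. The support of $C_0M_{\mathcal O}(t)$ is the closure of $\mathcal B(g)$ under the arrow exchanges. Hence $g$ lies in the two-sided parabolic coset of $f$, $f$ is the stable element of that coset, and the strict inclusion of supports follows.

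The gap is in Step 2, which is the actual content of the lemma. You exclude intermediate coset elements block by block (``one common local datum''). That presupposes an element of the double coset is determined by independent data on the blocks. As you observe, this fails under collisions ($W_A\cap f_{\min}W_Bf_{\min}^{-1}\neq1$ in your notation), and the hypotheses do not rule these out; two loops on a boosted edge already give one. The repair you announce is not carried out, and as phrased it fails exactly in that case. Take $W_A=W_B=\langle s\rangle$ and $x$ commuting with $s$. The coset is $\{x,sx\}$, and the longest pair $(s,s)$ gives $sxs=x$, the same element as the trivial pair. So ``trivial pair and longest pair modulo the stabilizer'' never reaches the other extreme element.

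The missing idea is to characterize the $G$-stable elements by their descent data rather than by a parametrization of the coset. For $G$-stable $q$, the group permutes two kinds of comparisons: $q(i)$ with $q(i+1)$ for $i\in A$, and $q^{-1}(j)$ with $q^{-1}(j+1)$ for $j\in B$. It also preserves whether the corresponding adjacent exchange raises or lowers affine length, since conjugation by rotations and reflections and inversion are all length-preserving. Since $\mathcal O$ is a single orbit, all these comparisons have the same sign. So $q$ is either ascending at all of them or descending at all of them.

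Each of these two patterns occurs for at most one element of the double coset, with or without collisions; these are the minimal and maximal double-coset representatives. The paper checks this by counting:
\begin{itemize}
\item the number of positions of each position block sent into each value block is constant on the coset;
\item monotonicity of $q$ on position blocks then fixes which positions go to which value block;
\item monotonicity of $q^{-1}$ on value blocks then fixes the values.
\end{itemize}
Since $f$ is the pattern admitting no enlarging bridge, any $G$-stable $g\neq f$ in the coset must be the other one. This covers your $S_2$ and $S_3$ cases uniformly, and the collision obstacle disappears.

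Also mind the orientation: an enlarging bridge lowers affine length. So $f$ is the minimal-length element and the predecessor is the maximal one, and your $\pi_A\star g\star\pi_B$ needs the opposite Demazure product.
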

When $\mathcal O$ is clear from the context we may omit the superscript from $f_\downarrow^{\mathcal O}$ and just write $f_\downarrow.$
\begin{proof}
Let $\Pi_g^>$ be a $G$-stable positroid cell carried into $\Pi_f^>$
by $M_{\mathcal O}(t)$ for $t>0$. The identity
\[
 p(C_0M_{\mathcal O}(t))
   =e^{t\mathcal L_{\mathcal O}}p(C_0)
\]
shows exactly which minors become positive: they are those obtained
from an initially positive minor by finitely many adjacent label
replacements along arrows of $\mathcal O$. There is no cancellation,
since $\mathcal L_{\mathcal O}$ has nonnegative entries. Thus the same
final support can be obtained by applying the elementary bridges
from $\mathcal O$ successively until none enlarges the support.
Each change of index is right multiplication by some $s_i$, $i\in A$,
or left multiplication by some $s_j$, $j\in B$. Consequently,
\[
 g\in W_B f W_A.
\]

For any bounded affine permutation $q$ whose positroid cell $\Pi_q^>$
is preserved setwise by $G$, symmetry identifies the comparisons
associated with the arrows of $\mathcal O$: it permutes the corresponding
left and right adjacent exchanges and preserves whether they increase
or decrease the affine inversion number. Since $\mathcal O$ is one
$G$-orbit, all these comparisons have the same sign. Equivalently,
either
\[
 q(i)<q(i+1)\quad(i\in A),\qquad
 q^{-1}(j)<q^{-1}(j+1)\quad(j\in B),
\]
or all the displayed inequalities are reversed.

For each of these two choices, there is at most one such $q$ in
$W_B f W_A$. Indeed, $W_A$ reorders positions within its consecutive
blocks, while $W_B$ reorders values within its consecutive blocks.
The number of positions in each input block sent to each output block
is fixed throughout $W_B f W_A$. In the increasing case, these numbers
determine how the positions of each input block are assigned to the
ordered output blocks; the increasing order of $q^{-1}$ then determines
which value occupies each position. Reversing both orders gives the
unique decreasing choice. All blocks and comparisons are understood
periodically, including across the cyclic seam.

By hypothesis, applying $M_{\mathcal O}(t)$ to $\Pi_f^>$ leaves its
support unchanged. Hence none of the constituent positive elementary
bridges enlarges that support, so the elementary bridge rule identifies
$f$ with the increasing choice. Any distinct invariant predecessor must
therefore be the decreasing choice. If it occurs, denote it by
$f_\downarrow^{\mathcal O}$. Its support is a proper subset of the
support of $\Pi_f^>$, since positive bridge operations retain every
previously nonzero minor and distinct positroid cells have distinct
supports. This proves uniqueness; existence, when needed, will follow
from an actual finite first-exit point.
\end{proof}

\paragraph{First exit.}
Assume first that $\Pi_f^>$ admits an elementary bridge
$b(s)=I+sZ$, equal to $x_i(s)$ or $y_i(s)$, to which Lam's
bridge-removal theorem applies
\cite[Proposition~7.10]{Lam2015}.
Concretely, there is an ordinary positroid cell $\Pi_h^>$, depending
only on $f$ and $b$, such that for every $C\in\Pi_f^>$ there is
$a_b(C)\in(0,\infty)$ with
\[
 \begin{cases}
 Cb(-s)\in\Pi_f^>,&0\leq s<a_b(C),\\
 Cb(-a_b(C))\in\Pi_h^>,&
       \dim\Pi_h^>=\dim\Pi_f^>-1,\\
 Cb(-s)\notin\Gr_{\geq}(k,N),&s>a_b(C).
 \end{cases}
\]
Thus the bridge acts nontrivially throughout $\Pi_f^>$ and permits
a positive amount of removal without changing the support.

Let $\mathcal O$ be the $G$-orbit of $Z$, and write
$M(t)=M_{\mathcal O}(t)$.
Since $G$ preserves $\Pi_f^>$ setwise, every translated elementary
bridge has the same removal properties, although its ordinary
endpoint cell may differ. Their simultaneous stopping time need not
equal any individual stopping time, and may be infinite.
For $C\in\Sigma_f$, define its \emph{first-exit time} by
\begin{equation}
\label{eq:cells:first-exit}
 t_*(C)=\sup\{t\geq0:CM(-t)\in\Gr_{\geq}(k,N)\}.
\end{equation}
Let $\mathcal S_f$ be the set of nonzero Pl\"ucker coordinates on
$\Pi_f^>$. For every elementary bridge in $\mathcal O$, small removal
preserves this support. Its induced Pl\"ucker operator has nonnegative
entries, so it has no entry from a coordinate in $\mathcal S_f$
to one outside $\mathcal S_f$: otherwise an initially zero minor
would become negative immediately under removal.
The coordinate subspace supported on $\mathcal S_f$ is therefore
invariant under $\mathcal L_{\mathcal O}$ and its exponentials.
Consequently, positive motion preserves $\Pi_f^>$, and sufficiently
small backward motion does so too: zero minors remain zero, and
positive minors remain positive. In particular, $t_*(C)>0$.

The admissible times in which $CM(-t)$ is nonnegative form a closed interval: closedness follows from
that of the nonnegative Grassmannian, and if $t$ is admissible then
every $0\leq s\leq t$ is admissible, since
\[
 CM(-s)=CM(-t)M(t-s)
\]
and $M(t-s)$ preserves nonnegativity.

Suppose now that $t_*(C)<\infty$ for every $C\in\Sigma_f$.
The endpoint
\[
 C_{\mathrm{end}}=CM(-t_*(C))
\]
is nonnegative by closedness. It cannot belong to $\Sigma_f$:
small removal would then remain possible, contradicting maximality
of $t_*(C)$. Since $C=C_{\mathrm{end}}M(t_*(C))$,
Lemma~\ref{lem:cells:two-predecessors} places every such endpoint
in the same $\Sigma_{f_\downarrow}$, where
$f_\downarrow=f_\downarrow^{\mathcal O}$.
We obtain
\begin{equation}
\label{eq:cells:product-chart}
 \Sigma_{f_\downarrow}\times\mathbb R_{>0}\ \simeq\ \Sigma_f,
 \qquad(C_0,t)\longmapsto C_0M(t).
\end{equation}
Indeed, for $C_0\in\Sigma_{f_\downarrow}$ and $t>0$, the defining
property of the predecessor gives $C=C_0M(t)\in\Sigma_f$.
Removing $t$ recovers $C_0$, so $t_*(C)\geq t$.
If removal could continue to $t+\varepsilon$, $\varepsilon>0$, then
\[
 D:=CM(-t-\varepsilon)=C_0M(-\varepsilon)
\]
would be nonnegative, with
\[
 C_0=DM(\varepsilon),\qquad C=DM(t+\varepsilon).
\]
But the support of $DM(s)$ is independent of $s>0$, so these planes
would have the same support, contrary to $f_\downarrow\ne f$.
Hence $t_*(C)=t$, and the inverse of
\eqref{eq:cells:product-chart} is
\[
 C\longmapsto\bigl(CM(-t_*(C)),\,t_*(C)\bigr).
\]
Before the stopping time the support is still $f$, by the same
independence of forward support from the positive parameter.
Thus a time below $t_*(C)$ remains admissible near $C$, since all
its supported minors remain positive. At a time above $t_*(C)$,
a negative transported minor persists near $C$, so that time remains
inadmissible. These two observations prove continuity of $t_*$
and hence of the displayed inverse. Consequently,
\[
 \dim\Sigma_f=\dim\Sigma_{f_\downarrow^{\mathcal O}}+1.
\]
This dimension drop concerns the symmetric strata; the ordinary
positroid dimensions need not differ by one for the full orbit.

For an acyclic orbit of Lemma~\ref{lem:cells:orbit-shapes},
the first-exit time is finite: Order the labels so that every arrow
goes forward. This makes $X_{\mathcal O}$ triangular with zero
diagonal. Each allowed replacement in a Pl\"ucker index strictly
increases the sum of the positions of its selected labels in this
ordering, so $\mathcal L_{\mathcal O}$ is nilpotent as well.
In particular, it cannot generate nontrivial scaling of a fixed
Pl\"ucker line: its only eigenvalue is zero.

Fix a nonnegative Pl\"ucker vector $w$ for $C$ and transport it as
\[
 w(t)=e^{-t\mathcal L_{\mathcal O}}w,
\]
without time-dependent normalization.
Suppose removal remained nonnegative for all $t\geq0$.
Continuity keeps this nonzero lift in the nonnegative cone, so
$w(t)\geq0$. Since $\mathcal L_{\mathcal O}$ is entrywise nonnegative,
$e^{t\mathcal L_{\mathcal O}}\geq I$ entrywise, and therefore
\[
 w=e^{t\mathcal L_{\mathcal O}}w(t)\geq w(t)\geq0.
\]
Nilpotence makes each coordinate of $w(t)$ a polynomial in $t$;
the displayed bounds force each polynomial to be constant.
Hence $\mathcal L_{\mathcal O}w=0$.

This contradicts the nontriviality supplied by Lam's removal theorem.
For the initial elementary bridge, with induced operator
$\mathcal L_Z$, removal acts by
\[
 p(Cb(-s))=w-s\mathcal L_Zw.
\]
At $s=a_b(C)$ an initially positive minor vanishes, so
$\mathcal L_Zw\ne0$. Since every orbit summand is entrywise
nonnegative,
\[
 \mathcal L_{\mathcal O}w
   =\sum_{Z'\in\mathcal O}\mathcal L_{Z'}w
   \geq\mathcal L_Zw\ne0,
\]
a contradiction. Thus $t_*(C)<\infty$.
This includes the palindrome and the overlapping
$\rightarrow\,\rightarrow\,\leftarrow\,\leftarrow$ orbits,
without requiring commuting generators.
For boosts, nonzero eigenvalues can instead rescale a Pl\"ucker
vector without moving its plane; their possible infinite backward
motion is treated by the rigid-contraction argument below.
\paragraph{Infinite exit forces a rigid contraction.}
It remains to treat disjoint two- and three-vertex boosts. In this case we cannot rely on nilpotence, and need to preform a finer spectral study of the dynamics. The following
linear-algebra observation prevents an unexplained terminal stratum.
\begin{lemma}[Positive backward orbits on a small path]
\label{lem:cells:backward-path}
Let $J=(j_1,\ldots,j_m)$, $m\in\{2,3\}$, be consecutive labels
in a cyclic chart where $J$ does not cross the seam. Set
\[
 X_J=\sum_{\nu=1}^{m-1}
       (e_{j_\nu,j_{\nu+1}}+e_{j_{\nu+1},j_\nu}),
\]
with all other entries zero. Suppose
$Ce^{-tX_J}\in\Gr_{\geq}(k,N)$ for every $t\geq0$. Put
\begin{equation}
\label{eq:cells:patch-ranks}
 a=\dim(C\cap\mathbb R^J),\qquad b=\dim\pi_J C.
\end{equation}
Then $b-a\leq1$. The supported and projected planes are spanned by
the eigenvectors of $X_J|_{\mathbb R^J}$ with the $a$ and $b$
largest eigenvalues, respectively.
\end{lemma}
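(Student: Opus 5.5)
The plan is to reduce everything to the spectral dynamics of the symmetric matrix $X_J|_{\mathbb R^J}$ and to use closedness of $\Gr_{\geq}(k,N)$ to pass to limits $t\to\infty$. We write row vectors as $v=(v_J,v_{J^c})$. Then $e^{-tX_J}$ fixes $v_{J^c}$ and acts on $v_J$.

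\emph{The eigen-data.} For $m=2$ the eigenvalues are $\pm1$, with eigenvectors $(1,1)$ and $(1,-1)$. For $m=3$ they are $\sqrt2,0,-\sqrt2$, with eigenvectors $(1,\sqrt2,1)$, $(1,0,-1)$ and $(1,-\sqrt2,1)$. An eigencomponent of eigenvalue $\mu$ is multiplied by $e^{-t\mu}$, so under backward motion the \emph{smallest} eigenvalues dominate. By direct inspection, the only nonnegative $X_J$-invariant planes in $\mathbb R^J$ (with the ordered-minor convention on the consecutive labels $J$) are the spans of the top $a'$ eigenvectors. Examples are the line $(1,1)$, and the planes $\operatorname{span}\{(1,\sqrt2,1),(1,0,-1)\}$ and $\operatorname{span}\{(1,1)\}$. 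The bottom-type spans fail nonnegativity, e.g.\ $(1,-1)$ and $(1,-\sqrt2,1)$. This is a finite check, which I would record first.

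\emph{Supported plane.} Set $W=C\cap\mathbb R^J$. Since $J$ is consecutive and does not cross the seam, the Pl\"ucker coordinates of $W$ are, up to one common sign, minors of $C$ whose $J^c$-part is a fixed complementary set. Hence $W_t=We^{-tX_J}$ is nonnegative for all $t\geq0$, and so is its limit $W_\infty$. The limit exists, and is the span of the lowest eigencomponents occurring in a flag-adapted basis of $W$. It is invariant and nonnegative, so it equals the top-$a$ span. However, if $W$ had any component below its top-$a$ block, that component would dominate in the limit. This forces $W$ itself to be the top-$a$ span.

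\emph{Projected plane.} I would use duality. Because $X_J$ is symmetric, $(Ce^{-tX_J})^\perp=C^\perp e^{tX_J}$. Because $J$ is consecutive, $A_NX_JA_N=-X_J$. Together these give $\mathcal D(Ce^{-tX_J})=\mathcal D(C)e^{-tX_J}$, so $\mathcal D(C)$ satisfies the same hypothesis. Its supported plane $A_J(C^\perp\cap\mathbb R^J)$ has dimension $m-b$ and is the top-$(m-b)$ span. Since $A_J$ negates eigenvalues, $C^\perp\cap\mathbb R^J$ is the bottom-$(m-b)$ span. Therefore $\pi_JC=(C^\perp\cap\mathbb R^J)^{\perp}\cap\mathbb R^J$ is the top-$b$ span.

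\emph{The bound $b-a\leq1$, which I expect to be the main obstacle.} The two previous steps only use minors with a fixed $J^c$-part, so they cannot see $b-a$. I would take a complement $C'$ of $W$ in $C$, so that $\pi_J$ maps $C'$ isomorphically onto the eigenvalue-ranks $a+1,\ldots,b$ of $\pi_JC$. Then I would consider the Pl\"ucker coordinates of $C$ whose $J$-part has exactly $a+1$ labels, for a fixed admissible $J^c$-part $I$. For fixed $I$, these form a vector in $\Lambda^{a+1}\mathbb R^J$. It is of the form $\omega_W\wedge x_I$, where $x_I$ is a vector in the eigen-span of ranks $a+1..b$, and under backward motion it evolves by the induced flow. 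Nonnegativity for all $t$ forces each such vector onto the dominant nonnegative eigenline of the induced operator, i.e.\ forces $x_I$ to be proportional to the rank-$(a+1)$ eigenvector modulo $W$. If $b-a\geq2$, the vectors $x_I$, as $I$ varies, must span the whole $(b-a)$-dimensional rank block. This is because $C'$ projects isomorphically onto it, and the minors with fixed $J^c$-parts recover the rows. That contradicts their all being parallel. The delicate points are choosing $I$ so that the relevant coordinates are nonzero, and handling the common seam/ordering signs uniformly. I would first try $m=2$, $a=0$, $b=2$ by hand, where normalization of the row carrying the $(1,-1)$ component directly produces a negative minor. Then I would treat $m=3$ by the same exterior-power argument.
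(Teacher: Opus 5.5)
Your argument is correct in substance, and it identifies the two planes by a different route from the paper's.

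\textbf{How the routes differ for the planes.} The paper proves one fact for every $d$. For a fixed $J^c$-part $T$, the vector $w_{T,d}=(\Delta_{T\cup I}(C))_{I\in\binom{J}{d}}$ evolves by $e^{-tX_J^{[d]}}$. The matrix $X_J^{[d]}$ is symmetric, with nonnegative off-diagonal entries and a connected graph on $d$-subsets. Perron--Frobenius then forces every nonzero $w_{T,d}$ onto the strictly positive top eigenline. Taking $d=a$ and $d=b$ gives both planes directly, and $d=a+1$ feeds the bound.

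You instead use three ingredients:
\begin{enumerate}
\item a Grassmannian limit $t\to\infty$;
\item an explicit list of the nonnegative $X_J$-invariant subspaces of $\mathbb R^J$;
\item the supported plane of $\mathcal D(C)$, obtained via $X_J^T=X_J$ and $A_NX_JA_N=-X_J$ (the same anticommutation as in the heat-flow step of Theorem~\ref{thm:dom:ball}), which gives $\mathcal D(Ce^{-tX_J})=\mathcal D(C)e^{-tX_J}$.
\end{enumerate}
All of this checks out. The finite list is correct for $m=2,3$, and $A_J$ carries the top-$(m-b)$ span to the bottom one. Your version is concrete, but the finite check ties it to $m\leq 3$. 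The paper's argument is uniform in the path length and reads $\pi_JC$ off $w_{T,b}$ without duality.

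\textbf{The bound $b-a\leq1$.} Here you follow the paper's idea, but the step you leave open is exactly where the paper does its work, and your stated reason is off. When $k>b$, $\pi_J$ does not map a complement $C'$ of $W$ isomorphically onto the span of the eigenvectors of ranks $a+1,\ldots,b$, since its kernel contains $C\cap\mathbb R^{J^c}$.

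The paper closes this step as follows.
\begin{itemize}
\item Complete a basis $u_1,\ldots,u_a$ of $W$ by $v_1,\ldots,v_{k-a}$ to a basis of $C$. Since $\ker(\pi_{J^c}|_C)=W$, the vectors $z_i=\pi_{J^c}v_i$ form a basis of $\pi_{J^c}C$.
\item Hence the $k-a$ wedges obtained by omitting one $z_i$ are linearly independent in $\bigwedge^{k-a-1}\mathbb R^{J^c}$. Their coordinate matrix $(c_{T,i})$, with rows indexed by $T\in\binom{J^c}{k-a-1}$, therefore has full column rank.
\item Laplace expansion gives $w_{T,a+1}=u_1\wedge\cdots\wedge u_a\wedge x_T$ with $x_T=\sum_i\pm c_{T,i}\,\pi_Jv_i$. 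So as $T$ varies, the $x_T$ span all of $\pi_JC$ modulo $W$.
\end{itemize}
No special choice of $T$ is needed, and consecutiveness of $J$ makes the reordering sign depend only on $T$.

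The forcing step also needs no induced-operator argument. Let $f_1,\ldots,f_m$ be the eigenvectors of $X_J|_{\mathbb R^J}$ in decreasing order of eigenvalue. Since you already know $W$ and $\pi_JC$, each $w_{T,a+1}$ is a combination of the decomposable eigenvectors $u_1\wedge\cdots\wedge u_a\wedge f_j$ with $a<j\leq b$. The largest $j$ present dominates backward motion, and its rescaled limit is a common-signed decomposable vector. Your own finite check then forces $j=a+1$.

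Your phrase ``dominant nonnegative eigenline'' conflates two different lines. The line that dominates under $e^{-tX_J}$ is the lowest one; the line that is forced is the top one.
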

\begin{proof}
For $0\leq d\leq m$, fix a set $T\subseteq J^c$ of $k-d$ labels
and collect the minors with exactly these labels outside $J$:
\[
 w_{T,d}=\bigl(\Delta_{T\cup I}(C)\bigr)_{I\in\binom Jd}.
\]
Under backward motion this vector evolves as
$e^{-tX_J^{[d]}}w_{T,d}$, where $X_J^{[d]}$ is the induced
infinitesimal action on $\bigwedge^d\mathbb R^J$.
For $0<d<m$, this matrix is symmetric with nonnegative off-diagonal
entries and induces a connected graph: adjacent moves connect all $d$-subsets
of the path. Its top eigenline is therefore strictly positive.
Every nonzero $w_{T,d}$ must lie on that line. Otherwise its component
with smallest eigenvalue would dominate under backward motion,
giving, after positive rescaling, a nonzero nonnegative eigenvector
orthogonal to the strictly positive top vector, which is impossible.
The cases $d=0,m$ are one-dimensional.

If $a<b$, choose a basis $u_1,\ldots,u_a$ of
$C\cap\mathbb R^J$ and complete it by $v_1,\ldots,v_{k-a}$
to a basis of $C$. The outside projections
$z_i=\pi_{J^c}v_i$ form a basis of $\pi_{J^c}C$.
The wedges obtained by omitting one $z_i$ form a basis of
$\bigwedge^{k-a-1}\pi_{J^c}C$, so their coordinate matrix,
with rows indexed by $T$, has full column rank.
Expanding the wedge of the chosen basis of $C$ therefore shows
that the vectors $w_{T,a+1}$ span precisely the image of
\[
 \pi_JC/(C\cap\mathbb R^J)
 \longrightarrow \bigwedge^{a+1}\mathbb R^J,\qquad
 [v]\longmapsto u_1\wedge\cdots\wedge u_a\wedge v.
\]
This map is well defined and injective, since wedging with
$u_1\wedge\cdots\wedge u_a$ annihilates exactly their span.
Its image consequently has dimension $b-a$.
All $w_{T,a+1}$ lie on the same top eigenline, so $b-a\leq1$.

Finally, the nonzero vectors $w_{T,a}$ and $w_{T,b}$ represent
the top exterior powers of $C\cap\mathbb R^J$ and $\pi_JC$,
respectively. Their top-eigenline property identifies these planes
with the corresponding spans of largest-eigenvalue eigenvectors.
This also covers $a=b$, when only one inside degree occurs.
\end{proof}
We now apply the lemma to show that infinite backward motion forces
a rigid local configuration, so the current stratum is obtained from
one on fewer labels by a normalized rigid graft or a lollipop insertion.

Write the disjoint boost operation as $M(t)=\prod_\nu M_\nu(t)$,
where $M_\nu$ acts on the patch $J_\nu$. If $t_*(C)=\infty$, then
\[
 CM_\nu(-t)
   =\bigl(CM(-t)\bigr)\prod_{\mu\ne\nu}M_\mu(t)
   \in\Gr_{\geq}(k,N)\qquad(t\geq0),
\]
since the patches are disjoint and each positive boost preserves
nonnegativity. Thus Lemma~\ref{lem:cells:backward-path} applies
to each patch separately.

Consider a two-label patch $J=(i,j)$, placed last in a signed cyclic
chart. Its local generator
$\left(\begin{smallmatrix}0&1\\1&0\end{smallmatrix}\right)$ has
eigenvalues $1,-1$, with top eigenline
$L_+=\mathbb R(e_i+e_j)$.
With $a,b$ as in \eqref{eq:cells:patch-ranks}, the lemma gives
$0\leq a\leq b\leq2$ and $b-a\leq1$, leaving precisely
\[
\begin{array}{c|l}
(a,b)&\text{configuration on }J\\ \hline
(0,0)&C_i=C_j=0:\ \text{two loops},\\
(2,2)&\mathbb R^J\subset C:\ \text{two coloops},\\
(1,1)&C=(C\cap\mathbb R^{J^c})\oplus L_+:
          \ \text{an isolated }K_{1,2},\\
(0,1)&\pi_JC=L_+:\ \text{two equal nonzero columns (parallel pair)},\\
(1,2)&C\cap\mathbb R^J=L_+:
          \ \text{a series pair}.
\end{array}
\]
The first three cases are reduced by deleting the two labels,
that is, projecting onto $J^c$; the rank drops by $0,2,1$,
respectively. For a parallel pair, retain just one of the equal
columns. For a series pair, replace the coordinates $x_i,x_j$
by $x_i-x_j$, leaving the other coordinates unchanged.
This map has kernel $L_+$ on $C$, so it reduces the rank by one.
These last two operations, called parallel and series fusion,
are the inverses of the corresponding splittings in
Section~\ref{subsec:domain:atoms}.

At a linearly fixed edge, reflection also forces these equal weights:
it exchanges the two proportional columns in the parallel case,
or the two coefficients of the supported line in the series case,
sending their positive ratio to its reciprocal.
If pure exchange is imposed, it sends the rank pair $(a,b)$ to
$(2-b,2-a)$; hence $a=2-b$, and only the isolated-pair case
$(a,b)=(1,1)$ remains. All reductions are performed on the full
orbit, with the inherited cyclic signs.

For a three-label boost the stabilizer contains linear reflection and
pure exchange. Self-duality gives $b=3-a$, so $b-a\leq1$ forces
$(a,b)=(1,2)$. Reflection and positivity make the two outer entries of
the supported line equal, and isotropy fixes its middle entry:
\begin{equation}
\label{eq:cells:rigid-triple}
 C\cap\mathbb R^J=\mathbb R(1,\sqrt2,1),\qquad
 \pi_JC=\operatorname{span}\{(1,\sqrt2,1),(1,0,-1)\}.
\end{equation}
The inverse of~\eqref{eq:atoms:vertex-graft} contracts this triple to one
vertex. It decreases $(N,k)$ by $(2,1)$ per patch, preserving middle rank.

Importantly, this alternative is uniform on the positroid stratum. The
ranks~\eqref{eq:cells:patch-ranks} and the supports of the pair/triple
factors depend only on $f$. Once an infinite-exit point exists, these ranks,
the local symmetry and positivity force the same normalized small flags
at every point of $\Sigma_f$. Pair fusion or triple contraction therefore
gives a homeomorphism of the entire $\Sigma_f$ with one induced fixed
positroid stratum on fewer labels. The inverse minor formulas for the
rigid grafts show both the fixed support and the continuous inverse.
Symmetry-related patches are contracted together, with exchange
transporting parallel and series choices into one another. To identify the induced symmetry on the smaller label set, choose
unit representatives of the surviving one-dimensional quotients,
with compatible dual choices for the complementary plane.
The signed ambient symmetries then carry each surviving coordinate
vector to another up to sign; exchanging elements additionally take
orthogonal complements. The inverse graft formulas give the
ordered-minor signs for the reduced rank, identifying these maps
with the standard lifts on the contracted cyclic labels, up to
a common scalar sign for each group element. If this alternative
does not occur, all exits are finite and
\eqref{eq:cells:product-chart} applies.

\begin{proof}[Proof of Theorem~\ref{thm:dom:cells}]
Induct lexicographically on the number of labels and the dimension of the
ordinary positroid cell. Delete any present lollipop orbit first. Otherwise
Lam supplies a removable adjacent bridge; use its complete directed orbit.
The terminal-cycle case of Lemma~\ref{lem:cells:orbit-shapes} is a point.
Every other case either has the product reduction
\eqref{eq:cells:product-chart}, decreasing ordinary cell dimension, or a
rigid contraction decreasing the number of labels. The latter uses the
new induced action; it need not preserve the old vertex/edge type.
For $k=0$ or $k=N$, the Grassmannian itself is a point.
On at most two labels, the only remaining case is $(N,k)=(2,1)$:
the ordinary positroid strata are the two coordinate lines and
$\{\operatorname{rowspan}(1,t):t>0\}$.
Each symmetry acts on the positive ratio by $t\mapsto t$ or
$t\mapsto t^{-1}$, so its nonempty fixed strata are points or
this entire positive interval. The latter is parametrized by
$(1,0)x_1(t)=(1,t)$; when symmetry forces $t=1$, the stratum
is the rigid point $K_{1,2}$ instead. The induction terminates and yields
\begin{equation}
\label{eq:cells:iterated-chart}
 \Sigma_f\simeq\Sigma_{\rm seed}\times(\mathbb R_{>0})^{d_f},
 \qquad\Sigma_{\rm seed}\text{ a point}.
\end{equation}
All reductions start from a nonempty stratum and construct an actual
smaller point. Thus no realizability theorem for abstract invariant
positroids was used.
\end{proof}

\begin{remark}[Comparison with previous constructions]
    The cyclic specialization recovers Fraser's cell parametrizations
\cite[Theorem~5.1 and Section~5.3]{Fraser2020}; the half-turn exchange
specialization agrees with Shevchenko's~\cite[Theorem~5.1]{Shevchenko2025}.
The construction does not require an ordinarily reduced symmetric plabic
graph: injectivity follows from recovery of each orbit parameter.
Shevchenko's reducedness qualification~\cite[Proposition~5.19]{Shevchenko2025}
is therefore compatible with the theorem. 
\end{remark}
We call the nonempty sets
\eqref{eq:cells:fixed-stratum} \emph{symmetric positroid cells}; arbitrary
unions of them and their nonnegative factorization loci are not thereby
single cells.
\subsection{Examples}
\label{subsec:domain:blobs}
The ordinary Grassmannian,
orthogonal Grassmannian and reflected Lagrangian domain were recalled in
Sections~\ref{subsec:bg:grassmannian}, \ref{subsec:bg:orthogonal}
and~\ref{subsec:bg:reflected}; here we describe the remaining reflection
variants, focusing on their dimensions, representation types and elementary
positive operations. Each family is defined on its own cyclically ordered
label set. We write $\widehat s$ for the chosen signed reflection and
$\mathcal D$ for the positive duality~\eqref{eq:sym:positive-duality}.

For all these families Theorem~\ref{thm:dom:ball} supplies a closed ball,
with positive interior and constant representation type, including at the
nonnegative boundary. As usual, the product descriptions below concern the distinguished complex
component containing $K_{k,N}$, not Cartesian products of standard nonnegative parts.

\begin{example}[Linear reflection]
\label{ex:domain:linear-reflection}
On $M$ cyclically ordered labels, with rank $r$, set
\begin{equation}
\label{eq:blob:linear-source}
 \LinRefDomain^{\geq}_{r,M}(s)
 :=\{C\in\Gr_{\geq}(r,M):\widehat s C=C\}.
\end{equation}
This is \eqref{eq:dom:definition} for $\langle(s,0)\rangle$.
Let $V_\pm=\ker(\widehat s\mp I)$, $m_\pm=\dim V_\pm$, and
$\epsilon_r=r\bmod2$. For the standard lifts,
\eqref{eq:dom:self-paired-types}--\eqref{eq:dom:reflection-trace} give
\begin{equation}
\label{eq:blob:reflection-types}
\renewcommand{\arraystretch}{1.2}
\begin{array}{c|c|c|c}
 \text{type}&\widehat s&(m_+,m_-)&\dim\LinRefDomain^{\geq}_{r,M}(s)\\\hline
 M=2m,\ \text{gap axis}&J_M&(m,m)&
       \lfloor r(M-r)/2\rfloor\\
 M=2m,\ \text{vertex axis}&S_rJ_M&(m+\epsilon_r,m-\epsilon_r)&
       \lceil r(M-r)/2\rceil\\
 M=2m+1&J_M&(m+1,m)&r(M-r)/2
\end{array}
\end{equation}
Here $J_Me_i=e_{M+1-i}$ and $S_r$ is the signed shift on these labels. In every row,
\begin{equation}
\label{eq:blob:reflection-representations}
 \begin{gathered}
 C=C_+\oplus C_-,\quad C_\pm=C\cap V_\pm,\qquad
 c_+:=\dim C_+=\lceil r/2\rceil,\quad
 c_-:=\dim C_-=\lfloor r/2\rfloor,\\
 C\simeq\mathbf1_+^{\oplus c_+}\oplus\mathbf1_-^{\oplus c_-},\qquad
 C^\perp\simeq\mathbf1_+^{\oplus(m_+-c_+)}
                  \oplus\mathbf1_-^{\oplus(m_--c_-)},\\
 (\LinRefDomain_0)_{\mathbb C}\simeq
 \Gr(c_+,m_+)\times\Gr(c_-,m_-).
 \end{gathered}
\end{equation}
The characters refer to the \emph{signed} lift; central $-I$ acts by $-1$
on each summand. An overall change of lift interchanges both sign labels.
The signs are determined by the chosen lift, not merely by the number
of fixed labels. For example, the two vertex labels need not have the same
sign when $r$ is even.

Normalized minors satisfy $\Delta_I=\Delta_{s(I)}$, and the strata are the
nonempty intersections with the corresponding reflection-stable positroids.
The allowed insertions pair loops with loops and coloops with coloops;
individual fixed labels may also be loops or coloops. Use the rigid edge
increment at fixed gaps, ordinary admissible reflected packets, and the
hyperbolic edge layers of Lemma~\ref{lem:atoms:hyperbolic}. In particular,
the gap-axis case has two singleton hyperbolic edge orbits, and the odd case
has one. Theorem~\ref{thm:dom:cells} supplies the cell parametrizations, allowing
action-changing parallel/series grafts. Rank-one and corank-one cases are
included.
\end{example}

\begin{example}[Vertex-reflection exchange]
\label{ex:domain:exchange-vertex}
Let $M=2n$ and let $s$ have a vertex axis, with
$\widehat s=S_nJ_{2n}$. Define
\begin{equation}
\label{eq:blob:exchange-vertex}
 \begin{gathered}
 \ExRefDomain^{\geq}_{n,2n}(s)
   :=\{C\in\Gr_{\geq}(n,2n):C=\widehat s\mathcal D(C)\}
    =\{C\in\Gr_{\geq}(n,2n):CB_sC^T=0\},\\
 B_s=\widehat s A_{2n},\qquad B_s^T=B_s,\quad B_s^2=I,\qquad
 \dim\ExRefDomain^{\geq}_{n,2n}(s)=\frac{n(n-1)}2.
 \end{gathered}
\end{equation}
The symmetric form $B_s$ has signature $(n,n)$: its trace is zero and it
is an involution. Thus the ambient component is a maximal orthogonal
Grassmannian. In general this is \emph{not} the usual ABJM positivity embedding:
the defining minor equalities are $\Delta_I=\Delta_{s(I^c)}$, rather than
$\Delta_I=\Delta_{I^c}$, and $\widehat s C=C$ is not imposed separately.
The signed linear kernel acts only by scalars on $C$; $B_s$ exchanges
$C$ and $C^\perp$, and has two ambient eigenspaces of dimension $n$.
This specifies the graded representation, as in
\eqref{eq:dom:exchange-partners}--\eqref{eq:dom:exchange-form}.

Nonfixed reflection pairs allow loop--coloop insertions. Neither fixed
vertex permits a lollipop, by~\eqref{eq:atoms:lollipop-rule}. Use the exchange-vertex palindrome~\eqref{eq:atoms:palindrome}, together
with the equivariant insertions and orbit operations of
Section~\ref{subsec:domain:atoms}.
For comparison, replacing the vertex axis by a gap axis makes
$B_s^T=-B_s$ and $B_s^2=-I$: this is the Karpman domain already described
in Section~\ref{subsec:bg:reflected}, of dimension $n(n+1)/2$.
\end{example}

\begin{example}[Reflection-fixed orthogonal domains]
\label{ex:domain:reflection-orthogonal}
Impose pure exchange and linear reflection \emph{independently}:
\begin{equation}
\label{eq:blob:orthogonal-reflection}
 \OrthRefDomain^{\geq}_{n,2n}(s)
 :=\{C\in\OG_{\geq}(n,2n):\widehat s C=C\}
  =\{C\in\Gr_{\geq}(n,2n):CA_{2n}C^T=0,\ \widehat s C=C\}.
\end{equation}
These are reflection-fixed ABJM domains. Their normalized minors satisfy
$\Delta_I=\Delta_{I^c}=\Delta_{s(I)}$; hence their strata lie in
reflection-invariant orthogonal positroids, whose ordinary ABJM
combinatorics is recalled in~\cite{KimLee2014}. Put
$a=\lceil n/2\rceil$, $b=\lfloor n/2\rfloor$. The representations
of $C$ and $C^\perp$ are given by
\eqref{eq:blob:reflection-representations} at $r=n$, with dimensions and
ambient components
\begin{equation}
\label{eq:blob:orthogonal-reflection-types}
\renewcommand{\arraystretch}{1.25}
\begin{array}{c|c|c|c}
 \text{axis}&A_{2n}(V_\pm)&(\OrthRefDomain_0)_{\mathbb C}&
      \dim\OrthRefDomain^{\geq}_{n,2n}(s)\\\hline
 \text{gap}&V_\mp&\Gr(a,n)&\lfloor n^2/4\rfloor\\
 \text{vertex}&V_\pm&
       \OG(a,2a)^{\circ}\times\OG(b,2b)^{\circ}&
       \lfloor(n-1)^2/4\rfloor
\end{array}
\end{equation}
Here $\OG^\circ$ means the component containing the cyclically symmetric positive plane, and
$\OG(0,0)$ is a point. In the gap case $A_{2n}$ pairs $V_+$ and $V_-$,
so $C_-$ is the annihilator of $C_+$ for this pairing. In the vertex case
its restrictions to $V_\pm$ are split nondegenerate forms, and $C_\pm$
are maximal isotropic independently. This proves the two component
formulas and also records how exchange acts on the reflection types.
Pure exchange forbids all lollipop insertions. Instead use symmetric rank-one pair insertions, the rigid
$K_{2,4}$ vertex graft~\eqref{eq:atoms:vertex-graft}, and the edge and vertex
boosts. These extend the ordinary orthogonal
operations~\cite[Section~4.2]{KimLee2014}.
\end{example}

\section{External data and the symmetric amplituhedron}
\label{sec:external-and-map}
This section is devoted to constructing the symmetric amplituhedra and their basic properties.

Fix $2\leq k\leq N-2$ and a symmetry datum $(G,\rho)$. We retain the
signed action $T_\gamma$ of \eqref{eq:sym:signed-map}, its finite lift
$\widehat G$, and its linear kernel $\widehat H=\ker\epsilon$.
Write $X^{\geq}=\SymGrNonneg{G}{\rho}{k,N}$ and
$X^>=\SymGrPos{G}{\rho}{k,N}$.

\subsection{Positive equivariant external data}
\label{subsec:external:datum}

An \emph{external datum} consists of full-column-rank matrices
$\Lambda\in\Mat_{N,N-k+2}$ and
$\widetilde\Lambda\in\Mat_{N,k+2}$ whose column spaces satisfy
\begin{subequations}
\label{eq:ext:requirements}
\begin{align}
 &\mathcal W:=\operatorname{colspan}\widetilde\Lambda\in\Gr_{>}(k+2,N),\quad
 \mathcal U:=(\operatorname{colspan}\Lambda)^\perp
       \in\Gr_{>}(k-2,N),\qquad \mathcal U\subset\mathcal W
 \label{eq:ext:positive-flag}\\
 &h\mathcal U=\mathcal U,\quad h\mathcal W=\mathcal W
          \quad(h\in\widehat H),\qquad
 a\mathcal U=\mathcal W^\perp,\quad a\mathcal W=\mathcal U^\perp
          \quad(a\in\widehat G\setminus\widehat H).
 \label{eq:ext:equivariance}
\end{align}
\end{subequations}
Here and below orthogonality is Euclidean. Thus the nesting condition is
precisely $\Lambda^\perp\subset\widetilde\Lambda$ when matrices denote
their column spaces. Symmetry is imposed on these spaces, not on a
particular choice of frames. An exchanging element interchanges the
two external spaces, as in \eqref{eq:sym:pair-action}; it is present
only at $N=2k$. We use the capital-letter convention of
\eqref{eq:bg:external-data}. We recall again that nesting is an additional specialization of
the ordinary momentum datum of~\cite{DamgaardEtAl2019}. None of the
results in this section requires immanant positivity.

\paragraph{External representation types.}
Positivity fixes the $\widehat H$-types of $\mathcal U$ and $\mathcal W$:
they are those of the cyclically symmetric positive planes $K_{k-2,N}$ and
$K_{k+2,N}$, respectively,
with the \emph{same ambient signed action} as at rank $k$.
This follows from Theorem~\ref{thm:dom:ball} applied to the linear kernel;
in the cyclic case it is Fraser's distinguished-component statement
\cite[Definition/Lemma~4.8]{Fraser2020}. In the common Fourier basis
\eqref{eq:bg:Fourier-basis}, the reference windows are
\begin{equation}
\label{eq:ext:mode-windows}
 \begin{array}{c|c|c}
 \text{space}&\text{Fourier indices }t&\text{multiplicity in residue }j\pmod q\\\hline
 K_{k-2,N}&1,\ldots,k-2&u_j=\#\{1\leq t\leq k-2:t\equiv j\pmod q\}\\
 K_{k,N}&0,\ldots,k-1&c_j=\#\{0\leq t<k:t\equiv j\pmod q\}\\
 K_{k+2,N}&-1,\ldots,k&w_j=\#\{-1\leq t\leq k:t\equiv j\pmod q\}.
 \end{array}
\end{equation}
Here $q$ is the order of the rotational label subgroup of $H$, as in
\eqref{eq:dom:kernel-cyclic-types}; $q=1$ is allowed. The matrices
$\Lambda,\widetilde\Lambda$ therefore carry cyclic multiplicities
$N/q-u_j,w_j$, respectively.
Linear reflections join and split these modes by
\eqref{eq:dom:self-paired-types}--\eqref{eq:dom:reflection-trace}, using
the displayed windows; exchanging elements relate the two external
spaces by~\eqref{eq:ext:equivariance}. 

The reference pair
$\mathcal U=K_{k-2,N}\subset K_{k+2,N}=\mathcal W$
is an admissible datum for every symmetry under consideration. Indeed,
these reference planes are positive by \eqref{eq:bg:Karp-point}, the windows
are nested and reflection-stable, and at $N=2k$ the map $A_N$ carries
the first window onto the complement of the third. Thus the external
conditions are nonempty, including at $k=2,N-2$, where a zero or full
space is interpreted in the usual way.

\subsection{The two pictures and their functionaries}
\label{subsec:external:pictures}

The \emph{symmetric amplituhedron maps} in the auxiliary ($A$) and
kinematic ($B$) pictures are the restrictions of the momentum maps:
\begin{subequations}
\label{eq:amp:maps}
\begin{align}
 \Phi^{\mathrm A}(C)&=(Y,\widetilde Y)
       =(C^\perp\Lambda,C\widetilde\Lambda)
       \in\Gr(N-k,N-k+2)\times\Gr(k,k+2),
 \label{eq:amp:A-map}\\
 \Phi^{\mathrm B}(C)&=(\lambda,\widetilde\lambda)
       =(C\cap\mathcal U^\perp,C^\perp\cap\mathcal W),
 \label{eq:amp:B-map}\\
 \SymAmpA{G}{\rho}_{N,k}(\Lambda,\widetilde\Lambda)
       &:=\Phi^{\mathrm A}(X^{\geq}),\qquad
 \SymAmpB{G}{\rho}_{N,k}(\mathcal U,\mathcal W)
       :=\Phi^{\mathrm B}(X^{\geq}).
 \label{eq:amp:images}
\end{align}
\end{subequations}
The intersections in \eqref{eq:amp:B-map} have dimension two for every
nonnegative $C$, and both maps are well defined there
\cite[Proposition~3.1(1)]{Galashin2024}.

Karp--Williams prove that the ordinary $A$- and $B$-amplituhedra are
homeomorphic~\cite[Proposition~3.12]{KarpWilliams2019}. Their ambient
correspondence in Lemma~3.10 is in fact a diffeomorphism. Applying it
to the two factors gives the momentum correspondence
\begin{equation}
\label{eq:amp:AB-diffeomorphism}
 \Psi_{\Lambda,\widetilde\Lambda}(Y,\widetilde Y)
       =\bigl(\Lambda(\ker Y),
                    \widetilde\Lambda(\ker\widetilde Y)\bigr),\qquad
 \Phi^{\mathrm B}=\Psi_{\Lambda,\widetilde\Lambda}\circ\Phi^{\mathrm A}.
\end{equation}
It is equivariant by~\eqref{eq:ext:equivariance}, so it restricts to a
diffeomorphism between our symmetric images as well. 
We henceforth identify the
two pictures, write $\SymAmp{G}{\rho}_{N,k}$, and work mainly in the
$B$-picture.

For later smooth statements, let
\begin{equation}
\label{eq:amp:regular-kinematics}
 \mathcal K^{\mathrm{reg}}_{\mathcal U,\mathcal W}
 :=\{(L,T)\in\Gr(2,\mathcal U^\perp)\times\Gr(2,\mathcal W):
       L\perp T,\ L\cap\mathcal W^\perp=0,\ T\cap\mathcal U=0\}.
\end{equation}
Every image point lies here. Since $T\cap\mathcal U=0$, varying $L$
supplies all four normal directions to $LT^T=0$. Thus this locus is
smooth of dimension $2N-4$. The group acts by~\eqref{eq:sym:pair-action}, and
our image lies in its fixed locus. This is a \emph{fixed-external}
kinematic space: neither external parameters nor further particlewise
rescaling quotients enter its dimension.

\paragraph{Brackets and Mandelstams.}
We use the angle brackets $\langle ij\rangle=\det(\lambda_i,\lambda_j)$,
square brackets $[ij]=\det(\widetilde\lambda_i,\widetilde\lambda_j)$,
and momentum sums $Q_I$ and Mandelstams $s_I$ defined in
\eqref{eq:bg:momentum-sums}--\eqref{eq:bg:mandelstams}.
In particular $s_I=\det Q_I$ and $Q_{[N]}=0$; planar $I$ are cyclic
intervals. These are homogeneous functionaries, with frame-independent
zero loci. In the $A$-picture their representatives are
\begin{equation}
\label{eq:amp:A-brackets}
 \langle Y;ij\rangle_{\Lambda}
    :=\det\!\begin{pmatrix}Y\\\Lambda_i\\\Lambda_j\end{pmatrix},
 \qquad
 [\widetilde Y;ij]_{\widetilde\Lambda}
    :=\det\!\begin{pmatrix}\widetilde Y\\\widetilde\Lambda_i\\\widetilde\Lambda_j\end{pmatrix}.
\end{equation}
Here $\Lambda_i,\widetilde\Lambda_i$ are external rows. Each family is
proportional to the corresponding spinor brackets by one common
nonzero scalar, by~\cite[Lemma~3.10, equation~(3.11)]{KarpWilliams2019}.
Consequently the same sum of products in~\eqref{eq:bg:mandelstams}
computes the lifted Mandelstam, up to a common normalization.

\begin{lemma}[Adjacent signs]
\label{lem:amp:adjacent-signs}
The two spinor planes can be oriented so that every twisted consecutive
angle and square bracket is nonnegative on $\SymAmp{G}{\rho}_{N,k}$,
and strictly positive on $\Phi^{\mathrm B}(X^>)$.
\end{lemma}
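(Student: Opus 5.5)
The plan is to reduce to the ordinary momentum amplituhedron, since $X^{\geq}\subset\Gr_{\geq}(k,N)$ and the external datum \eqref{eq:ext:requirements} is in particular an ordinary positive momentum datum (with the extra nesting). For the ordinary map, the sign statement is known: for $C\in\Gr_{\geq}(k,N)$ and $\mathcal U^\perp,\mathcal W$ as above, the twisted consecutive brackets $\langle i\,i{+}1\rangle$ and $[i\,i{+}1]$ (with the seam twist $(-1)^{k-1}$ or its analogue at $N\mid1$, matching the signed shift $S_k$) have constant sign on the image and are nonzero on the image of $\Gr_>(k,N)$; this is the adjacent-bracket positivity underlying \cite[Proposition~3.1]{Galashin2024} and \cite{DamgaardEtAl2019}. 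I would reprove it in the $A$-picture, using \eqref{eq:amp:A-brackets} and \cite[Lemma~3.10]{KarpWilliams2019}: by Cauchy--Binet, $[\widetilde Y;i\,i{+}1]_{\widetilde\Lambda}=\sum_{I}\Delta_I(C)\det(\widetilde\Lambda_{I\cup\{i,i+1\}})$ with $I\subset[N]\setminus\{i,i+1\}$. Sorting $I\cup\{i,i+1\}$ puts $i,i+1$ adjacent, so each summand carries the same sign, and positivity of $\mathcal W$ makes each external minor positive. Across the seam the reordering sign is exactly the twist. The angle bracket is handled the same way with $C^\perp$ and $\Lambda$, using $\mathcal D$ and the positivity of $\mathcal U$ (complementary minors and $A_N$ signs).

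The key steps, in order, are as follows.

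First, fix orientations of $\mathcal U^\perp$ and $\mathcal W$ (that is, frames $\Lambda,\widetilde\Lambda$ with positive maximal minors for $\mathcal W$ and for $\mathcal U$ after duality). This fixes orientations of the two spinor planes through the common proportionality scalar of \cite[Lemma~3.10]{KarpWilliams2019}, which is continuous and nonvanishing on $\mathcal K^{\mathrm{reg}}$.

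Second, apply the Cauchy--Binet expansion above. This gives nonnegativity on all of $\Phi^{\mathrm B}(\Gr_{\geq})$, hence on $\SymAmp{G}{\rho}_{N,k}$.

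Third, prove strictness on $X^>\subset\Gr_>(k,N)$. All $\Delta_I(C)>0$ and all external minors are positive, and the sum is nonempty because $2\le k\le N-2$. The sum is therefore strictly positive.

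The remaining point is that the orientation must be a single global choice, compatible with the symmetry datum; this is what the word ``twisted'' records. For linear $h\in\widehat H$, the signed lifts permute the consecutive pairs and preserve the orientation classes of $\mathcal U,\mathcal W$ up to one common sign, so the twisted brackets are permuted among themselves up to an overall sign. For an exchanging element, \eqref{eq:ext:equivariance} swaps the sectors, which is consistent with the alternating $A_N$ in $T_\gamma$. Since the statement only asks for existence of orientations, no compatibility is logically required. I would nevertheless remark that the symmetry maps angle brackets to (possibly sign-flipped) square brackets.

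I expect the main obstacle to be the careful bookkeeping of the seam and $A_N$ signs for the angle brackets. These pass through $C^\perp$, $\mathcal D$ and the nesting $\mathcal U\subset\mathcal W$, and the aim is a uniform twisted sign rather than one that depends on $i$. I would first verify the sign on the reference point $K_{k,N}$ with the reference external pair of \eqref{eq:ext:mode-windows}. I would then extend it by the ball connectivity of Theorem~\ref{thm:dom:ball}: the bracket is nonvanishing on the connected set $X^>$, so its sign there is constant.
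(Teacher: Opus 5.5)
Your proposal is correct and follows essentially the paper's argument: Cauchy--Binet expansions of both consecutive brackets as sums of nonnegative products of minors, the seam sign $(-1)^{k-1}$, and strict positivity on $X^>$ because all minors are positive and the sums are nonempty for $2\leq k\leq N-2$. The only difference is that the paper expands the angle bracket directly as $\langle i,i+1\rangle\propto\sum_{J}\Delta_J(\mathcal U)\Delta_{J\cup\{i,i+1\}}(C)$, using $L=C\cap\mathcal U^\perp$; this avoids the $C^\perp$/duality and $A_N$ sign bookkeeping you expected to be the main obstacle, and makes your reference-point and connectivity check unnecessary (note also that ``twisted'' refers to this seam sign, not to compatibility with the symmetry).
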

\begin{proof}
For $i<N$, Cauchy--Binet gives, up to positive normalizations independent
of $i$,
\begin{equation}
\label{eq:amp:adjacent-CB}
 \langle i,i+1\rangle\ \propto\!
 \sum_{|J|=k-2}\Delta_J(\mathcal U)\Delta_{J\cup\{i,i+1\}}(C),
 \quad
 [i,i+1]\ \propto\!
 \sum_{|J|=k}\Delta_J(C)\Delta_{J\cup\{i,i+1\}}(\mathcal W),
\end{equation}
where $J$ avoids $\{i,i+1\}$. The terms have the asserted signs; at the
seam use the sign $(-1)^{k-1}$ from~\eqref{eq:sym:dihedral-lifts}. This is also
\cite[Section~2.2 and Appendix~B]{DamgaardEtAl2019}, or the adjacent-sign
part of~\cite[Proposition~3.1]{Galashin2024}.
\end{proof}

\subsection{Submersivity, dimension and kinematic representations}
\label{subsec:external:geometry}

\begin{proposition}[The positive image]
\label{prop:amp:submersion}
The map $\Phi^{\mathrm B}:X^>\to
(\mathcal K^{\mathrm{reg}}_{\mathcal U,\mathcal W})^G$ is a smooth
submersion. Its image $\mathcal M^\circ:=\Phi^{\mathrm B}(X^>)$ is a
connected smooth manifold, open in the fixed component it meets, and
\begin{equation}
\label{eq:amp:manifold-closure}
 \SymAmp{G}{\rho}_{N,k}
       =\overline{\mathcal M^\circ}
\end{equation}
is its compact closure.
\end{proposition}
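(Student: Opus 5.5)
The plan is to reduce submersivity to the non-symmetric momentum map and then average over the group. First I would record that $\Phi^{\mathrm B}$ is $\widehat G$-equivariant on all of $\Gr_{\geq}(k,N)$. For linear $h$ this is immediate from \eqref{eq:ext:equivariance}. For exchanging $a$, one has $a(C)=C^\perp$ on $X$, together with $a\mathcal U^\perp=\mathcal W$ and $a\mathcal W=\mathcal U^\perp$, so $a(C\cap\mathcal U^\perp)=C^\perp\cap\mathcal W$. Hence $\Phi^{\mathrm B}$ maps $X^>$ into $(\mathcal K^{\mathrm{reg}})^G$; regularity comes from \cite[Proposition~3.1]{Galashin2024}. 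Next I would use the ordinary fact, due to \cite{DamgaardEtAl2019,Galashin2024}, that the momentum map $\Phi^{\mathrm B}:\Gr_>(k,N)\to\mathcal K^{\mathrm{reg}}$ is a submersion at positive points. If a clean reference is lacking, I would prove it through the bridge-transfer identity \eqref{eq:bg:bridge-transfer}: the bridges $x_i,y_i$ at $C$ produce rank-one transfers $\lambda_i\widetilde\lambda_{i+1}^T$ and their transposes, and by Lemma~\ref{lem:amp:adjacent-signs} these span the $2N-4$ tangent directions of $\mathcal K^{\mathrm{reg}}$.

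The key step is to transfer submersivity to the fixed loci. Both $X$ near $X^>$ and $(\mathcal K^{\mathrm{reg}})^G$ are fixed-point sets of a finite group acting smoothly; for $X$ this is the linear action \eqref{eq:dom:graph-action} in the chart \eqref{eq:dom:graph-chart}. Fixed loci of compact group actions on manifolds are smooth submanifolds, with tangent space at a fixed point $x$ equal to $(T_x)^{\widehat G}$. Since $d\Phi_C:T_C\Gr\to T_{\Phi(C)}\mathcal K^{\mathrm{reg}}$ is surjective and equivariant, I would apply the averaging projector $P_{\mathrm{inv}}$ from the proof of Theorem~\ref{thm:dom:ball}. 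Given any invariant target vector $w=d\Phi(v)$, the averaged vector $P_{\mathrm{inv}}v$ is invariant and still maps to $w$. So $d\Phi$ restricted to $T_CX=(T_C\Gr)^{\widehat G}$ surjects onto $T(\mathcal K^{\mathrm{reg}})^G$. The delicate point is that the exchanging elements act on $T_C\Gr$ via the graph action \eqref{eq:dom:graph-action}, which involves a transpose. I therefore have to check that the differential intertwines this action with the swap action \eqref{eq:sym:pair-action} on $\mathcal K$. I would verify this by differentiating the identity $\Phi^{\mathrm B}(\gamma\cdot C)=\gamma\cdot\Phi^{\mathrm B}(C)$, which holds on a whole neighbourhood of $C$ in $\Gr(k,N)$: both sides are smooth maps, with $\gamma\cdot C$ defined by \eqref{eq:sym:source-action} for all $C$. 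This is where I expect the main obstacle to lie, namely making sure the equivariance is global on $\Gr(k,N)$ and not only on $X$.

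The remaining statements are then formal. $X^>$ is connected because it is an open ball by Theorem~\ref{thm:dom:ball}, so its image is connected and lies in a single component of the fixed locus. A submersion is an open map, so $\mathcal M^\circ$ is open in that component and is a smooth manifold. For \eqref{eq:amp:manifold-closure}, $X^{\geq}$ is compact by Theorem~\ref{thm:dom:ball} and $\Phi^{\mathrm B}$ is continuous on it, so $\SymAmp{G}{\rho}_{N,k}$ is compact and hence closed, which gives $\overline{\mathcal M^\circ}\subseteq\SymAmp{G}{\rho}_{N,k}$. Conversely, $X^{\geq}=\overline{X^>}$ by the ball homeomorphism \eqref{eq:dom:ball-pair}, so continuity gives $\Phi(X^{\geq})\subseteq\overline{\Phi(X^>)}$.
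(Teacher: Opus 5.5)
Your proposal is correct and is essentially the paper's proof: ordinary submersivity, then averaging a tangent lift of an invariant target vector over the finite group, then connectedness, openness and the closure identity from Theorem~\ref{thm:dom:ball}. The global equivariance you flag as the main obstacle is only a one-line check: for exchanging $a$ and any $C$ in the open domain, $\Phi^{\mathrm B}(aC^\perp)=(aC^\perp\cap a\mathcal W,\,aC\cap a\mathcal U^\perp)=(aT,aL)$ by \eqref{eq:ext:equivariance} and orthogonality of $a$.

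The one real difference is how the ordinary submersion is obtained. The paper gets it directly from the incidence structure: the fiber over $(L,T)$ is open in $\Gr(k-2,T^\perp/L)$, so $\ker d\Phi^{\mathrm B}_C\simeq\Hom(C/L,D/T)$ and the rank is $k(N-k)-(k-2)(N-k-2)=2N-4$. Use that instead of your bridge fallback, which is not a proof as written. First, \eqref{eq:bg:bridge-transfer} re-presents a fixed target with moved external data; it does not compute $d\Phi^{\mathrm B}$ at fixed data. Second, Lemma~\ref{lem:amp:adjacent-signs} gives signs, not spanning.
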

\begin{proof}
For the ordinary amplituhedron the submersion statement is
\cite[Lemma~8.1]{EvenZoharLakrecTessler2025}; its linear-algebra argument
extends to the momentum map. More explicitly, on the open part of the domain where
the two intersections have dimension two, a fiber over $(L,T)$ is an
open subset of $\Gr(k-2,T^\perp/L)$. The incidence construction is an
open Grassmann bundle over~\eqref{eq:amp:regular-kinematics}; equivalently,
at $D=C^\perp$,
\begin{equation}
\label{eq:amp:ordinary-kernel}
 \ker d\Phi^{\mathrm B}_C
       \simeq\Hom(C/L,D/T),\qquad
 \operatorname{rank}d\Phi^{\mathrm B}_C
       =k(N-k)-(k-2)(N-k-2)=2N-4.
\end{equation}
The restriction to fixed points is submersive: average any tangent lift
of an invariant target vector over the finite group. The image is
therefore open and smooth; it is connected by Theorem~\ref{thm:dom:ball}.
That theorem also gives $\overline{X^>}=X^{\geq}$, so continuity and
compactness imply~\eqref{eq:amp:manifold-closure}.
\end{proof}

\paragraph{The two spinor representations.}
All characters in the following formulas are real characters of
$\widehat H$. Put
$c=\chi_C$, $d=\chi_{C^\perp}$, $u=\chi_{\mathcal U}$ and
$w=\chi_{\mathcal W}$. The constant types above and the equivariant
exact sequences
\begin{equation}
\label{eq:amp:spinor-exact-sequences}
 0\longrightarrow L\longrightarrow C\longrightarrow\mathcal U^*
       \longrightarrow0,\qquad
 0\longrightarrow T\longrightarrow\mathcal W\longrightarrow C^*
       \longrightarrow0
\end{equation}
give $a:=\chi_L=c-u$ and $b:=\chi_T=w-c$. These are actual
representations at every nonnegative source point, not generic virtual
differences. In the common cyclic indexing their multiplicities are
\begin{equation}
\label{eq:amp:spinor-modes}
 a_j=\#\{t\in\{0,k-1\}:t\equiv j\pmod q\},\qquad
 b_j=\#\{t\in\{-1,k\}:t\equiv j\pmod q\}.
\end{equation}
Coincident residues are counted twice. Reflection pairing and signs are
read from the same generators as in~\eqref{eq:ext:mode-windows}.
An exchanging element sends $L$ to $T$ and $T$ to $L$; the full signed
group thus acts on their graded pair, not on either plane separately.

\paragraph{Dimension and momentum conservation.}
Let $P_\rho$ denote the dimension of the relevant fixed component of
$\Gr(2,\mathcal U^\perp)\times\Gr(2,\mathcal W)$, and let $q_\rho$
count its independent momentum equations. Since $g^2\in\widehat H$ for
each exchanging $g$, character averaging gives
\begin{subequations}
\label{eq:amp:dimension}
\begin{align}
 P_\rho&=\frac1{|\widehat G|}
       \sum_{h\in\widehat H}\bigl(a(h)d(h)+b(h)c(h)\bigr),
 \label{eq:amp:product-dimension}\\
 q_\rho&=\frac1{|\widehat G|}
       \left(\sum_{h\in\widehat H}a(h)b(h)
             +\sum_{g\in\widehat G\setminus\widehat H}a(g^2)\right),
 \label{eq:amp:momentum-constraints}\\
 \dim\SymAmp{G}{\rho}_{N,k}&=P_\rho-q_\rho.
 \label{eq:amp:target-dimension}
\end{align}
\end{subequations}
Indeed, at $(L,T)$ the tangent space to the ambient product is
\begin{equation*}
 \Hom(L,\mathcal U^\perp/L)\oplus\Hom(T,\mathcal W/T).
\end{equation*}
The two quotient representations have characters $d$ and $c$,
respectively, by~\eqref{eq:amp:spinor-exact-sequences}.
Since duals of real orthogonal representations have the same characters,
a linear element $h\in\widehat H$ has trace
$a(h)d(h)+b(h)c(h)$. An exchanging element interchanges the two tangent
summands, so its trace on their direct sum is zero.

Momentum conservation is the vanishing of the Euclidean pairing
restricted to $L\times T$, viewed as an element of $L^*\otimes T^*$.
A linear element therefore has trace $a(h)b(h)$ on this space.
For an exchanging $g$, write $g_L:L\to T$ and $g_T:T\to L$ for its
restrictions, in orthonormal bases. On a matrix $Z$ representing
a bilinear form, the action and its trace are
\begin{equation*}
 Z\longmapsto g_T Z^T g_L^T,\qquad
 \operatorname{tr}\bigl(g|_{L^*\otimes T^*}\bigr)
   =\operatorname{tr}(g_Tg_L)=a(g^2).
\end{equation*}
This uses the same transpose-trace identity as the proof of
Theorem~\ref{thm:dom:ball}, but without the minus sign arising there
from orthogonal complementation.

The differential of the pairing is surjective: $T\cap\mathcal U=0$
makes $\mathcal U^\perp\to T^*$ surjective, so varying $L$ alone
produces every infinitesimal pairing. Averaging a lift of an invariant
normal vector preserves surjectivity on invariants, as in
Proposition~\ref{prop:amp:submersion}. Character averaging therefore
gives $P_\rho$ and $q_\rho$, and their difference is the image dimension.
Here $g^2$ is the actual signed operator: if $g^2|_L=\pm I$,
then $a(g^2)=\pm2$. 

For an equivalent dimension calculation one may subtract the invariant dimension of
\eqref{eq:amp:ordinary-kernel} from the domain dimension
\eqref{eq:dom:character-dimension}. In particular, irreducible sectors
absent from both spinor planes contribute only to the fiber, not to
the image dimension.
\begin{example}
Ordinary momentum has four
independent equations and dimension $2N-4$; pure exchange gives three
and dimension $2n-3$; either basic Lagrangian exchange gives one and
dimension $2n-1$, where $N=2n$.

For a purely linear group, writing $m_\nu,c_\nu,a_\nu,b_\nu$ for the
real irreducible multiplicities and $e_\nu$ as in
\eqref{eq:dom:real-multiplicities}, the same answer is
\begin{equation}
\label{eq:amp:multiplicity-dimension}
 \dim\SymAmp{G}{\rho}_{N,k}
 =\sum_\nu e_\nu
       \bigl(a_\nu(m_\nu-c_\nu)+b_\nu c_\nu-a_\nu b_\nu\bigr).
\end{equation}
\end{example}
\section{Mandelstam boundaries and factorization subspaces}
\label{sec:boundaries}
This section describes our ansatz for codimension $1$ boundary facets of the symmetric amplituhedra. They correspond to vanishing loci of certain Mandelstam variables, which we specify in the next subsection. We then study the irreducible factors of these Mandelstam variables, the domain cells which map to their zero loci, their factorizations and image dimensions. The factorization follows the architecture of the plabic tangles operad of \cite{PlabicTangles}, only that here the core carries the group $G$ symmetry, and the blobs are exchanged by the group action, and may carry a local symmetry. See Figure~\ref{fig:schematic_bdries} for the illustration.

Keep the external datum of Section~\ref{sec:external-and-map}, and abbreviate
$\Phi=\Phi^{\mathrm B}$, $X^{\geq}=\SymGrNonneg{G}{\rho}{k,N}$ and
$\mathcal M=\SymAmp{G}{\rho}_{N,k}$. We first organize the proposed
codimension-one supports, then construct cells mapping to them. 
We retain \emph{factorization subspace} for a
locus that may include several such cells or their degenerations; it is
not necessarily a linear space.
We prove vanishing and compute dimensions of the constructed images using
ordinary positive external data. Their position on the topological boundary,
and completeness of the proposed list, are separate conjectural assertions
to be addressed in sequels, and will be proven under the simplifying assumption of immanant positivity under which the corresponding results for the usual momentum amplituhedron and ABJM amplituhedron were proven in \cite{Galashin2024,OrenPerlsteinTessler2025}. 
\subsection{Candidate channels, symmetries and reduced functions}
\label{subsec:boundary:channels}
\paragraph{Boundary ansatz: two geometric families.}
For a cyclic interval $I$ with $2\leq |I|\leq N-2$, its
\emph{interval channel} is the unordered cut $[I]=\{I,I^c\}$.
Thus choosing the opposite side does not produce a new channel.
We also identify channels related by the label-group action.

Our \emph{boundary ansatz} is that the channel families described
below, together with their end cases and compatible factorizations
of codimension-one image, exhaust the distinct Mandelstam boundary
supports. 
The list concerns distinct supports, rather than every
Mandelstam variable that may vanish on them.

For a dihedral label group, the ansatz has two geometric families,
defined using the canonical arcs of
Section~\ref{subsec:reflection-geometry}.

\smallskip\noindent
\emph{(a) One-arc intervals.}
Choose a canonical arc and remove its endpoints that are fixed
vertices; endpoints that are only nearly fixed are retained.
A \emph{one-arc interval} is a cyclic interval contained in the
remaining set of labels. It therefore stays on one side of the
neighboring reflection axes. The two-label interval consisting of
a fixed edge is assigned to the next family.

\smallskip\noindent
\emph{(b) Reflection-symmetric channels.}
Here some label reflection $h$ preserves the unordered cut:
\begin{equation*}
 hI=I \qquad\text{or}\qquad hI=I^c.
\end{equation*}
These two possibilities have different geometric meanings.
If $hI=I$, the interval itself is preserved: it is centered at a
fixed vertex or at the midpoint of a fixed edge. We call it a
\emph{reflection-centered interval}. For the principal local
constructions, we begin with such intervals extending into the
canonical arcs on either side of their center, with the length
restrictions specified below. A vertex-centered interval has odd
length, whereas an edge-centered interval has even length.

If $hI=I^c$, the reflection exchanges the two sides of the cut, so
$N$ is even and $|I|=N/2$. When the entire label group preserves
$[I]$, this gives a two-sided half-circle factorization: the two
factors are related by symmetry, not chosen independently.
The condition $hI=I^c$ alone does not imply this for the full group.
For example, if $p>2$, a half-circle overlaps its translate by the
basic rotation. Such overlapping intervals are not used as the
outer blobs of a two-sided factorization; any occurrence in the
ansatz must be accounted for by the actual factorization support.

\paragraph{Local length ranges.}
Write $N=p\ell$, where as usual $p$ is the order of the rotational label
subgroup and $\ell$ is the number of labels in its quotient.
For $\ell\geq2$, the principal local constructions use the maximal
lengths in~\eqref{eq:bdy:geometric-ranges} below.
The second column refers to family~(a); the third refers to the
reflection-centered subcase $hI=I$ of family~(b), not to the
side-swapping half-circles.
The quotient types are those of
\eqref{eq:sym:quotient-types}: for odd $\ell$ both vertex and edge
centers occur, while for even $\ell$ the action is of vertex or
edge type.
\begin{equation}
\label{eq:bdy:geometric-ranges}
\begin{array}{c|c|l}
\text{quotient type}
 &\text{largest one-arc length}
 &\text{largest centered length}\\\hline
\ell=2a+1
 &a
 &\ell\text{ at a vertex};\ \ell-1\text{ at an edge}\\
\ell=2a,\ \text{vertex}
 &a-1
 &\ell-1\text{ at a vertex}\\
\ell=2a,\ \text{edge}
 &a
 &\ell\text{ at an edge}.
\end{array}
\end{equation}
For example, in the even vertex-type case, consecutive fixed
vertices are separated by $a$ boundary edges. Removing these two
endpoints leaves $a-1$ labels available for a one-arc interval.
All lengths must also satisfy $2\leq |I|\leq N-2$, and
complementation may give a shorter representative.

For a nontrivial purely rotational label group, there are no
reflection axes defining canonical arcs. The analogous initial
family consists instead of intervals with $2\leq |I|\leq\ell$.
For the trivial label group there is no additional length cutoff.

\paragraph{Symmetric core--blob factorizations.}
The principal factorizations in our boundary ansatz have the two
forms shown in Figure~\ref{fig:schematic_bdries}. Write $\mathcal O(I)$
for the set of distinct label translates of the chosen interval $I$.
For the one-arc, reflection-centered and purely rotational ranges
above, these intervals are pairwise disjoint, including at the
maximal permitted lengths. They carry the outer blobs, while the
core carries the remaining external labels.

In the connected configuration of Figure~\ref{fig:schematic_bdries}(a),
each blob is attached directly to the core by one internal edge,
representing one-particle amalgamation. There are no edges between
distinct blobs: after contracting each factor to a vertex, the
diagram is a star. In the disconnected configuration of
Figure~\ref{fig:schematic_bdries}(b), the factors are combined by
direct sum, without internal connecting particles. These are two
factorization mechanisms, not a one-to-one assignment to the two
geometric channel families.

The factors are not independent. The core carries its induced
$G$-action, each blob and connecting edge respects its stabilizer,
and symmetry-related blobs are carried to one another by the
prescribed signed action. Thus only one blob's data is chosen per
orbit. A symmetry exchanging the two sides of a half-channel
relates their factors; the cut and its internal connection are
counted only once. Soft limits replace blobs by single lollipops;
a degenerate core may have only connector labels, or may disappear
in a direct sum. These end cases do not create shared external
particle labels between the actual factors.

These geometric conditions and length ranges organize the principal
constructions; they do not assert that every listed channel
contributes a factorization subspace with codimension-one image:
this depends on a finer, case-dependent analysis of the rank, the
signed symmetry action and the admissible factorization.
Nor do they list every interval whose Mandelstam vanishes on one
of the same supports. In some end cases, additional Mandelstam variables vanish on the
same codimension-one locus, without defining new boundary components, and the intervals indexing these vanishings may overlap one another,
even though the external-label sets of the actual factorization
blocks remain disjoint.

\begin{figure}
    \centering
    \includegraphics[width=1
\linewidth]{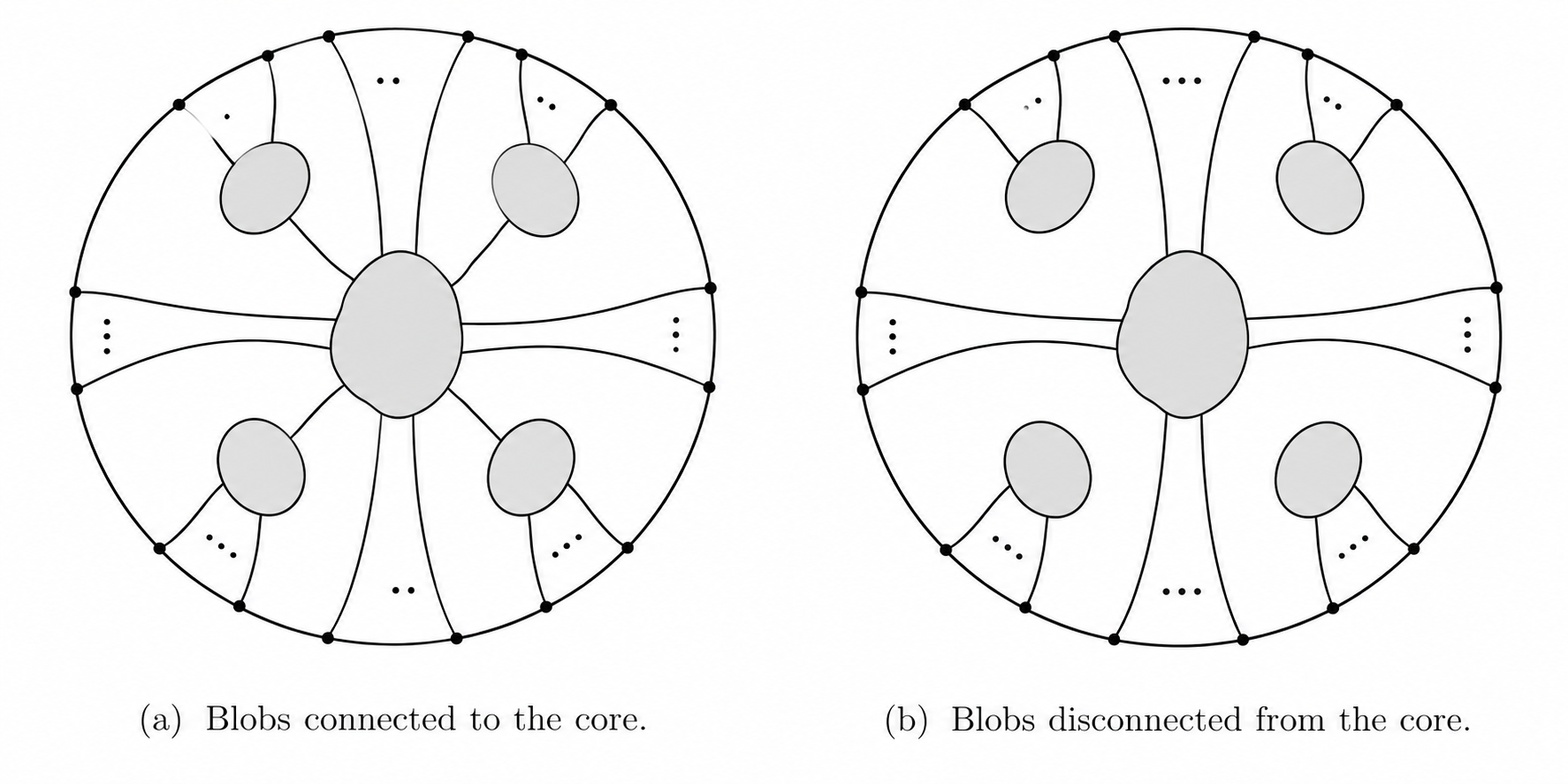}
    \caption{The two (schematic) types of mandelstam boundaries. In both figures we have a core, which represents the big cell of the nonnegative $G-$symmetric Grassmannian, and a $G-$symmetric collection of blobs. Blobs and connecting edges may carry a symmetry, which will always be a subsgroup of $\mathbb{Z}_2\times\mathbb{Z}_2,$ where the first factor is a reflection and the second is the exchange factor, and in this case represent the big cell of the corresponding symmetric nonnegative Grassmannian. Blobs which differ by rotation/reflection/exchange or a combination thereof correspond to Grassmannian factors which differ by the same type of symmetry. In particular, in the combinatorial level exchange corresponds to change of colors. In Figure (a) the blobs are connected to the core by an edge, which in the geometric level corresponds to a fiber product in the level of Grassmannians, and amalgamation in the level of vector spaces. The connecting edge may also carry a non trivial stabilizer. In Figure (b) they are disconnected, and this corresponds geometrically to a product (in the level of Grassmannian) and direct sums (in the level of vector spaces). The schematic generic families above also have end cases. The soft limit is an end case of the disconnected digaram, which each blob is a single lollipop. There is also the end case of a degenerate core.}
    \label{fig:schematic_bdries}
\end{figure}

\paragraph{The symmetry of a Mandelstam.}
\paragraph{Symmetry on momentum matrices.}
Choose bases of the two spinor planes, represented by the rows of
$\lambda$ and $\widetilde\lambda$. In \eqref{eq:sym:pair-action},
$T_\gamma$ permutes labels with the same sign on both spinor columns,
so this sign cancels in $p_i=\lambda_i\widetilde\lambda_i^T$.
Exchange transposes this product. Write $R_{\lambda,\gamma}$ and
$R_{\widetilde\lambda,\gamma}$ for the two-by-two matrices expressing
the symmetry in these bases after accounting for the label permutation
and signs; exchanging elements map between the two sectors.
Their representation types are fixed by
\eqref{eq:amp:spinor-exact-sequences}--\eqref{eq:amp:spinor-modes}.
The induced momentum action and its constraint at a channel are
\begin{subequations}
\label{eq:bdy:momentum-action}
\begin{align}
 Q_{g_\gamma I}&=\mathsf R_\gamma(Q_I),\qquad
 \mathsf R_\gamma(Q)=
 \begin{cases}
 R_{\lambda,\gamma}Q R_{\widetilde\lambda,\gamma}^T,&\epsilon(\gamma)=0,\\
 R_{\lambda,\gamma}Q^T R_{\widetilde\lambda,\gamma}^T,&\epsilon(\gamma)=1,
 \end{cases}
 \label{eq:bdy:Q-equivariance}\\
 \widehat G_{[I]}&=\{\gamma\in\widehat G:g_\gamma I\in\{I,I^c\}\},\qquad
 \mathsf R_\gamma(Q_I)=\delta_I(\gamma)Q_I.
 \label{eq:bdy:channel-stabilizer}
\end{align}
\end{subequations}
Here $\delta_I:\widehat G_{[I]}\to\{\pm1\}$ equals $+1$ when
$g_\gamma I=I$ and $-1$ otherwise. The last identity follows from
\eqref{eq:bdy:Q-equivariance} and $Q_{I^c}=-Q_I$, by momentum conservation.
Thus $s_I$, $s_{I^c}$ and $s_{g_\gamma I}$ have the same vanishing locus
(and agree in compatible normalizations).

\paragraph{Local forms of the partial momentum.}
The constraints \eqref{eq:bdy:channel-stabilizer} are linear in $Q_I$.
If the distinct translated intervals partition the labels, momentum
conservation additionally gives
\begin{equation}
\label{eq:bdy:orbit-sum}
 \sum_{J\in\mathcal O(I)}Q_J=0.
\end{equation}
The setwise symmetries alone give the following normal forms, before
imposing additional orbit relations or side-swapping symmetries;
see Appendix~\ref{app:bdy:determinant-types} for the proof.
Overall nonzero scalar factors in the determinants are suppressed.
\begin{equation}
\label{eq:bdy:determinant-types}
\renewcommand{\arraystretch}{1.15}
\begin{array}{l|c|c}
\text{setwise symmetry of }I&\text{allowed }Q_I& s_I\\\hline
\text{none}&\Mat_{2,2}&\det Q_I\\
\text{pure exchange}&\operatorname{Sym}_2&q_{11}q_{22}-q_{12}^2\\
\text{linear reflection}&\operatorname{diag}(\mu_I^-,\mu_I^+)&\mu_I^+\mu_I^-\\
\text{vertex reflection with exchange}&\operatorname{Sym}_2&q_{11}q_{22}-q_{12}^2\\
\text{edge reflection with exchange}&\mathbb R\!\left(\begin{smallmatrix}0&1\\-1&0\end{smallmatrix}\right)&\mu_I^2\\
\text{linear vertex reflection and pure exchange}&\operatorname{diag}(\mu_I^-,\mu_I^+)&\mu_I^+\mu_I^-\\
\text{linear edge reflection and pure exchange}&\mathbb R\!\left(\begin{smallmatrix}0&1\\1&0\end{smallmatrix}\right)&-\mu_I^2.
\end{array}
\end{equation}
Here $\operatorname{Sym}_2$ consists of matrices
$Q_I=\left(\begin{smallmatrix}q_{11}&q_{12}\\q_{12}&q_{22}\end{smallmatrix}\right)$.
In the linear-reflection row, $hI=I^c$ instead gives an off-diagonal
matrix, whose determinant is minus the product of its two entries.
The $\mu$ superscripts are calibrated by the connector character in
\eqref{eq:bdy:branch-character}, not by an intrinsic ordering of the factors.
The reflected Lagrangian square is \eqref{eq:bg:reflected-squares}.
\paragraph{Polynomial factorization and specializations.}
To determine the polynomial factorization of $s_I$, restrict the
determinant to the full constraints
\eqref{eq:bdy:channel-stabilizer}, including side-swapping symmetries,
and impose \eqref{eq:bdy:orbit-sum} when the translated intervals
partition the labels.

For a two-particle interval, we always have
$s_{\{i,i+1\}}=\langle i,i+1\rangle[i,i+1]$.
The equations $\langle i,i+1\rangle=0$ and $[i,i+1]=0$ are the
two \emph{adjacent chiral branches}.
This factorization holds even when the determinant of a general
matrix in the allowed momentum space is irreducible.
Pure exchange identifies the two brackets up to sign and
normalization, so their product becomes a square up to a nonzero
scalar.

More generally, if the constraints restrict the allowed momentum
space to a line, write $Q_I=\mu_IQ_0$ with $Q_0$ fixed. Then
\begin{equation}
\label{eq:bdy:reduced-square}
 s_I=\det(Q_0)\mu_I^2.
\end{equation}
When $\det Q_0\ne0$, the reduced vanishing equation is $\mu_I=0$.
If $\det Q_0=0$, the Mandelstam is identically zero and does not
define a boundary.
An example occurs for a \emph{maximal centered sector}: a
reflection-centered interval of length $\ell=N/p$, when permitted
by \eqref{eq:bdy:geometric-ranges}. Its $p$ rotational translates
partition the labels. In a regular linear $D_p$ model with
$p\geq2$ and $p\mid k$, the reflection and orbit-sum constraints
give the square specialization above, subject to $\det Q_0\ne0$.

This must be distinguished from a direct-sum factorization in the
domain, which forces $Q_I=0$ on its image but need not make $s_I$
a square on the surrounding kinematic space.
For example, in even edge type, a \emph{complete one-arc interval}
contains all labels of a canonical arc between neighboring fixed
edges, including its nearly fixed endpoints. Its length is
$\ell/2$, and its $2p$ distinct translates partition the labels.
In the linear model, when $2p\mid k$, the direct-sum construction
has rank $k/(2p)$ on each arc block and gives $Q_J=0$ for every
block $J$. Nevertheless, the determinant on the surrounding
allowed momentum space can remain nonsquare; a traceless normal
form, for instance, has determinant $-x^2-yz$.
Thus a direct-sum realization, a square polynomial, and an image
of codimension one are three different assertions.

\paragraph{Which families occur.}
We now summarize the channel families in the boundary ansatz,
subject to the length ranges above.
The action families are those of
Section~\ref{subsec:symmetry-families}; we abbreviate the chosen
dihedral label action $D_{p,c}$ by $D_p$.
For the mixed families, put
$G_\chi:=\Gamma_\chi(D_{p,c})$ with $\chi(r)=1$, so $p$ is even.

In the table, an entry $s_I$ denotes the Mandelstam equation,
with the adjacent chiral branches and further specializations
just described; it does not assert irreducibility.
The notation $\mu_I^\pm$ lists the two possible character-factor
equations $\mu_I^+=0$ and $\mu_I^-=0$.
A \emph{quadric} means the symmetric determinant
$q_{11}q_{22}-q_{12}^2$, and a \emph{square} means a nonzero scalar
multiple of $\mu_I^2$, whose reduced equation is $\mu_I=0$.
The centered column concerns intervals preserved setwise by a
reflection.
\begin{equation}
\label{eq:bdy:family-list}
\renewcommand{\arraystretch}{1.18}
\begin{array}{l|c|l}
\text{action}
 &\text{one-arc/rotational channels}
 &\text{centered channels}\\\hline
1,\ C_p\text{ linear}
 &s_I
 &\text{---}\\
\langle\mathcal D\rangle,\ C_p\times E
 &s_I,\ |I|\text{ odd}
 &\text{---}\\
\langle r\mathcal D\rangle\ (p\text{ even})
 &s_I
 &\text{---}\\
D_p\text{ linear}
 &s_I
 &\mu_I^\pm;\ \text{square at }|I|=\ell,\ p\mid k\\
G_R=\langle r,s\mathcal D\rangle
 &s_I
 &\text{vertex: quadric; edge: square}\\
G_\times=D_p\times E
 &s_I,\ |I|\text{ odd}
 &\text{vertex: }\mu_I^\pm;\ \text{edge: square}\\
G_\chi\text{ mixed}
 &s_I
 &\text{local cases described below}.
\end{array}
\end{equation}
For a mixed group, the reflection stabilizing $I$ determines
the entry: a linear reflection gives the character factors
$\mu_I^\pm$, a vertex reflection with exchange gives the
symmetric determinant, and an edge reflection with exchange
gives a square, as in \eqref{eq:bdy:determinant-types}.
A single linear reflection is included by $p=1$; the Karpman
model is the $p=1$ edge-reflection specialization of $G_R$.

Exchange also relates the factors on distinct image intervals,
rather than supplying independent choices.
An augmented blob of rank $\kappa$ on $|I|+1$ labels is carried
to its positive dual of rank $|I|+1-\kappa$; at an adjacent channel,
the angle and square branches are interchanged.
The odd-length entries in the one-arc/rotational column refer
to one-particle factorizations: pure exchange fixes the augmented
blob, so its $|I|+1$ labels must have even cardinality.
Even-length direct sums and terminal end cases are considered
separately below, not excluded by this rule.

The ordinary momentum and ABJM entries agree with their known
boundary constructions~\cite{Galashin2024,OrenPerlsteinTessler2025};
the Karpman specialization is that of~\cite{KroviTessler2026}.
For the remaining models, the table specifies candidate families.
A factorization contributes to the codimension-one ansatz only
when its ranks and signed representations admit a nonnegative
source plane and its image has codimension one, as computed in
\eqref{eq:bdy:image-dimension} below.
In particular, two algebraic factors need not give two boundary
components, and a direct-sum construction need not have
codimension-one image.
\paragraph{End cases.}
The adjacent chiral branches, complete one-arc intervals and maximal
centered sectors described above are end cases of the same two channel
families. Their factorizations retain the core--blob forms of
Figure~\ref{fig:schematic_bdries}; the core may carry only connecting
labels, or disappear altogether in a direct-sum configuration.
As explained above, the odd-length condition in the presence of pure
exchange concerns one-particle factorizations. It does not exclude
even-length direct-sum constructions, whose image dimensions must
be computed separately.

We also include two kinds of small factors in the constructions below.
An allowed orbit of reflection-fixed labels may consist of loops or
coloops. Their particle momenta vanish, and deleting these labels
leaves a smaller symmetric core; these are the soft limits.
Alternatively, a reflection-centered interval of three particles may
carry a blob of rank $1$ or $3$ on four labels---the three external
labels together with the connecting label---when its local symmetry
permits this rank. Such a blob still participates in one-particle
amalgamation, and its external particle momenta need not vanish.

\paragraph{Several Mandelstams on the same support.}
In some end cases, additional Mandelstam variables vanish on the
same codimension-one locus, without defining new boundary components.
Their indexing intervals may overlap, although the external-label
sets of the actual factorization blocks remain disjoint.
The following two identities explain how this can happen.

\smallskip\noindent
\emph{Deleting an endpoint of a zero-momentum interval.}
Suppose $K$ is a cyclic interval with $Q_K=0$, as in a direct-sum
factorization. Removing an endpoint $i$ leaves another cyclic interval,
and
\begin{equation*}
 Q_{K\setminus\{i\}}=-p_i,
 \qquad
 s_{K\setminus\{i\}}=\det(-p_i)=0,
\end{equation*}
because every particle momentum has rank at most one.
Thus the additional Mandelstam vanishes automatically on the
zero-momentum locus, rather than imposing a further condition.

\smallskip\noindent
\emph{Half-circles meeting reflection-fixed vertices.}
Let $N=2m$, and let a linear reflection $h$ fix two opposite labels
$a,c$. Write the cyclic label set as
\begin{equation*}
 [N]=\{a\}\sqcup B\sqcup\{c\}\sqcup D,
 \qquad |B|=|D|=m-1,\qquad hB=D.
\end{equation*}
Here $B,D$ are the two open intervals between the fixed labels.
The half-circle $I=\{a\}\cup B$ and its reflected image
$hI=\{a\}\cup D$ overlap at $a$.

Choose reflection eigenbases in both spinor planes, ordered so that
$R_{\lambda,h}=R_{\widetilde\lambda,h}
=D_2:=\operatorname{diag}(1,-1)$.
Then $\mathsf R_h(Q)=D_2QD_2$, and the fixed-particle momenta
$p_a,p_c$ are diagonal. Equivariance and momentum conservation give
\begin{equation*}
 Q_I+\mathsf R_h(Q_I)
   =Q_I+Q_{hI}=p_a-p_c.
\end{equation*}
Writing $\mu,\nu$ for the two off-diagonal entries of $Q_I$, we obtain
\begin{equation}
\label{eq:bdy:half-arc-correction}
 \begin{gathered}
 Q_I=\frac{p_a-p_c}{2}
       +\begin{pmatrix}0&\mu\\\nu&0\end{pmatrix},\\
 s_I=-\mu\nu-\frac14s_{\{a,c\}},
 \qquad
 s_B=-\mu\nu+\frac14s_{\{a,c\}}.
 \end{gathered}
\end{equation}
Indeed, $Q_B=Q_I-p_a$, and the determinant formulas follow from
$\det p_a=\det p_c=0$.
Here $s_{\{a,c\}}=\det(p_a+p_c)=\langle ac\rangle[ac]$;
the pair $\{a,c\}$ need not itself be a cyclic interval.

The signs of the signed reflection at $a,c$ have ratio
$(-1)^{k-1}$. For odd domain rank $k$, the endpoint spinors therefore
belong to the same reflection eigenline in each sector, so
$\langle ac\rangle=[ac]=0$.
Consequently, adjoining either or both fixed endpoints to $B$
does not change its Mandelstam:
\begin{equation}
\label{eq:bdy:half-arc-alias}
 s_{\{a\}\cup B}
   =s_B
   =s_{B\cup\{c\}}
   =s_{\{a,c\}\cup B}
   =-\mu\nu.
\end{equation}
These identities hold before imposing any Mandelstam vanishing.
Thus, whenever the shorter interval $B$ describes one of the
codimension-one factorization images, the other three intervals
give additional Mandelstams vanishing on that same locus.
For even $k$, the endpoint term $s_{\{a,c\}}$ is not forced to
vanish, and these equalities need not hold.

Only the single linear reflection was used in this calculation.
If the group also contains rotations, the conditions from the
entire orbit of $B$ must still be imposed. 
\subsection{Factorization subspaces, vanishing and codimension}
\subsubsection{The generic families}
\label{subsec:boundary:subspaces}
The codimension $1$ boundaries of the amplituhedron come in two generic families, whose corresponding domain cells are depict in Figure~\ref{fig:schematic_bdries} -- the one-particule fiber product and the direct sums. In this subsection we will study these families. First their domain structure and then finer questions regarding the codimensions of their images in the target. Appendix~\ref{app:bdy:determinant-types} analyzes which irreducible factors of mandelstam variables contain their images in their zero loci.  
\paragraph{One-particle fiber products.}
Consider a connected core--blob configuration as in
Figure~\ref{fig:schematic_bdries}(a), with disjoint external-label
sets for its outer blobs. Let $\mathcal T$ denote its star-shaped
graph of factors: the central vertex represents the core, the
outer vertices the symmetry-related blobs, and each edge their
single internal connection. This is the graph between factors,
not the internal plabic graph of any factor.
Each vertex carries a cyclically ordered source plane $C_v$ on its
external labels and one extra label for each incident internal edge.
Glue along the internal labels by the signed operation
\eqref{eq:bg:fiber-product}. Positive gluing is the usual
Grassmannian amalgamation~\cite{AHBCGPT2016}; equivalently, its maximal
minors are the signed Cauchy--Binet sums with the planar gluing signs.
We take all compatible nonnegative factors and require symmetry-related
copies to be carried to one another by the prescribed signed action. Denote the resulting subspace of
$X^{\geq}$ by $\mathcal S_{\mathcal T}$. A piece specified by one complete
positroid support is a symmetric positroid cell. The collective notation
does not identify a union of supports with one cell.
Its regular part has the specified source projection ranks, connectivity
one across each blob--core cut, and no vector supported solely on deleted
internal coordinates. Note that these are conditions in the domain. The nonzero-momentum
statement will follow from positivity below.

\smallskip
\emph{Ranks and blob's cardinality.}
For such a star, let $m$ be the common length of its pairwise disjoint
outer intervals $J\in\mathcal O(I)$, and let $b_J$ be the rank of the
augmented blob on $J\sqcup\{*_J\}$. Its core has
\begin{equation}
\label{eq:bdy:core-counts}
 N_c=N-|\mathcal O(I)|(m-1),\qquad
 k_c=k-\sum_{J\in\mathcal O(I)}(b_J-1),\qquad
 b_{g_\gamma J}=\begin{cases}b_J,&\epsilon(\gamma)=0,\\m+1-b_J,&\epsilon(\gamma)=1.\end{cases}
\end{equation}
Thus $|\mathcal O(I)|=p$ for a rotational orbit or a centered dihedral
orbit, and $2p$ for a generic one-arc dihedral orbit. In exchange-bearing
cases the compatible core remains at middle rank. The formula includes
a connector-only core, without assigning it a fictitious nonterminal
amplituhedron dimension. In the two-sided half-channel case its
bivalent core can be suppressed, leaving the single gluing between
$I$ and $I^c$; complementary representatives do not supply two
independent cuts or two independent internal scales.

\smallskip
\emph{Blob types.}
A proper interval is stabilized by no nontrivial rotation and by at most
one label reflection. Its possible setwise stabilizers, including exchange,
are therefore $1$, $\langle\mathcal D\rangle$, $\langle h\rangle$,
$\langle h\mathcal D\rangle$, and $\langle h,\mathcal D\rangle$.
The representative blobs consequently belong to precisely the small
family already introduced:
\begin{equation}
\label{eq:bdy:blob-dictionary}
\begin{array}{c|l}
\text{stabilizer}&\text{blob domain}\\\hline
1&\text{ordinary nonnegative Grassmannian}\\
\langle\mathcal D\rangle&\text{orthogonal/ABJM}\\
\langle h\rangle&\text{linear reflection, Example~\ref{ex:domain:linear-reflection}}\\
\langle h\mathcal D\rangle&\text{Karpman or vertex-exchange, Example~\ref{ex:domain:exchange-vertex}}\\
\langle h,\mathcal D\rangle&\text{reflection-fixed orthogonal, Example~\ref{ex:domain:reflection-orthogonal}}.
\end{array}
\end{equation}
The action is on the augmented label set when a connector is present.
If exchange stabilizes the blob, $2b=m+1$. In particular the extremal
$\Gr(1,4)$ and $\Gr(3,4)$ blobs cannot themselves be exchange-fixed;
they can be exchanged as distinct partners. A setwise exchange-edge
block in the square construction below is a direct-sum block, with no
added connector. The central factor can retain the full larger group.

\paragraph{Direct sums and soft limits.}
For $J\subseteq[N]$, write $C[J]:=C\cap\mathbb R^J$ for the subspace
of vectors in $C$ supported on $J$. By contrast, $C_J$ denotes the
submatrix on columns $J$, whose row space is the projection $\pi_J(C)$.
Thus $C[J]=\ker(\pi_{J^c}|_C)$ and
$\dim C[J]=k-\operatorname{rank}C_{J^c}$.
We identify a vector supported on $J$ with its $J$-coordinates, and
use the same conventions for other planes.
For a symmetry-stable noncrossing coordinate partition $\mathcal P$
and prescribed block ranks $q_J$, set
\begin{equation}
\label{eq:bdy:direct-sum-space}
 \mathcal S_{\mathcal P}^{\oplus}
 =\left\{C\in X^{\geq}:C=\bigoplus_{J\in\mathcal P}C[J],
                        \ \dim C[J]=q_J\right\}.
\end{equation}
The positive embeddings include the usual cyclic signs.
In the disconnected core--blob configuration of
Figure~\ref{fig:schematic_bdries}(b), the remaining external labels
form the core block, when nonempty. The core carries the induced
$G$-symmetry, just as in the one-particle construction.

The blobs obey the same symmetry relations as in that construction:
each representative belongs to the nonnegative Grassmannian fixed
by its full setwise stabilizer, with the local symmetry types listed
in \eqref{eq:bdy:blob-dictionary}. The other blobs in its orbit are
obtained by the prescribed signed relabellings, together with
duality when exchange is present. Here these actions are on the
block $J$ itself; no connecting label is added.

Accordingly, ranks propagate as $q_{g_\gamma J}=q_J$ when
$\epsilon(\gamma)=0$ and $q_{g_\gamma J}=|J|-q_J$ when
$\epsilon(\gamma)=1$. An exchanging element stabilizing $J$
therefore requires $2q_J=|J|$.
Thus the independent choices are the symmetric core, if present,
and one representative blob per orbit, each subject to its
local symmetry.

The soft subspace is the special case with singleton blocks:
\begin{equation}
\label{eq:bdy:soft-space}
 \mathcal S_{O_0,O_1}^{\mathrm{soft}}
 =\{C\in X^{\geq}:C_i=0\ (i\in O_0),\ e_i\in C\ (i\in O_1)\},
 \quad (N_c,k_c)=(N-|O_0|-|O_1|,\ k-|O_1|).
\end{equation}
The assignments obey \eqref{eq:atoms:lollipop-rule}.
Both loops and coloops are retained. Pure exchange and an exchanging
vertex stabilizer forbid an individual lollipop; mixed rotation-exchange
can alternate loops and coloops around a linearly stabilized orbit.
There is no nonzero internal momentum in either
\eqref{eq:bdy:direct-sum-space} or \eqref{eq:bdy:soft-space}.

\subsubsection{Vanishing of Mandelstam variables}
For cyclic intervals in the nonnegative domain, the vanishing
implication below also follows from Galashin's double-dimer formula
and connectivity criterion
\cite[Proposition~6.2 and Lemma~6.24]{Galashin2024}.
We give a direct incidence proof of the rank bound, requiring
neither positivity nor symmetry.
\begin{lemma}[Incidence vanishing]
\label{lem:bdy:vanishing}
Let $C\in\Gr(k,N)$ and let $L\subset C$, $T\subset C^\perp$ be the two
spinor planes. With $\pi_I$ the coordinate projection, put
\begin{equation}
\label{eq:bdy:connectivity}
 \kappa_C(I)=\operatorname{rank}C_I+\operatorname{rank}C_{I^c}-k
 =k-\dim(C\cap\mathbb R^I)-\dim(C\cap\mathbb R^{I^c}).
\end{equation}
Then $\operatorname{rank}Q_I\leq\kappa_C(I)$.
Consequently a one-particle cut has $s_I=0$, a direct-sum cut has
$Q_I=0$, and every label in \eqref{eq:bdy:soft-space} has $p_i=0$.
\end{lemma}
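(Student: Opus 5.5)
The plan is to write $Q_I$ as a composition of linear maps and bound its rank by an intermediate space of dimension $\kappa_C(I)$. Choose bases of the two spinor planes, so that $\lambda$ is a $2\times N$ matrix with rows spanning $L\subset C$ and $\widetilde\lambda$ is a $2\times N$ matrix with rows spanning $T\subset C^\perp$. Then
\[
 Q_I=\sum_{i\in I}\lambda_i\widetilde\lambda_i^T=\lambda P_I\widetilde\lambda^T,
\]
where $P_I$ is the coordinate projection onto $\mathbb R^I$. In invariant terms, $Q_I$ is the bilinear form $(x,y)\mapsto\langle\pi_I x,\pi_I y\rangle$ on $L\times T$. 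The aim is to show that this pairing already factors through a space of dimension at most $\kappa_C(I)$.

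The key observation uses $L\subset C$ and $T\subset C^\perp$. For $x\in C$ and $y\in C^\perp$ we have $\langle\pi_Ix,\pi_Iy\rangle=-\langle\pi_{I^c}x,\pi_{I^c}y\rangle$. Hence the form $\beta(x,y)=\langle\pi_Ix,\pi_Iy\rangle$ on $C\times C^\perp$ vanishes in two cases:
\begin{itemize}
 \item if $x\in C\cap\mathbb R^{I^c}$, then $\pi_Ix=0$;
 \item if $x\in C\cap\mathbb R^I$, then $\pi_{I^c}x=0$, so $\beta(x,y)=-\langle\pi_{I^c}x,\pi_{I^c}y\rangle=0$.
\end{itemize}
Therefore the linear map $C\to(C^\perp)^*$, $x\mapsto\beta(x,\cdot)$, has kernel containing $C[I]+C[I^c]$. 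This sum is direct, since the supports are disjoint. So its rank is at most
\[
 k-\dim C[I]-\dim C[I^c]=\kappa_C(I).
\]
Restricting $\beta$ to $L\times T$ can only lower the rank, which gives $\operatorname{rank}Q_I\le\kappa_C(I)$. The identity between the two expressions for $\kappa_C(I)$ in \eqref{eq:bdy:connectivity} is just $\dim C[J]=k-\operatorname{rank}C_{J^c}$, already recorded before \eqref{eq:bdy:direct-sum-space}.

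The consequences then follow by computing $\kappa_C(I)$ in each situation.
\begin{itemize}
 \item \emph{One-particle cut.} The connectivity across the cut is one, i.e.\ $\kappa_C(I)=1$. This is the regularity condition stated for $\mathcal S_{\mathcal T}$, or it follows from amalgamation along a single internal label, by counting $\operatorname{rank}C_I+\operatorname{rank}C_{I^c}=k+1$ from \eqref{eq:bdy:core-counts} and \eqref{eq:bg:ordinary-factorization-counts}. So $Q_I$ has rank at most $1$ and $s_I=\det Q_I=0$.
 \item \emph{Direct-sum cut.} Here $I$ is a union of blocks, so $C=C[I]\oplus C[I^c]$ and $\kappa_C(I)=0$, giving $Q_I=0$.
 \item \emph{Soft labels.} Take $I=\{i\}$. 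A loop has $C_i=0$, so $\operatorname{rank}C_i=0$. A coloop has $e_i\in C$, so $\dim C[\{i\}]=1$. In both cases $\kappa_C(\{i\})=0$, hence $p_i=0$.
\end{itemize}

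The main point to handle carefully is the one-particle case. We must check that the fiber-product construction really gives $\kappa_C(I)\le1$ for every point, including degenerate ones, and not only for generic points. I would prove this directly from \eqref{eq:bg:fiber-product}. Write $C=C_L\star_*C_R$ and consider the vectors of $C_L$ whose $*$-coordinate is $0$. These restrict to elements of $C[I]$, and likewise on the $R$ side. There is then at most one extra dimension, coming from a vector glued through the nonzero value $u$ of the internal coordinate. This shows $k-\dim C[I]-\dim C[I^c]\le1$. The argument requires neither positivity nor symmetry, as claimed.
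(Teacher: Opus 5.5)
Your proof is correct and follows the paper's argument. The paper also uses the partial pairing $\langle\pi_Ix,y\rangle$ on $C\times C^\perp$, notes that its left kernel is exactly $C[I]\oplus C[I^c]$ (you only need containment), and restricts to $L\times T$; your explicit check that a one-internal-edge fiber product has $\kappa_C(I)\le1$ at every point fills in a step the paper only asserts, and for soft labels the paper simply observes $\lambda_i=0$ at loops and $\widetilde\lambda_i=0$ at coloops rather than computing $\kappa_C(\{i\})=0$.
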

\begin{proof}
The pairing $\beta_I:C\times C^\perp\to\mathbb R$,
$\beta_I(x,y)=\langle\pi_Ix,y\rangle$, has left kernel
$(C\cap\mathbb R^I)\oplus(C\cap\mathbb R^{I^c})$:
$\pi_Ix\in C$ is exactly this condition. Its rank is therefore
$\kappa_C(I)$. Restricting to $L\times T$ gives $Q_I$ and cannot
increase rank. A blob--core cut joined by one internal edge has
connectivity at most one; a direct-sum block has connectivity zero. A loop gives $\lambda_i=0$, while a coloop
gives $\widetilde\lambda_i=0$.
\end{proof}
The lemma does not rely on positivity. 
Internal positive operations which preserve these cut ranks preserve
vanishing as well; this includes hyperbolic layers supported entirely
on one side. Note that it does not say that spinors are unchanged when the
source point varies with external data held fixed.

\paragraph{Positivity forces the required ranks.}
We first record a pairing fact and its consequence for the images
of the spinor planes under quotient maps. In our applications,
$\mathcal U=(\operatorname{colspan}\Lambda)^\perp$ and
$\mathcal W=\operatorname{colspan}\widetilde\Lambda$ are the fixed
positive external spaces of \eqref{eq:ext:positive-flag}.

\begin{lemma}[Positive pairing and supported kernels]
\label{lem:bdy:positive-pairing}
If $P\in\Gr_{\geq}(d,N)$ and $E\in\Gr_{>}(r,N)$, the Euclidean
pairing $P\times E\to\mathbb R$ has rank $\min(d,r)$.

Let $2\leq k\leq N-2$, $C\in\Gr_{\geq}(k,N)$ and
$\mathcal U\in\Gr_{>}(k-2,N)$, and put
$L=C\cap\mathcal U^\perp$.
For a surjective linear map $q:C\twoheadrightarrow F$, write
$K=\ker q$. Let $S_k$ be the signed cyclic shift of
\eqref{eq:sym:dihedral-lifts}. If
$S_k^jK\in\Gr_{\geq}(\dim K,N)$ for some $j\in[N]$, then
\begin{equation}
\label{eq:bdy:positive-kernel-rank}
 \dim q(L)=\min(2,\dim F).
\end{equation}

For $\mathcal W\in\Gr_{>}(k+2,N)$, put $D=C^\perp$ and
$T=D\cap\mathcal W$.
A surjective map $\widetilde q:D\twoheadrightarrow\widetilde F$
similarly satisfies
$\dim\widetilde q(T)=\min(2,\dim\widetilde F)$,
provided
$S_{N-k}^jA_N(\ker\widetilde q)$ is nonnegative for some integer $j$.
\end{lemma}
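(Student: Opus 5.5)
The plan has three parts: prove the pairing statement by Cauchy--Binet, reduce the kernel statement to a second application of that statement, and obtain the $T$-version by positive duality.

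\emph{Pairing.} Choose a nonnegative row representative $P$ of the $d$-plane and a positive representative $E$ of the $r$-plane. Suppose first $d\leq r$. For any $d$-subset $F$ of the rows of $E$, Cauchy--Binet gives
\[
\det(PE_F^T)=\sum_{|J|=d}\Delta_J(P)\Delta_J(E_F).
\]
The difficulty is that $E_F$ need not be positive. So instead of choosing rows, we show directly that the pairing has trivial left kernel. Suppose $x\in P$ is nonzero and $x\perp E$. Then $x\in E^\perp$. The plane $A_NE^\perp$ is positive of rank $N-r$ by the duality $\mathcal D$. Thus $x$ lies in $P\cap E^\perp$, with $P$ nonnegative and $E^\perp=A_N(\text{positive})$. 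The standard sign-variation criterion (Gantmacher--Krein, as used by Karp) says:
\begin{itemize}
\item nonzero vectors of a nonnegative $d$-plane have at most $d-1$ sign changes;
\item nonzero vectors of $A_N(\text{positive }(N-r)\text{-plane})$ have at least $r$ sign changes.
\end{itemize}
Since $d\leq r$, this is a contradiction, so the rank is $d$. The case $d>r$ is symmetric: apply the same argument to the right kernel, using $P^\perp=A_N(\text{nonnegative})$, which has at least $d$ sign changes, against $E$, which has at most $r-1$.

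\emph{Supported kernels.} The key identity is
\[
\dim q(L)=\dim L-\dim(L\cap K)=2-\dim(K\cap\mathcal U^\perp).
\]
By the first part applied to $K$ (after the sign-preserving shift $S_k^j$, which keeps $\mathcal U$ positive up to the same relabelling, since $S_k$ preserves $\Gr_>(k-2,N)$ only up to seam signs; I would replace $S_k$ with $S_{k-2}$ on $\mathcal U$ and note the two differ by a scalar sign on the column space), the pairing $K\times\mathcal U$ has rank $\min(\dim K,k-2)$. Hence
\[
\dim(K\cap\mathcal U^\perp)=\max(0,\dim K-(k-2))=\max\bigl(0,2-\dim F\bigr),
\]
using $\dim K=k-\dim F$. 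This gives $\dim q(L)=\min(2,\dim F)$.

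\emph{The $T$-statement.} Apply the same reasoning with $D=C^\perp$ and $T=D\cap\mathcal W$. We have
\[
\dim\widetilde q(T)=2-\dim(\widetilde K\cap\mathcal W),\qquad \widetilde K\cap\mathcal W=\widetilde K\cap(\mathcal W^\perp)^\perp .
\]
Multiplying by $A_N$ turns $\mathcal W^\perp$ into the positive $(N-k-2)$-plane $A_N\mathcal W^\perp$, and turns $\widetilde K$ into a nonnegative plane after the allowed shift. The first part then gives pairing rank $\min(\dim\widetilde K,N-k-2)$, and the same count yields $\min(2,\dim\widetilde F)$.

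The main obstacle is the sign-variation step, together with bookkeeping of the seam signs. Specifically, one must check that after applying $S^j$ the external plane stays positive rather than merely sign-twisted. A cyclic shift maps positive planes to positive planes exactly when the appropriate $S$ for that rank is used, and the $S$'s for different ranks differ by a diagonal sign only at the seam. The first thing I would try is to verify that this discrepancy is harmless, since sign variation is cyclically invariant for an appropriate parity.
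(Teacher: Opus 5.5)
Your proposal is correct and follows the paper's proof essentially step for step. It uses Karp's sign-variation criterion to get injectivity or surjectivity of $P\to E^*$, then writes $\dim q(L)=2-\dim(K\cap\mathcal U^\perp)$ and computes the pairing rank on $K\times\mathcal U$ after an orthogonal shift. The $T$-version follows by replacing $(C,\mathcal U,L)$ with $(A_ND,A_N\mathcal W^\perp,A_NT)$.

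The seam issue you flag as the main obstacle is actually vacuous. Since $k$ and $k-2$ have the same parity, $S_k=S_{k-2}$ exactly, not merely up to a scalar sign; likewise $S_{N-k}=S_{N-k-2}$. Hence $S_k^j\mathcal U$ is strictly positive with no correction needed. Finally, the facts $\dim L=\dim T=2$, which you use implicitly, follow from your first part.
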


\begin{proof}
The sign-variation characterizations in
\cite[Theorem~1.1 and Corollary~1.12]{KarpVariation2017} imply
$P\cap E^\perp=0$ when $d\leq r$, and
$E\cap P^\perp=0$ when $d\geq r$.
These are respectively injectivity and surjectivity of the pairing
map $P\to E^*$, proving the first assertion.

Applying the first assertion to $C$ and $\mathcal U$ gives
$\dim L=2$. To compute the rank of the pairing on $K\times\mathcal U$,
choose $j$ as in the hypothesis. The plane $S_k^jK$ is nonnegative,
and $S_k^j\mathcal U$ is strictly positive because $k-2$ and $k$
have the same parity. Since $S_k^j$ is orthogonal,
\[
 \langle S_k^j x,S_k^j u\rangle=\langle x,u\rangle
 \qquad(x\in K,\ u\in\mathcal U).
\]
Thus the original pairing has the same rank as the pairing of
these two transformed spaces, to which the first assertion applies.
Writing $f=\dim F$, this rank is $\min(k-f,k-2)$, so
\[
 \dim(K\cap L)
 =\dim(K\cap\mathcal U^\perp)
 =k-f-\min(k-f,k-2)
 =\max(0,2-f).
\]
As $\ker(q|_L)=K\cap L$, subtracting from $\dim L=2$
proves \eqref{eq:bdy:positive-kernel-rank}.

For the dual assertion, replace $C,\mathcal U,L$ by
$A_ND,A_N\mathcal W^\perp,A_NT$.
The first two spaces are respectively nonnegative of rank $N-k$
and positive of rank $N-k-2$, by positive duality, and
\[
 A_NT=(A_ND)\cap(A_N\mathcal W^\perp)^\perp.
\]
The same argument, now using $S_{N-k}$, applies.
\end{proof}

\smallskip\noindent
\emph{Verification for supported interval kernels.}
Recall that $C[J]=C\cap\mathbb R^J$ consists of the vectors
supported on $J$, whereas $C_J$ denotes the column submatrix.
For a cyclic interval $J$, the space $C[J]$ is the kernel of
$\pi_{J^c}|_C$. 
\begin{lemma}\label{lem:C_J_nonneg}
    $C[J]$ is nonnegative. The same holds for $\bigoplus_\alpha C[J_\alpha]$ if $(J_\alpha)$ are pairwise disjoint cyclic intervals.
\end{lemma}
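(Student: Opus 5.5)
The plan is a Plücker-coordinate computation. We show that every maximal minor of $C[J]$ is a ratio of maximal minors of $C$ with a sign that does not depend on the minor. Throughout, ``nonnegative'' is meant in the sense used in Lemma~\ref{lem:bdy:positive-pairing}: the plane becomes nonnegative after a signed cyclic relabelling $S^j$. This is the form in which the lemma is applied to $K=\ker q$.

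\emph{Step 1: reduce to a non-crossing interval.} Apply a power of the signed shift $S_k$, which preserves $\Gr_{\geq}(k,N)$. This lets us assume that $J=[1,m]$ does not cross the seam. For several disjoint intervals, we rotate so that no $J_\alpha$ crosses the seam and order them left to right as $J_1<J_2<\dots<J_s$. The seam sign of $S_k$ on $C$ may differ from that of the shift at rank $d=\dim C[J]$ by $(-1)^{k-d}$. This discrepancy is why the conclusion is stated up to a signed relabelling, matching the hypothesis ``$S_k^jK$ nonnegative for some $j$'' in Lemma~\ref{lem:bdy:positive-pairing}.

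\emph{Step 2: one interval.} Let $d=\dim C[J]$ and choose a basis $u_1,\dots,u_d$ of $C[J]$. Complete it by $v_1,\dots,v_{k-d}$ to a basis of $C$. The projections $\pi_{J^c}v_i$ are linearly independent, because a combination vanishing on $J^c$ would lie in $C[J]$. Hence some $(k-d)$-subset $T\subseteq J^c$ has $\Delta_T(\pi_{J^c}v)\ne0$.

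Now take any $d$-subset $I\subseteq J$. All labels of $I$ precede all labels of $T$, and the rows $u$ vanish on $T$. So the $k\times k$ submatrix on columns $I\cup T$, listed in increasing order, is block lower-triangular, and
\[
\Delta_{I\cup T}(C)=\Delta_I(C[J])\,\Delta_T(\pi_{J^c}v).
\]
No reordering sign enters. The second factor is independent of $I$, so the vector $(\Delta_I(C[J]))_I$ is a fixed nonzero scalar multiple of a subvector of the nonnegative Plücker vector of $C$. Hence it is nonnegative after a global sign.

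\emph{Step 3: disjoint intervals.} Set $K=\bigoplus_\alpha C[J_\alpha]$ with $d_\alpha=\dim C[J_\alpha]$. Take a basis of $K$ that is block-diagonal with respect to the column blocks $J_\alpha$. A $(\sum_\alpha d_\alpha)$-subset $I$ of $\bigcup_\alpha J_\alpha$ gives a nonzero minor only if $|I\cap J_\alpha|=d_\alpha$ for every $\alpha$. Since the intervals are ordered and each is contiguous, the columns of each $I\cap J_\alpha$ form a consecutive group in the ordered index set $I$. The minor is then block-diagonal:
\[
\Delta_I(K)=\prod_\alpha\Delta_{I\cap J_\alpha}\bigl(C[J_\alpha]\bigr).
\]
By Step 2, applied to each interval in the common rotated frame, every factor has a fixed sign. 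Therefore all minors of $K$ share one sign, and $K$ is nonnegative.

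I expect the main obstacle to be the sign bookkeeping at the seam. One must check that the choice of $j$ and the rank-dependent seam signs $(-1)^{k-1}$ versus $(-1)^{\dim-1}$ produce only one global sign, and never a sign that varies with $I$. Working entirely in a rotated chart where no interval crosses the seam handles this, because within that chart all signs are global and the block factorizations above involve no column reordering.
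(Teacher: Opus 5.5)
Your proposal is correct and is essentially the paper's proof: rotate by $S_k^j$ so that the interval, or all the disjoint intervals simultaneously, avoids the seam. Extend a basis of $C[J]$ to one of $C$, so that $\Delta_{I\cup T}$ factors as $\Delta_I(C[J])$ times a nonzero minor independent of $I$. For disjoint intervals, take a block-diagonal basis so that the minors of the sum are products of block minors.

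The one point to state explicitly is in Step 3. There a non-initial $J_\alpha$ does incur a column-reordering sign $(-1)^{d_\alpha|T\cap[1,\min J_\alpha)|}$, so the claim ``no reordering sign enters'' from Step 2 does not carry over verbatim. This sign is independent of $I$, which is exactly the paper's remark that ``placing its columns first introduces only a common sign''.
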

\begin{proof}
For this calculation, use $S_k^j$ to make $J$ the initial interval,
and retain the notation $C,J$ for the transformed plane and interval.
Put $d=\dim C[J]$; rank--nullity gives
$\operatorname{rank}C_{J^c}=k-d$.
Extend a basis of $C[J]$ to a basis of $C$. With column blocks
$J,J^c$, the resulting row representative has the form
\[
 \widehat C=
 \begin{pmatrix}P&0\\ R&Z\end{pmatrix},
 \qquad
 \operatorname{rank}P=d,\quad \operatorname{rank}Z=k-d.
\]
The block $R$ need not vanish: the additional basis vectors may
have coordinates on both sides of the cut.
Choose $B\subset J^c$ with $|B|=k-d$ and $\det Z_B\ne0$.
For every $A\subset J$ with $|A|=d$, block expansion gives
\begin{equation}
\label{eq:bdy:supported-minors}
 \Delta_{A\cup B}(\widehat C)
   =\det\begin{pmatrix}P_A&0\\R_A&Z_B\end{pmatrix}
   =\det(P_A)\det(Z_B).
\end{equation}
All maximal minors of $\widehat C$ have a common weak sign,
and $\det Z_B$ is nonzero and independent of $A$.
Thus the maximal minors of $P$ have a common weak sign;
choosing its orientation makes them nonnegative.
The ambient representative $(P\mid0)$ of $C[J]$ has all other
maximal minors zero, as claimed. 

For several pairwise disjoint cyclic intervals, choose one common
seam between them, so that none crosses it.
The same block expansion works for each interval; placing its
columns first introduces only a common sign, independent of $A$.
Hence all the supported spaces are nonnegative in this same
coordinate order.
Order their row bases according to their intervals. Every nonzero
maximal minor of their direct sum is then a product of block
minors, and is nonnegative.
Consequently, the lemma also holds for kernels of the form
$\bigoplus_\alpha C[J_\alpha]$. 
\end{proof}
\begin{proposition}[Local ranks and representation types]
\label{prop:bdy:positive-ranks}
Let $C\in X^{\geq}$, put $D=C^\perp$, and let $L,T$ be its
spinor planes. For every cyclic interval $J$,
\begin{equation}
\label{eq:bdy:interval-ranks}
 \dim\pi_JL=\min(2,\operatorname{rank}C_J),\qquad
 \dim\pi_JT=\min(2,\operatorname{rank}D_J).
\end{equation}
For a factor $C_v$ of rank $k_v$ on $N_v$ labels in a regular
connected core--blob configuration, the induced spinors $L_v,T_v$,
with connecting coordinates retained, satisfy
\begin{equation}
\label{eq:bdy:factor-ranks}
 \dim L_v=\min(2,k_v),\qquad
 \dim T_v=\min(2,N_v-k_v).
\end{equation}
Writing $\operatorname{Res}_{H_v}$ for restriction of the action
to the factor's signed linear stabilizer $H_v$, their types are
\begin{equation}
\label{eq:bdy:factor-types}
 \begin{aligned}
 L_v&\simeq
 \begin{cases}
 \operatorname{Res}_{H_v}L,&k_v\geq2,\\
 C_v,&k_v\leq1,
 \end{cases}
 &
 T_v&\simeq
 \begin{cases}
 \operatorname{Res}_{H_v}T,&N_v-k_v\geq2,\\
 C_v^\perp,&N_v-k_v\leq1.
 \end{cases}
 \end{aligned}
\end{equation}
The factor statements also hold for interval direct sums and
for residual factors obtained by deleting supported interval
or singleton summands.
\end{proposition}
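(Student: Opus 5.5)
The interval ranks \eqref{eq:bdy:interval-ranks} follow by feeding Lemma~\ref{lem:C_J_nonneg} into Lemma~\ref{lem:bdy:positive-pairing}. The factor statements then reduce to the interval case, and the representation types follow by equivariance.

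\emph{Interval ranks.} For a cyclic interval $J$, take $q=\pi_J|_C:C\twoheadrightarrow\pi_JC$. Its kernel is $C[J^c]$, and $J^c$ is again a cyclic interval. By Lemma~\ref{lem:C_J_nonneg}, after the cyclic shift $S_k^j$ used in its proof, this kernel is nonnegative. That is exactly the hypothesis of Lemma~\ref{lem:bdy:positive-pairing}. Since $\dim F=\operatorname{rank}C_J$, we get $\dim\pi_JL=\min(2,\operatorname{rank}C_J)$.

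For $T$ we use the dual half of Lemma~\ref{lem:bdy:positive-pairing}, with $\widetilde q=\pi_J|_D$. Its kernel is $D[J^c]$, and we must check that $S_{N-k}^jA_N D[J^c]$ is nonnegative. Since $A_ND=\mathcal D(C)$ is nonnegative and $A_N$ preserves coordinate supports, $A_N(D[J^c])=(A_ND)[J^c]$. Lemma~\ref{lem:C_J_nonneg} applied to $\mathcal D(C)$ then supplies the required nonnegativity. Only nonnegativity of $C$ is used here, not the symmetry, so $C\in X^{\geq}$ suffices.

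\emph{Factor ranks.} In a regular connected core--blob configuration, write $C$ as a fiber product. A blob on interval $J$ with connector $*_J$ has $C_v\cong\pi_J C$ once the connector coordinate is eliminated; more precisely, $\pi_JC$ is the image of $C_v$ under deletion of $*_J$. The induced spinor $L_v$ is the image of $L$ in $C_v$ under the gluing correspondence. By regularity (connectivity one, no vector supported only on internal coordinates), this correspondence is a surjection $C\to C_v$ whose kernel is a direct sum $\bigoplus_\alpha C[J_\alpha]$ over the disjoint cyclic intervals forming the complement of the factor's external labels.

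For the core, these intervals are all blob intervals. For a blob, the kernel is $C[J^c]$. The multi-interval clause of Lemma~\ref{lem:C_J_nonneg} makes this kernel nonnegative, so Lemma~\ref{lem:bdy:positive-pairing} gives $\dim L_v=\min(2,k_v)$. The dual map $D\to D_v$, whose kernel is $\bigoplus_\alpha D[J_\alpha]$, gives $\dim T_v=\min(2,N_v-k_v)$ in the same way.

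Interval direct sums and residual factors obtained by deleting supported summands are handled identically: the quotient map has kernel a direct sum of supported interval pieces, and Lemma~\ref{lem:C_J_nonneg} applies to it.

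\emph{Types.} The factor's signed linear stabilizer $H_v$ preserves the kernel of $C\to C_v$, so this quotient map is $H_v$-equivariant, as is the corresponding map $D\to D_v$.
\begin{itemize}
\item If $k_v\geq2$, the restriction $L\to L_v$ is an equivariant surjection between two-dimensional spaces, hence an isomorphism, and $L_v\simeq\operatorname{Res}_{H_v}L$.
\item If $k_v\leq1$, then $\dim L_v=k_v=\dim C_v$, so $L_v=C_v$.
\end{itemize}
The same argument gives the two cases for $T_v$.

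\emph{Main obstacle.} The step I expect to be delicate is identifying the kernel of the factor map with a direct sum of supported interval spaces, compatibly with the planar gluing signs. This includes checking that the connector coordinate does not add kernel; that is where regularity enters, through the absence of vectors supported solely on internal coordinates. I would verify it first for a single blob cut using the block form in the proof of Lemma~\ref{lem:C_J_nonneg}, then iterate over the star.
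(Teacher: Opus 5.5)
Your route is the paper's: interval ranks come from Lemma~\ref{lem:C_J_nonneg} fed into Lemma~\ref{lem:bdy:positive-pairing}, factor ranks from factor-extraction maps whose kernels are sums of supported interval spaces, and types from equivariance plus the rank count. Your interval part is complete; the observation that the diagonal $A_N$ commutes with taking supported subspaces is exactly the ``dual for $D$'' step.

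\textbf{Where the gap is.} In the factor part you assert, rather than prove, that the gluing correspondence $C\to C_v$ is a well-defined linear surjection with kernel $\bigoplus_\alpha C[J_\alpha]$. This is where the paper's proof does almost all of its work. The precise inputs are that each connector is neither a coloop nor a loop in either incident factor; these are also forced by connectivity one across each cut.
\begin{itemize}
\item Non-coloop makes the connecting values $t_J$ of $x\in C$ unique. Hence $q_{b,J}(x)=(x_J,\varepsilon_Jt_J)$ and $q_c(x)=(x_{J_0},(t_J)_J)$ are well-defined and linear.
\item Nonzero connecting functionals (non-loop) make these maps surjective.
\end{itemize}
The kernels are then computed directly, with no iteration over cuts. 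First, $q_{b,J}(x)=0$ iff $x_J=0$, since a blob vector supported on $*_J$ alone vanishes. Second, $q_c(x)=0$ forces all connecting values to vanish, so each blob piece glues to zero elsewhere and $x\in\bigoplus_J C[J]$. Conversely $C[J]\subseteq\ker q_c$, because non-coloop kills the other blobs and then the core.

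\textbf{The dual half.} It is not obtained ``in the same way'' until you know that $D=C^\perp$ is itself such an amalgamation. The paper shows that $D$ is glued from $C_c^\perp$ and the $C_{b,J}^\perp$ with connecting values $u_J$ and $-\varepsilon_Ju_J$. The sign flip cancels the two internal contributions to the Euclidean pairing, giving containment in $C^\perp$, and the amalgamation dimension count gives equality. Complementation interchanges loops and coloops, so the connector hypotheses hold for the dual factors. Only then do you have surjections $\widetilde q_v:D\to C_v^\perp$ with kernels $\bigoplus_\alpha D[J_\alpha]$, to which your $A_N$ argument applies.

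\textbf{The types.} Saying that $H_v$ preserves the kernel only gives an action on $C/\ker q_v$. To identify it with the prescribed action on $C_v$, use that the factor symmetries preserve the matching equations and that the connecting coordinates are unique. Then $q_v$ and $\widetilde q_v$ intertwine the two actions.
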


\begin{proof}
The restriction map $\pi_J|_C:C\to\pi_J(C)$ kills exactly the
vectors vanishing on $J$, so its kernel is $C[J^c]$.
Lemma~\ref{lem:C_J_nonneg}, Lemma~\ref{lem:bdy:positive-pairing}, and its dual for $D$,
therefore give \eqref{eq:bdy:interval-ranks}.

\smallskip\noindent
\emph{The gluing and its assumptions.}
Let $\mathcal J=\mathcal O(I)$ be the collection of disjoint blob
intervals, and put
$J_0=[N]\setminus\bigcup_{J\in\mathcal J}J$.
The core $C_c$ has external labels $J_0$ and one connecting label
$*_J$ for each blob. Denote the blob on
$J\sqcup\{*_J\}$ by $C_{b,J}$.

In the connected construction, we require each connecting label
to be neither a loop nor a coloop in either incident factor.
Explicitly, its coordinate functional is nonzero, and the
coordinate vector supported only on that label does not belong
to the factor. Thus any real connecting value can be attained,
but it cannot be changed while all other coordinates are fixed.
These are the precise assumptions used below.

Fix the internal coordinate scales, and write
$\varepsilon_J\in\{\pm1\}$ for the prescribed gluing sign at $*_J$.
The amalgamation formula \eqref{eq:bg:fiber-product} says that
$x\in\mathbb R^N$ belongs to $C$ exactly when there are scalars
$t_J$, one for each $J\in\mathcal J$, satisfying
\[
 (x_{J_0},(t_J)_{J\in\mathcal J})\in C_c,
 \qquad
 (x_J,\varepsilon_Jt_J)\in C_{b,J}
 \quad\text{for every }J\in\mathcal J.
\]
Thus $t_J$ is the core coordinate at $*_J$, and
$\varepsilon_Jt_J$ is the matching blob coordinate.

\smallskip\noindent
\emph{The maps to the factors.}
For a given $x\in C$, these scalars are unique.
Indeed, two choices would give, in each blob, a vector whose
external coordinates are zero and whose only possibly nonzero
coordinate is $*_J$. The non-coloop assumption excludes such
a nonzero vector, so the two choices coincide.
We can therefore define
\[
 q_c(x)=(x_{J_0},(t_J)_{J\in\mathcal J}),\qquad
 q_{b,J}(x)=(x_J,\varepsilon_Jt_J).
\]
These are the \emph{factor-extraction maps}: they recover from
$x$ its vector in the core or in a chosen blob.
They are linear because the matching equations are linear and
their connecting coordinates are uniquely determined.

They are also surjective.
Given a core vector, choose a vector in each blob with the
required connecting value; this is possible because the blob's
connecting coordinate functional is nonzero.
Given a vector in one chosen blob, first choose a core vector
with the one required connecting value, and then match its other
connecting values in the remaining blobs.
Here we use the nonzero connecting functionals on the core as well.

\smallskip\noindent
\emph{Their kernels.}
For the blob map, $q_{b,J}(x)=0$ implies $x_J=0$.
Conversely, if $x_J=0$, the corresponding blob vector is supported
only on $*_J$, so it is zero. Hence
\[
 \ker(q_{b,J}:C\to C_{b,J})
   =C[J^c]=C\cap\mathbb R^{J^c}.
\]

For the core map, suppose $q_c(x)=0$.
Every blob vector then has connecting coordinate zero.
Each such vector can be glued to zero in the core and in all
other blobs, giving a vector of $C$ supported on its own interval.
Their sum is $x$.

Conversely, suppose $x\in C[J]$ for one blob interval $J$.
Every other blob has zero external coordinates, hence is zero.
The core vector consequently has all coordinates zero except
possibly $*_J$. Since this label is not a coloop of the core,
the core vector is zero too. Thus
\[
 \ker(q_c:C\to C_c)
   =\bigoplus_{J\in\mathcal J}C[J].
\]
The sum is direct because the external intervals are disjoint.

\smallskip\noindent
\emph{The dual maps and the spinor ranks.}
The orthogonal complement $D=C^\perp$ is obtained from
$C_c^\perp$ and the $C_{b,J}^\perp$ by the same construction,
with opposite matching signs: the dual connecting values are
$u_J$ in the core and $-\varepsilon_Ju_J$ in the blob.
Indeed, these signs cancel the two internal contributions to
the Euclidean pairing; the resulting space is contained in
$C^\perp$, and the amalgamation dimension count gives equality.

Orthogonal complementation interchanges loops and coloops.
The dual factors therefore satisfy the same connector assumptions.
The preceding construction gives surjective linear maps
$\widetilde q_v:D\to C_v^\perp$, with the same kernel descriptions
after replacing $C$ by $D$.

Now define the induced spinor spaces by
$L_v=q_v(L)$ and $T_v=\widetilde q_v(T)$.
All the displayed kernels are supported interval spaces or sums
of such spaces on disjoint intervals.
Their nonnegativity after the appropriate signed relabelling
was proved above; for the $D$-kernels, use positive duality.
Lemma~\ref{lem:bdy:positive-pairing} therefore gives
\eqref{eq:bdy:factor-ranks}.
For direct sums or soft deletion, the maps are coordinate
projections onto the retained factors, and their kernels are
the discarded supported summands, so the same argument applies.

\smallskip\noindent
\emph{The representation types.}
Let $H_v=\operatorname{Stab}_{\widehat H}(v)$ be the subgroup
of signed linear symmetries preserving the chosen factor.
For a blob, this is the linear part of its signed stabilizer
in \eqref{eq:bdy:blob-dictionary}; the core is preserved by
all of $\widehat H$.

The prescribed factor symmetries preserve the matching equations.
Uniqueness of the connecting coordinates therefore makes the
factor-extraction maps $H_v$-equivariant.
If $k_v\geq2$, then $q_v|_L$ has rank two, equal to $\dim L$,
so it identifies $L_v$ with the restriction of $L$ to $H_v$.
If $k_v\leq1$, its rank equals $\dim C_v$, so $L_v=C_v$.
Applying the same argument to $\widetilde q_v|_T$ proves
the remaining cases of \eqref{eq:bdy:factor-types}.
Exchanging symmetries relate the corresponding factors and
interchange the two spinor sectors.
\end{proof}
Thus the ranks are forced at \emph{every} point of the specified domain
part. No additional genericity of the external datum is required.
As a useful endpoint consequence,
$\lambda_i=0$ exactly at source loops, and $\widetilde\lambda_i=0$
exactly at source coloops, under the positive external assumptions here.

\begin{corollary}[The momentum on a low-connectivity cut]
\label{cor:bdy:momentum-rank}
For a cyclic interval $I$, put $\kappa=\kappa_C(I)$. Then
\begin{equation}
\label{eq:bdy:momentum-rank}
 \operatorname{rank}Q_I=\kappa\quad(\kappa\leq2),\qquad
 \operatorname{rank}Q_I\geq\max(0,4-\kappa)\quad(\kappa\geq2).
\end{equation}
In particular a connectivity-one source plane has nonzero null momentum for
all our ordinary positive external data.
\end{corollary}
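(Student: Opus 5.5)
The plan is to refine the proof of Lemma~\ref{lem:bdy:vanishing}, which only gave the upper bound, by showing that the spinor planes occupy as much of the cut pairing as positivity allows. Put $D=C^\perp$ and consider the pairing $\beta_I(x,y)=\langle\pi_Ix,y\rangle$ on $C\times D$. That proof identifies its left kernel as $K_C=C[I]\oplus C[I^c]$ and its rank as $\kappa=\kappa_C(I)$. By the symmetric computation, its right kernel is $K_D=D[I]\oplus D[I^c]$. Hence $\beta_I$ descends to a \emph{nondegenerate} pairing
\[
 \bar\beta_I:\ C/K_C\times D/K_D\to\mathbb R,\qquad \dim C/K_C=\dim D/K_D=\kappa .
\]
By definition $Q_I$ is the restriction of $\beta_I$ to $L\times T$. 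So $\operatorname{rank}Q_I$ equals the rank of $\bar\beta_I$ restricted to $q(L)\times\widetilde q(T)$, where $q:C\to C/K_C$ and $\widetilde q:D\to D/K_D$ are the quotient maps.

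The key step is to compute $\dim q(L)$ and $\dim\widetilde q(T)$ with Lemma~\ref{lem:bdy:positive-pairing}.
\begin{itemize}
\item The kernel $K_C$ is a direct sum of supported spaces on the two disjoint cyclic intervals $I$ and $I^c$. By Lemma~\ref{lem:C_J_nonneg}, after a suitable signed shift $S_k^j$ it is nonnegative. Lemma~\ref{lem:bdy:positive-pairing} then gives $\dim q(L)=\min(2,\kappa)$.
\item For the dual side we need $S_{N-k}^jA_N(K_D)$ to be nonnegative. Since $A_N$ is diagonal, it preserves supports, so $A_NK_D=(A_ND)[I]\oplus(A_ND)[I^c]$. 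As $A_ND=\mathcal D(C)$ is nonnegative, Lemma~\ref{lem:C_J_nonneg} applied to $A_ND$ gives this nonnegativity. The dual half of Lemma~\ref{lem:bdy:positive-pairing} then yields $\dim\widetilde q(T)=\min(2,\kappa)$.
\end{itemize}
I expect this step to be the main obstacle. One must check that the signed shift in Lemma~\ref{lem:C_J_nonneg} is compatible with the hypothesis of Lemma~\ref{lem:bdy:positive-pairing}, including the seam signs when $I$ or $I^c$ crosses the cut. One must also confirm that $A_N$ carries supported spaces of $D$ to supported spaces of $\mathcal D(C)$ without changing the relevant signs.

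The rest is linear algebra. For a nondegenerate pairing on $V\times W$ with $\dim V=\dim W=\kappa$, and subspaces $A\subseteq V$, $B\subseteq W$ of dimensions $a,b$, the restricted rank is $a-\dim(A\cap B^{\perp})$. Since $\dim B^{\perp}=\kappa-b$, this gives
\[
 a+b-\kappa\ \le\ \operatorname{rank}\ \le\ \min(a,b).
\]
\begin{itemize}
\item If $\kappa\le2$, then $a=b=\kappa$, so $\operatorname{rank}Q_I\ge\kappa$. The reverse bound comes from Lemma~\ref{lem:bdy:vanishing}, hence $\operatorname{rank}Q_I=\kappa$.
\item If $\kappa\ge2$, then $a=b=2$, so $\operatorname{rank}Q_I\ge 4-\kappa$, and trivially $\operatorname{rank}Q_I\ge0$.
\end{itemize}
Finally, at connectivity one we get $\operatorname{rank}Q_I=1$. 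Thus $Q_I\neq0$ while $s_I=\det Q_I=0$, which is the asserted nonzero null momentum. The argument uses only positivity of $\mathcal U$ and $\mathcal W$, so it holds for all ordinary positive external data.
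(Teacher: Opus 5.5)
Your proposal is correct and follows the paper's proof essentially verbatim: descend the cut pairing to the perfect pairing on $E_C\times E_D$, use Lemma~\ref{lem:C_J_nonneg} with Lemma~\ref{lem:bdy:positive-pairing} (and its dual via $A_N$) to get $\dim q(L)=\dim\widetilde q(T)=\min(2,\kappa)$, and conclude with the annihilator count. The seam-sign issue you flag is handled by the common-seam clause in Lemma~\ref{lem:C_J_nonneg}: since $I$ and $I^c$ are disjoint cyclic intervals, the seam can be placed at one of their two shared boundaries.
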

\begin{proof}
The partial pairing of Lemma~\ref{lem:bdy:vanishing} descends to a
perfect pairing between
\begin{equation}
\label{eq:bdy:connector-quotients}
 E_C=C/(C[I]\oplus C[I^c]),\qquad
 E_D=D/(D[I]\oplus D[I^c]),\qquad \dim E_C=\dim E_D=\kappa.
\end{equation}
Let $q_C:C\twoheadrightarrow E_C$ and $q_D:D\twoheadrightarrow E_D$
be the quotient maps just displayed. Their kernels are sums of
supported interval spaces, so the quotient-map assertion
\eqref{eq:bdy:positive-kernel-rank} and its dual give
$\dim q_C(L)=\dim q_D(T)=\min(2,\kappa)$.
Thus these images fill both quotients when $\kappa\leq2$,
and the perfect pairing gives $\operatorname{rank}Q_I=\kappa$.
For $\kappa>2$, the annihilator of the two-plane $q_D(T)$
in $E_C$ has dimension $\kappa-2$, so its intersection with
$q_C(L)$ has dimension at most $\kappa-2$.
Hence the restricted pairing has rank at least
$2-(\kappa-2)=4-\kappa$.
\end{proof}
This does not assert nonvanishing of the determinant at $\kappa\geq3$.

For a reflection channel with two distinct factors, see Table~\eqref{eq:bdy:determinant-types}, each specified
connected one-particle construction selects one factor to vanish;
the other is nonzero on its connectivity-one locus.
Appendix~\ref{app:reflection-factors} identifies the selected factor
from the signed representations and gives conditions for global
nonvanishing of the other factor.
\subsubsection{Induced images and their dimensions.}
For $y=(L,T)\in\Phi(\mathcal S_{\mathcal T})$ with every internal
$Q_J$ nonzero of rank one, write $p_*=-Q_J=\lambda_*\widetilde\lambda_*^T$.
Restrict the external spinors and adjoin these internal columns at each
vertex, with opposite momenta on opposite sides. This determines the
augmented pairs $(L_v,T_v)$ up to internal rescaling. Conversely their
compatible gluing recovers $y$. The induced external equations are
obtained from the parent equations by eliminating the opposite source
rows; they are not, in general, raw coordinate restrictions of
$\mathcal U,\mathcal W$. Thus the image is an externally constrained
fiber product, not a Cartesian product with freely chosen lower flags.
For direct sums, take $(L_J,T_J)=(\pi_JL,\pi_JT)$ instead.

Here is a dimension calculation which also handles small factors.
For each independent factor $v$, let $s_v$ be its fixed-domain dimension
from \eqref{eq:dom:character-dimension}. Let $f_v$ be the dimension of
the fixed incidence Grassmannian
\begin{equation}
\label{eq:bdy:local-fiber}
 F_v(y)=\{E_v:L_v\subset E_v\subset T_v^\perp,
                 \ E_v\text{ has the prescribed signed source-plane type}\}.
\end{equation}
Its dimension is computed by the same character formula on
$C_v/L_v$ and $C_v^\perp/T_v$.
For example, for a linear factor, with real multiplicities
$c_{v\nu},d_{v\nu},a_{v\nu},b_{v\nu}$ and
$e_\nu=\dim_{\mathbb R}\End(R_\nu)$,
\begin{equation}
\label{eq:bdy:local-image-count}
 d_v:=s_v-f_v
 =\sum_\nu e_\nu\bigl[c_{v\nu}d_{v\nu}
               -(c_{v\nu}-a_{v\nu})(d_{v\nu}-b_{v\nu})\bigr].
\end{equation}
Use the types forced by Proposition~\ref{prop:bdy:positive-ranks}, with
the actual inherited signed action. A factor of source rank or corank
one has the corresponding one-dimensional spinor space.
For example, suppose $[N]=F\sqcup R$ and
$C=C[F]\oplus C[R]$, where $F$ is the union of the direct-sum
blocks to be removed and $R$ consists of the retained labels.
Put $k_R=\dim C[R]$ and $Z=\mathcal W^\perp$.
To describe the projected spinors on $R$, we eliminate the
coordinates on $F$ from the external equations, allowing those
coordinates to vary in $C[F]$ and $D[F]$, respectively.
The resulting external spaces are
\begin{equation}
\label{eq:bdy:effective-flag}
 \begin{gathered}
 U_{\mathrm{eff}}=\{u_R:u\in\mathcal U,\ u_F\perp C[F]\},\qquad
 Z_{\mathrm{eff}}=\{z_R:z\in Z,\ z_F\perp D[F]\},\quad
 W_{\mathrm{eff}}=Z_{\mathrm{eff}}^\perp,\\
 \dim U_{\mathrm{eff}}=\max(k_R-2,0),\qquad
 \dim W_{\mathrm{eff}}=\min(|R|,k_R+2),\qquad
 U_{\mathrm{eff}}\subset W_{\mathrm{eff}}.
 \end{gathered}
\end{equation}
Here $W_{\mathrm{eff}}=Z_{\mathrm{eff}}^\perp$
is taken inside $\mathbb R^R$.
Indeed Lemma~\ref{lem:bdy:positive-pairing} computes both eliminated
pairing ranks; projection to $R$ is injective on their annihilators,
by global positive transversality. 
The inclusion $U_{\mathrm{eff}}\subset W_{\mathrm{eff}}$ follows from
$\langle u_R,z_R\rangle
 =\langle u,z\rangle-\langle u_F,z_F\rangle=0$,
since $\mathcal U\perp Z$, $u_F\in D[F]$ and $z_F\in C[F]$.
\begin{proposition}[Dimension of a regular factorization image]
\label{prop:bdy:dimension}
Suppose the specified factorization subspace has a nonempty regular
positive part, using the supported-interval extractions above. Its local
spinor types are \eqref{eq:bdy:factor-types}. Let $g_{\mathcal T}$ be the dimension of its
symmetry-compatible internal rescaling group. Then
\begin{equation}
\label{eq:bdy:image-dimension}
 \dim\Phi(\mathcal S_{\mathcal T})
       =\sum_{v/G}(s_v-f_v)-g_{\mathcal T}.
\end{equation}
For a direct-sum or soft subspace the same formula holds with
$g_{\mathcal T}=0$ and the restricted pairs in place of augmented pairs.
\end{proposition}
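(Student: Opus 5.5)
The plan is to compute the dimension of the regular image by a fiber count in the domain, factor by factor. First, on the regular positive part of $\mathcal S_{\mathcal T}$, I would parametrize the subspace by its independent factors. There is one factor $C_v$ for each $G$-orbit of vertices of the star $\mathcal T$. The choice is free in that factor's fixed nonnegative domain, of dimension $s_v$ by Theorem~\ref{thm:dom:ball} applied to its local stabilizer from \eqref{eq:bdy:blob-dictionary}. The orbit partners are then determined by the prescribed signed action. The amalgamation \eqref{eq:bg:fiber-product} is invariant under the internal rescalings of the connecting coordinates. Restricted to symmetry-compatible rescalings, these form a group of dimension $g_{\mathcal T}$ acting freely on the regular part, because the connectors are neither loops nor coloops. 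This gives $\dim\mathcal S_{\mathcal T}=\sum_{v/G}s_v-g_{\mathcal T}$ on the regular locus; for direct sums there are no connectors and $g_{\mathcal T}=0$.

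Second, I would identify the fibers of $\Phi$ restricted to $\mathcal S_{\mathcal T}$. Take $y=(L,T)$ in the image. By Corollary~\ref{cor:bdy:momentum-rank}, every internal $Q_J$ is nonzero of rank one, so $p_*=-Q_J$ factors uniquely up to little-group scaling. Then, as described before the proposition, the augmented spinor pairs $(L_v,T_v)$ are determined by $y$ up to exactly the internal rescalings counted by $g_{\mathcal T}$. Conversely, the factor-extraction maps $q_v,\widetilde q_v$ from the proof of Proposition~\ref{prop:bdy:positive-ranks} send $C$ to $C_v$ and $L,T$ to $L_v,T_v$. Hence $C\in\Phi^{-1}(y)$ exactly when each extracted $C_v$ lies in $F_v(y)$ of \eqref{eq:bdy:local-fiber}, subject to the incidence $L_v\subset C_v\subset T_v^\perp$ and to the forced types \eqref{eq:bdy:factor-types}. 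The gluing is then recovered from the factors. The fiber is therefore locally a product over $v/G$ of open pieces of $F_v(y)$. Its dimension is $\sum f_v$, provided the rescaling ambiguity in the $(L_v,T_v)$ is matched against the rescaling quotient of the domain.

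Third, I would show that the fiber dimension is attained at generic points. This amounts to showing that $\Phi$ restricted to each factor is a submersion onto its fixed local image. For that I would repeat the argument of Proposition~\ref{prop:amp:submersion}: the incidence map is an open Grassmann bundle with kernel $\Hom(C_v/L_v,C_v^\perp/T_v)$, and averaging lifts over the finite stabilizer keeps surjectivity on invariants. The local count $s_v-f_v$ is then \eqref{eq:bdy:local-image-count}. The effective external flags \eqref{eq:bdy:effective-flag} are needed here to guarantee that the lower incidence lives in a regular kinematic space. Subtracting fiber dimension from domain dimension gives \eqref{eq:bdy:image-dimension}.

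The main obstacle is the bookkeeping of internal scales. I must verify that the rescaling group acting on the domain factors and the little-group ambiguity of the augmented spinors are the same $g_{\mathcal T}$-dimensional group, so it is subtracted exactly once. Orbit-related blobs share one scale, and an exchange-stabilized connector may admit no scale at all, as in ABJM. I would check this on the connector character: a connecting edge fixed by an exchanging element has its scale forced to $\pm1$. I would also use the uniqueness of connecting coordinates from Proposition~\ref{prop:bdy:positive-ranks} to rule out extra fiber directions.
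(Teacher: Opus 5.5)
Your proposal is correct and is essentially the paper's argument. You count the regular factorization locus as $\sum_{v/G}s_v-g_{\mathcal T}$, identify the fiber over $y$ with an open subset of $\prod_{v/G}F_v(y)$ once the internal spinor representatives are fixed (which is exactly how the paper ensures the rescaling is subtracted only once), and take the difference.

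The one divergence is your third step. The paper needs no per-factor submersion; indeed the ``local image'' of a single factor is not really fixed, since the induced external data couple the factors. It simply notes that each $f_v$ is constant because Proposition~\ref{prop:bdy:positive-ranks} fixes the local spinor ranks and types. Your kernel computation $\bigoplus_{v/G}\Hom(C_v/L_v,C_v^\perp/T_v)^{\mathrm{inv}}$, taken on the whole factorization locus rather than factor by factor, would serve equally well.
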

\begin{proof}
Choose one factor per symmetry orbit. Cutting a source in the
connected core--blob locus recovers these factors up to
symmetry-compatible rescaling of the connecting coordinates.
Hence \[\dim\mathcal S_{\mathcal T}=\sum_{v/G}s_v-g_{\mathcal T}.\]

Fix $y=(L,T)$ and choose its internal spinor representatives,
thereby fixing this rescaling freedom. For each independent
factor, the remaining choice is $E_v\in F_v(y)$: it must contain
$L_v$, be orthogonal to $T_v$, and have the prescribed signed
representation type.
Subject to the open conditions preserving the cut ranks,
gluing these factors gives exactly the sources over $y$.
The fiber therefore has dimension $\sum_{v/G}f_v$, with no
further rescaling quotient. The dimensions $f_v$ are constant
because Proposition~\ref{prop:bdy:positive-ranks} fixes the local
spinor ranks and representation types.
Subtracting the fiber dimension from the domain dimension
proves \eqref{eq:bdy:image-dimension}.

For a direct sum, the incidence conditions hold separately on
each block:
\[\pi_JL\subset C[J]\subset(\pi_JT)^\perp\] inside $\mathbb R^J$.
There are no connecting coordinates and hence no rescaling
quotient, so the same calculation applies with $g_{\mathcal T}=0$.
Positivity is open on the chosen regular factorization part
and does not alter these dimensions.
\end{proof}
Let $t_e>0$ be the rescaling parameter at each internal connection
$e$. Linear symmetries permute these parameters, while exchanging
symmetries also invert them. In the coordinates $u_e=\log t_e$,
inversion becomes multiplication by $-1$, so the action is by
permutations with sign $(-1)^{\epsilon}$.
Consequently,
\[
 g_{\mathcal T}
 =\dim\bigl(\mathbb R^{E(\mathcal T)}
                 \otimes\mathrm{sgn}_{\epsilon}\bigr)^G.
\]
A single interface orbit contributes one unless an exchanging element
stabilizes that interface, when it contributes zero. For example,
ordinary gluing subtracts one and ordinary ABJM gluing subtracts none.
The image dimension is a sum over independent factor orbits, not over
the number of drawn blobs.

\paragraph{Codimension-one evaluations.}
Equation~\eqref{eq:bdy:image-dimension} gives explicit codimension-one
results for the indicated nonempty regular domain constructions. Their
spinor representations are determined by \eqref{eq:bdy:factor-types},
not by an additional genericity hypothesis.
For example,
\begin{equation}
\label{eq:bdy:basic-dimension-checks}
\begin{array}{l|l}
\text{factorization}&d_\partial=d_{\rm parent}-1\\\hline
\text{ordinary momentum}&(2N_L-4)+(2N_R-4)-1=2N-5\\
\text{Fraser, noncritical}&[2(\ell-m+1)-2]+(2m-2)-1=2\ell-3\\
\text{linear }D_p,\ \text{one arc}&[\ell-2(m-1)-1]+(2m-2)-1=\ell-2\\
\text{linear }D_p,\ \text{centered}&[\ell-m]+(m-1)-1=\ell-2\\
\text{symmetric ABJM, odd }m&[\ell-m]+(m-2)=\ell-2\\
\text{symmetric Shevchenko}&[2(\ell-m+1)-1]+(2m-2)-1=2\ell-2.
\end{array}
\end{equation}
Here $p\geq2$ in the linear-dihedral rows. A single linear reflection
gives $N-3$ for either nondegenerate gluing type; the basic reflected and
rotational Lagrangian one-particle constructions give $2n-2$.
These are regular-factor formulas, not substitutions for an extremal
or connector-only factor.

For the exchange-dihedral families, the same calculation is particularly
short. The following table records the drop to the induced core, the
independent blob contribution, and the internal rescaling dimension.
Each row gives $d_\partial=d_c+d_b-g=d_{\rm parent}-1$, where Here $g=g_{\mathcal T}$ is the dimension of the
symmetry-compatible internal rescaling group, and $d_b$ counts
the contribution of the independent blob representatives,
not all their symmetry-related copies.
\begin{equation}
\label{eq:bdy:exchange-dimension-checks}
\renewcommand{\arraystretch}{1.15}
\begin{array}{l|c|c|c}
\text{regular construction}&d_{\rm parent}-d_c&d_b&g\\\hline
G_R,G_\chi:\ \text{one arc}&2m-2&2m-2&1\\
G_\times:\ \text{one arc, odd }m&m-1&m-2&0\\
G_\chi:\ \text{linear center}&m-1&m-1&1\\
G_R,G_\chi:\ \text{exchange-vertex center}&m-1&m-2&0\\
G_\times:\ \text{vertex center}&(m-1)/2&(m-3)/2&0\\
G_R,G_\chi:\ \text{exchange-edge direct sum}&m&m-1&0\\
G_\times:\ \text{edge-centered direct sum}&m/2&m/2-1&0.
\end{array}
\end{equation}
The core uses its inherited action, including a change of vertex/edge
phase when a centered interval is replaced by its connecting label.
The counts follow by substituting those types and the blob types of
\eqref{eq:bdy:blob-dictionary} into \eqref{eq:bdy:local-image-count}
and its exchange analogue. They apply on the regular components with
these types; in particular, an even interval stabilized \emph{linearly}
in a mixed group is not the exchange-edge direct-sum row.

\begin{example}
For clarity, three end calculations are worth making explicit.
The Karpman direct sum on a centered interval of length $2r$ gives
\begin{equation}
\label{eq:bdy:Karpman-codimension}
 d_\partial=(2r-1)+(2(n-r)-1)=2n-2.
\end{equation}
At $r=1$ the induced two-label factor contributes its one domain
parameter, not an ordinary momentum dimension at rank one.
For a linear maximal-sector direct sum, $N=p\ell$, $k=pq$, there is
one independent reflection-fixed sector, and the local count gives
\begin{equation}
\label{eq:bdy:critical-linear-codimension}
 d_{\rm parent}=\ell-1,\qquad
 d_\partial=\begin{cases}
 \ell-2,&2\leq q\leq\ell-2,\\
 \lfloor(\ell-1)/2\rfloor,&q=1\text{ or }q=\ell-1.
 \end{cases}
\end{equation}
Indeed, in the interior range each reflection character occurs once in
each restricted spinor plane. The two summands in
\eqref{eq:bdy:local-image-count} are $m_+-1$ and $m_--1$, totaling
$\ell-2$. When the sector rank is $q=1$, its source plane is represented
by a positive row $(c_1,\ldots,c_\ell)$ with
$c_i=c_{\ell+1-i}$. Its $\lceil\ell/2\rceil$ independent entries,
modulo common scaling, give $\lceil\ell/2\rceil-1$ parameters.
The projected $\lambda$-line recovers this row up to scale,
so these parameters also give the image dimension.
The corank-one case follows by positive duality. Thus the latter
cases yield codimension $1$ strata only for $\ell=2,3$.

A regular fixed-vertex soft deletion gives $\ell-1\to\ell-2$ for a
linear $D_p$ model and $N-2\to N-3$ for one reflection. Coloop deletions are included as well, but reduce the core rank
by the number of deleted coloops. Their image dimension must
therefore be computed using this residual rank and the inherited
signed action in \eqref{eq:bdy:local-image-count}, rather than
inferred from the number of remaining labels alone. For example, the
regular pure-coloop component on $\{1,5\}$ in vertex $D_2$, $(N,k)=(8,4)$,
has domain and fiber dimensions $2,0$, against $4,1$ for its parent;
hence the image drops from three to two. For a linearly
centered extremal block of length $m$, the positive splitting ratios
have dimension $\lceil m/2\rceil-1$. 
For the end case of a rank-one augmented blob, the external $\lambda$-columns
are nonzero and proportional: $\lambda_i=c_i v$ for $i\in I$.
Their proportionality coefficients recover the splitting ratios,
up to a common scale. The corank-one case is dual, with
$\widetilde\lambda$ in place of $\lambda$. Thus
\begin{equation}
\label{eq:bdy:extremal-dimension}
 d_\partial=d_c+\lceil m/2\rceil-1.
\end{equation}
In the regular linear-dihedral case $d_c=\ell-m$, so the defect is
$\lfloor m/2\rfloor$; the triple is a codimension-one end case.
Adjacent $\Gr(1,3)$ and $\Gr(2,3)$ blobs give the two usual chiral
ends by the same computation.
\end{example}
All codimensions here are with respect to the ambient symmetric amplituhedron under consideration. 
In the stated nonempty constructions,
positivity has supplied the local ranks, so the dimension calculation
requires no additional external-genericity or Mandelstam-sign assumption.
We conjecture that the listed codimension-one images are precisely the intended
boundary supports. 
\section{The geometric BCFW procedure}
\label{sec:bcfw}
The BCFW recursion \cite{BrittoCachazoFeng2005,BrittoCachazoFengWitten2005} is a method to calculate a scattering amplitude from its poles' residues, using Cauchy's theorem. When the poles locations are known, e.g. Mandelstam divisors, and when the residues satisfy a recursive structure, e.g. factorization patterns as the ones discussed in the previous section, the outcome is a recursive method for calculating the scattering amplitude, or in positive geometries scenarios, the canonical form. 

BCFW tilings are geometric counter parts of this idea. They are decompositions of amplituhedra into a collection of images of positroid cells of the target's dimension. The collections of the domain cells are calculated in a recursive process which imitates the recursion for the amplitudes: Starting from an internal point of an amplituhedron, and performing a BCFW shift, which is the flow generated by one of the atomic operations in the theory, e.g. $x_i(t)$ or a boost, until a boundary component is reached. Assuming all boundary strata are the vanishing loci of Mandelstam variables $s_I$, and using the factorization of these boundaries to fiber or direct products of smaller cells, one obtains the recursive recipe. In practice the parameter $t$ is found as a root of the polynomial expression obtained by shifting corresponding momentum $P_I$ by $t,$ calculating the determinant, which is the shifted Mandelstam variable. In the original amplituhedron and its momentum version the degree of the polynomial is always $1.$ The same is true for the reflected Lagrangian amplituhedron, but there one has to shift the irreducible factor $\mu_I$ for symmetric $I,$ in order to obtain an affine shift. In the ABJM setting the degree is $2.$ In the general symmetric setting we consider here the degree varies, depending on the theory, see Appendix~\ref{app:bcfw:degrees}. Still, in all our experiments, even with assuming only basic positivity of the external data, and not the stronger immanant positivity, we have obtained a single positive preimage for every candidate BCFW collection, also in cases of higher degree polynomials. We have constructed the cells based on the boundary ansatz of~\ref{subsec:boundary:channels}.


\subsection{Shifted equations and the components that are hit}
\label{subsec:bcfw:shifts}

Fix the external datum, write $d=\dim\mathcal M$, and choose one
orbit operation $M(t)$ at the current recursion node.
Here $t$ denotes the algebraic parameter: $t>0$ for additive
operations, and $t=u=e^\tau>1$ for boosts. Denote its positive
range by $T_M$.

For $y=(\lambda,\widetilde\lambda)$, moving the operation to the
external datum as in \eqref{eq:bg:bridge-transfer} gives
\begin{equation}
\label{eq:bcfw:promoted}
 \begin{gathered}
 (\Lambda_t,\widetilde\Lambda_t)
       =(M(t)^{-T}\Lambda,M(t)\widetilde\Lambda),\qquad
 (\mathcal U_t,\mathcal W_t)=(M(t)\mathcal U,M(t)\mathcal W),\\
 \lambda_t=\lambda M(t)^{-1},\qquad
 \widetilde\lambda_t=\widetilde\lambda M(t)^T,\qquad
 Q_I^M(t)=\lambda M(t)^{-1}P_I M(t)\widetilde\lambda^T,
 \end{gathered}
\end{equation}
where $P_I$ is the diagonal coordinate projector. Note that this is a removal
calculation at fixed original target, not the path
$\Phi_{\mathcal U,\mathcal W}(CM(t))$ with the original external datum.
The common transformation of $\mathcal U,\mathcal W$ preserves nesting;
positivity and symmetry are preserved by the allowed operations.

If $I$ is such that $s_I$ is reducible, as studied in Section~\ref{subsec:boundary:channels},
choose its reduced channel function $F$: an adjacent chiral bracket,
a selected $\mu_I^\eta$, the reduced $\mu_I$ of a square, or an
unsplit Mandelstam. The removal equation is the
\begin{equation}
\label{eq:bcfw:removal}
 F^M(y;t):=F(\lambda_t,\widetilde\lambda_t)=0.
\end{equation}
Substitution gives a polynomial equation in the algebraic parameter $t$
after possibly clearing denominators. A component is \emph{algebraically hit}
when its reduced equation depends nontrivially on $t.$
A root contributes to the BCFW process only if it admits the positive reconstruction
described below.

\paragraph{Locality and degree.}
For $M(t)=e^{tX}$,
\begin{equation}
\label{eq:bcfw:commutator}
 \frac{d}{dt}(M(t)^{-1}P_I M(t))
       =M(t)^{-1}[P_I,X]M(t).
\end{equation}
For boosts, the same identity uses $u\,d/du$ and the normalized
generator. Thus operations entirely inside $I$ or $I^c$ leave
$Q_I$ unchanged. The partial momentum can change only if an arrow of the operation
joins $I$ to $I^c$. For a vertex palindrome or three-vertex boost,
this requires that exactly one or two of its three labels lie in $I$. For a single ordinary arrow $a\to b$,
the change is
\[
 t\sigma(\mathbf1_I(a)-\mathbf1_I(b))
       \lambda_a\widetilde\lambda_b^T,
\]
of rank at most one. A commuting ordinary packet therefore gives
$Q_I^M(t)=Q_I+tA_I$ and
\begin{equation}
\label{eq:bcfw:quadratic-pencil}
 s_I^M(t)=\det Q_I^M(t)
        =s_I+t\operatorname{tr}(\operatorname{adj}(Q_I)A_I)
                       +t^2\det A_I.
\end{equation}
If exactly one constituent ordinary bridge joins $I$ to $I^c$,
the degree is at most one; if two do, it is at most two. If the momentum is restricted to a reflection character plane, its
two linear entries are the factors $\mu_I^\pm(t)$, each affine-linear.
At an adjacent channel $I=\{i,i+1\}$ only the moving chiral factor
is included. These principles recover the usual and reflected linear
removal equations~\cite{Galashin2024,KroviTessler2026}. 
For two-column boosts, after clearing negative powers, the removal
equation has degree at most four in $u^2$, where $u=e^\tau>1$;
the positive parameter $u$ is then recovered by taking the positive
square root. 
For disjoint vertex palindromes of
\eqref{eq:atoms:vertex-operations}, the determinant has degree at
most two in $t$ when one triple meets both $I$ and $I^c$, and
at most four when two triples do.
For a single orbit of disjoint three-vertex boosts, the two-cut
determinant numerator splits into factors of degree at most four
in $u$, possibly after adjoining fixed algebraic constants
(such as $i$ to split a sum of two squares).
We stress again that whenever $s_I$ factorizes, including when it is a perfect square, the removal equations are obtained from the relevant irreducible factore. 
The vertex palindrome and three-vertex boost of
\eqref{eq:atoms:vertex-operations} are included as well; the latter
also uses $u$, but need not depend only on $u^2$.
The degree bounds and parameter normalizations are collected in
Appendix~\ref{app:bcfw:degrees}.

\begin{remark}
Crossing a cut need not make every reduced factor of its
Mandelstam vary. We determine the hit components from
\eqref{eq:bcfw:removal}. 
Different operations hit
different boundaries and hence yield different recursions, just as a change of shift location in usual BCFW recursion.
\end{remark}

\subsection{From a factorization to a BCFW tile}
\label{subsec:bcfw:construction}

Each component algebraically hit by the chosen operation, corresponds via the recipe of Section~\ref{subsec:boundary:subspaces} to a factorizable subspace. Assume, recursively, that the factors (the core and blobs) themselves can be tiled via smaller symmetric positroid cells. Then we obtain, by taking fiber/direct products or a soft deletion, and  quotient the internal rescalings if needed, $(d-1)$-dimensional symmetric domain cells $\mathcal B$
with image dimension $d-1,$ and positive parameterizations thereof. Note that by symmetric positroid cells we mean different objects for the core and blobs: the core inherits an action of the symmetry group $G,$ while the blob has its local symmetry group, which is a subgroup of $\mathbb{Z}_2\times\mathbb{Z}_2$ and might be trivial.

Let $C_{\mathcal B}(\mathbf a)$, with
$\mathbf a\in\mathcal P_{\mathcal B}$, be the resulting parametrization
by independent recursive parameters, and write $T_M$ for the domain of the parameter $t$. Define
\begin{equation}
\label{eq:bcfw:cell}
 \begin{gathered}
 \beta_{\mathcal B,M}:
 \mathcal P_{\mathcal B}\times T_M\longrightarrow X^{\geq},
 \qquad
 (\mathbf a,t)\longmapsto C_{\mathcal B}(\mathbf a)M(t),\qquad
 \Pi_{\mathcal B,M}=\mathcal B\,M(T_M).
 \end{gathered}
\end{equation}
We use cell-building steps for which $\Pi_{\mathcal B,M}$ is the
whole symmetric positroid cell supplied by
Theorem~\ref{thm:dom:cells}, and retain dimension $d$ and full image
rank. The domain parametrization $\beta_{\mathcal B,M}$ need not
be birational, but is injective when restricted to the positive subspaces. 
The cells $\Pi_{\mathcal{B},M}$ are called \emph{BCFW} cells, and the depend on the choices of shift operations (in the current notation $M$ is the last operation, and $\mathcal B$ encapsulates the lower recursive choices). 

$\Pi_{\mathcal{B},M}$ is said to be a \emph{BCFW tile} if it maps injectively to the amplituhedron. In practice we are interested in finding a \emph{positive reconstruction} for a target point $y.$ 
Positive reconstruction is a preimage problem in the fiber of $y$: for a root $t\in T_M$ of
\eqref{eq:bcfw:removal}, one seeks $C_*\in X^{\geq}$ with
$\Phi(C_*)=y$ such that $C_*M(t)^{-1}\in\mathcal B$, using the
transferred external datum of \eqref{eq:bcfw:promoted}.
For a direct-sum term, the shifted total momentum must vanish on
every prescribed block, not merely have zero determinant.
For a soft term, the reconstructed lower source must have the
prescribed loops and coloops, forcing the corresponding
$\lambda$- or $\widetilde\lambda$-columns to vanish.
These conditions are not implied by a root of
$\det Q_I^M(t)=0$ alone, and follow our finer geometric/combinatorial analysis of Section~\ref{sec:boundaries}.
\subsection{The BCFW tiling conjecture}
\label{subsec:bcfw:conjectures}
\paragraph{BCFW tilings.}
\begin{conjecture}[BCFW tiling by whole cells]
\label{conj:bcfw:tiling}
Fix $N,k,G,\rho,$ and external data satisfying \eqref{eq:ext:requirements}. 
Let $\mathfrak T$ be a collection of domain cells constructed recursively via the above process.
Then every domain cell $\Pi\in\mathfrak T$ has dimension $d=\dim\SymAmp{G}{\rho}_{N,k}$ and maps injectively to $\SymAmp{G}{\rho}_{N,k},$
and
\begin{equation}
\label{eq:bcfw:tiling}
 \Phi(\Pi)\cap\Phi(\Pi')=\varnothing\quad(\Pi\ne\Pi'\in\mathfrak{T}),
 \qquad
 \mathcal M=\bigcup_{\Pi\in\mathfrak T}\overline{\Phi(\Pi)}.
\end{equation}
\end{conjecture}
In particular, a shift which induces a polynomial of degree greater than $1$ does not prescribe several
tiles or a cover of a tile. In all our experiments, in these cases, there was a unique root which provided a positive preimage, 
and the conjecture asserts positive injectivity on the whole resulting cell even beyond algebraic degree one.

\paragraph{Experimental evidence for BCFW tilings.}
We tested the proposed recursion in small cyclic, linear-reflection
and dihedral models, with and without exchange, as well as in the
reflected and rotated Lagrangian models. In the two Lagrangian
families the tests reached $(N,k)=(10,5)$, and in the rotated
family they continued to $(12,6)$.
The experiments used several choices of positive external data,
without imposing immanant positivity as a general restriction.
Targets were generated by sampling points in the positive domain,
independently of the candidate cells, and their preimages were
sought in every cell of a fixed recursive collection.
Reconstructed sources were checked for nonnegativity, membership
in the prescribed cell, and equality of both projected target
planes. For the retained collections in the completed coverage
tests, each sampled generic target had exactly one preimage in
exactly one cell. We also compared different recursive choices
and tested points close to boundaries. In small cases, these
checks were supplemented by symbolic inverse formulas and exact
descriptions of the nonnegative fibers.

The experiments reveal some exotic features, which do not occur in the study of previously known amplituhedra. In the linear vertex-$D_3$ model at
$(N,k)=(6,3)$, two distinct domain cells map injectively onto the
same open image. The explanation is rather elegant: several tangential directions of a common parent cell of the two domain cells are contained in representations, which belong to the kernel of the amplituhedron map, and hence the map is blind to these directions. In the linear $D_3$ model at $(9,4)$, different
shifts give candidate tilings with one and two cells, respectively,
both passing the coverage tests. Thus neither the source cell
representing a tile nor the cardinality of a tiling is determined
by the image alone. Higher-degree removal provides another
substantial test: in a twelve-label reflection-and-exchange
example, an irreducible quartic equation yields a unique positive
lower reconstruction at each audited target. Exact positivity
certificates exclude the other admissible shift roots throughout
the regular source fiber, not merely at the source used to
generate the target; see Appendix~\ref{app:bcfw:degrees}.
These results support the conjecture, but of course do not substitute a proof.
\subsection{Two symmetric examples}
\label{subsec:bcfw:examples}

\paragraph{Fraser's cyclic model.}
Let $N=p\ell$, $\ell\geq2$, $k=\alpha p+\beta$, $0\leq\beta<p$,
with linear rotation $r(i)=i+\ell$. Use the folded ordinary bridge
\begin{equation}
\label{eq:bcfw:Fraser-bridge}
 M_a(t)=\prod_{j=0}^{p-1}x_{a+j\ell}(t)
\end{equation}
(or its transpose), with the inherited seam signs. For
$I=[b,b+m-1]$ and $2\leq m<\ell$, a contribution to the BCFW process can occur when
a cut of $I$ has residue $a\pmod\ell$; exactly one bridge member
then crosses $I$ and its removal is linear. At $m=\ell$, the
two cuts have the same residue. When this is $a$, the determinant
equation has degree at most two, with quadratic coefficient
$\det A_I$ in \eqref{eq:bcfw:quadratic-pencil}; the cyclic
representations can force this coefficient to vanish.
\begin{example}
Here is a small example of quadratic shift. $K_{k,N}$ represents, as usual, the cyclically symmteric point of $\Gr_{\geq}(k,N).$ Take $(p,\ell,k)=(4,3,2)$, the input
$C=K_{2,12}$, $\mathcal U=0$, $\mathcal W=K_{4,12}$ gives,
up to a nonzero constant,
\[
 s_{\{1,2,3\}}^{M_3}(t)
 \ \propto\ \frac{\sqrt3}{2}
       +\frac{\sqrt6-3\sqrt2}{4}\,t-\frac12t^2.
\]
\end{example}
For a representative blob of rank $b_0$ on $m+1$ labels, the
factorization cell and the corresponding BCFW cell have the schematic form
\begin{equation}
\label{eq:bcfw:Fraser-cell}
 \begin{gathered}
 \mathcal B_{I,b_0}=\Pi_c\star_{C_p}\Pi_{\rm mom},\qquad
 (N_c,k_c)=\bigl(p(\ell-m+1),\,k-p(b_0-1)\bigr),\\
 \Pi_{I,b_0}=\mathcal B_{I,b_0}M_a(\mathbb R_{>0}).
 \end{gathered}
\end{equation}
The star denotes gluing $p$ equally parametrized copies of the
ordinary blob to the core. The local ranks and image dimensions
are those of Section~\ref{subsec:boundary:subspaces}. For $m=2<\ell$, the moving branches for $M_a(t)$ are
$\langle a+1,a+2\rangle_t=0$ with blob rank $b_0=1$, and
$[a-1,a]_t=0$ with blob rank $b_0=2$, together with their
rotational translates. Transposing the bridge exchanges the
two chiralities. When $m=\ell=2$, both cuts are crossed,
so both chiral factors must be checked separately.
At the critical length $m=\ell$, the core has $p$ labels.
For $0<\beta<p$, it is the \emph{zero-dimensional} rotationally
symmetric point $K_{\beta,p}$:
\begin{equation}
\label{eq:bcfw:Fraser-zero-core}
 \Pi_c=\{K_{\beta,p}\},\qquad b_0=\alpha+1.
\end{equation}
The point still fixes the connector incidence and representations,
and is retained in the critical construction.
For $\beta=0$, the rank equation instead gives
$(k_c,b_0)=(0,\alpha+1)$ or $(p,\alpha)$.
The core then forces all connecting coordinates to zero or
leaves them unconstrained, respectively. In either case,
deleting the connecting labels gives a direct sum of rank-$\alpha$
planes on the $\ell$-label sectors, with $Q_J=0$ on each sector.
Its image dimension is therefore computed by
\eqref{eq:bdy:image-dimension} with $g_{\mathcal T}=0$.
The resulting tiling is an instance of
Conjecture~\ref{conj:bcfw:tiling}.

\paragraph{Shevchenko's half-turn model.}
Put $N=2n$, $k=n$, and use the half-turn $\vartheta(i)=i+n$ of
Section~\ref{subsec:bg:Shevchenko}. The domain is the fixed locus
of $S_n^n\mathcal D$, a half-turn followed by an exchange.
We denote its image by $\RotAmp_n$ and use the paired bridge
$M(t)=M_1^{\mathrm{rot},\rightarrow}(t)$ of
\eqref{eq:bg:rotated-atoms}:
\begin{equation}
\label{eq:bcfw:Shevchenko-bridge}
 M(t)=x_1(t)y_{n+1}(t)
     =I+t e_{12}+t e_{n+2,n+1}.
\end{equation}
For $n\geq3$, the contributing interval families, modulo half-turn
and complement, are
\begin{equation}
\label{eq:bcfw:Shevchenko-hits}
 I_m^-=[2,m+1],\qquad I_m^+=[n+2-m,n+1]\quad(2\leq m<n),
 \qquad H=[2,n+1].
\end{equation}
For $m=2$, keep one moving chiral component of each displayed
interval (the angle-bracket component for
\eqref{eq:bcfw:Shevchenko-bridge}; its half-turned partner is
square-bracket). For $3\leq m<n$, use ordinary blob ranks
$2\leq b_0\leq m-1$. The half-circle contributes the sectors
$2\leq b_0\leq n-1$. The short-channel equations are linear;
the half-circle has a symmetric momentum pencil and a
generally quadratic determinant equation.

The domain factorization is
\begin{equation}
\label{eq:bcfw:Shevchenko-cell}
 \begin{gathered}
 \mathcal B_{I,b_0}
   =\Pi_{\rm mom}\star_*\Pi^{\rm rot}_{n-m+1}
                \star_{\vartheta(*)}\Pi_{\rm mom}^{\vartheta,\vee},\\
 \Pi_{I,b_0}=\mathcal B_{I,b_0}M(\mathbb R_{>0}).
 \end{gathered}
\end{equation}
The half-turned color-dual blob shares the parameters of the
first blob. At $m=n$, the middle factor is the connector-only
$\LG_{\geq}(1,2)^{\mathrm{rot}}$, which constrains the internal
momentum to be symmetric and null.

\begin{figure}[!ht]
\centering
\begin{minipage}[b]{0.48\linewidth}\centering
\includegraphics[height=52mm]{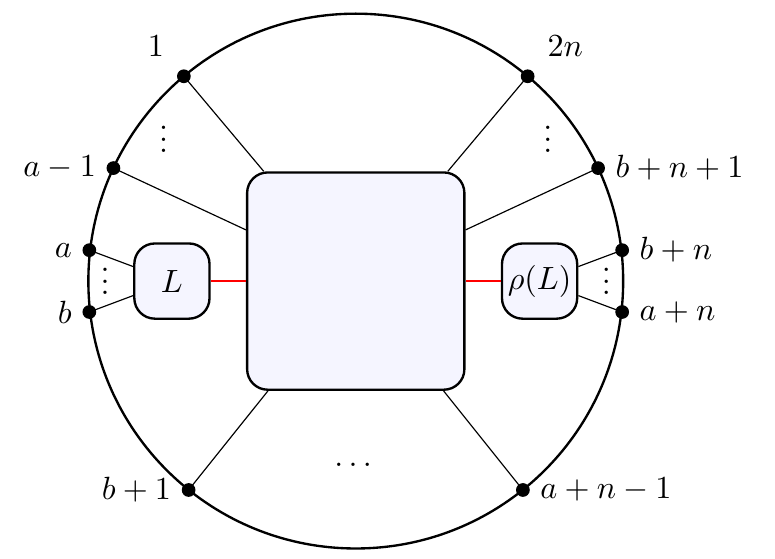}\\[-1mm]
\small $2\leq |I|<n$
\end{minipage}\hfill
\begin{minipage}[b]{0.48\linewidth}\centering
\includegraphics[height=52mm]{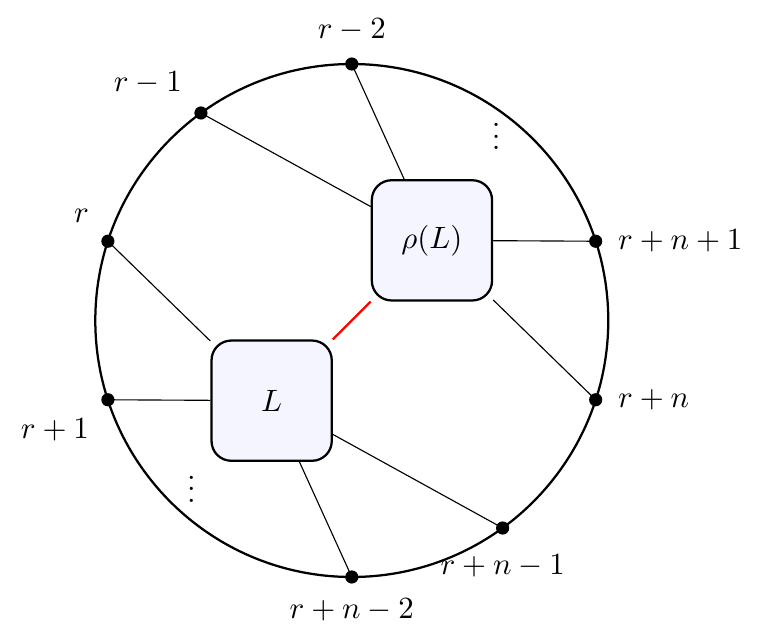}\\[-1mm]
\small $|I|=n$
\end{minipage}
\caption{The rotational factorization tangles. Here $L$ is an ordinary
blob; we write $\rho(L)=\vartheta(L)^\vee$ for its half-turned
color-dual. The two blobs share one parameter set. The middle rotational
core in the left diagram becomes connector-only in the right. The normal
pair \eqref{eq:bcfw:Shevchenko-bridge} is then attached.}
\label{fig:bcfw:rotational-factorization}
\end{figure}

\begin{figure}[!ht]
\centering
\includegraphics[width=0.94\linewidth]{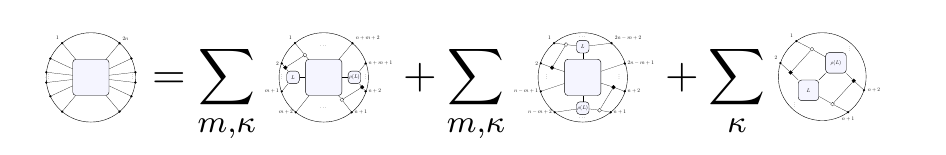}
\caption{One step for \eqref{eq:bcfw:Shevchenko-bridge}: the two
short-interval families and the half-circle family. The sums include
the admissible ranks and the selected adjacent ends; the lower pieces
are subsequently expanded.}
\label{fig:bcfw:rotational-recursion}
\end{figure}

A single recursive step gives $(n-2)^2+2$ contributions, before
the lower factors are themselves expanded.

Although differential forms are not pursued here, we also tested
the pushforwards of the natural logarithmic bridge forms, which are the candidate canonical forms, in this
model at ranks three and four. The resulting sums satisfy cyclic
covariance when all distinct algebraic inverse branches are
included once; retaining only one branch in a genuinely two-sheeted
sector fails this test. A systematic study of these forms and
their interpretation is deferred to future work.

\appendix
\section{Dimensions of the symmetric Grassmannians}
\label{app:domain-dimensions}
For convenience we evaluate \eqref{eq:dom:character-dimension} for the
families of Section~\ref{subsec:symmetry-families}. Write
$F_{p;k,N}=d^{\mathrm{cyc}}_{p;k,N}$ from
\eqref{eq:bg:Fraser-component}, with $F_{1;k,N}=k(N-k)$, and set
\begin{equation}
\label{eq:dom:axis-sign}
 \eta_\rho=\begin{cases}
 0,&\ell\text{ odd},\\
 1,&\ell\text{ even, edge type},\\
 -1,&\ell\text{ even, vertex type},
 \end{cases}
 \qquad \varepsilon_m=m\bmod2\in\{0,1\}.
\end{equation}
For exchange families put $N=2n=p\ell$, $k=n$, and $F=F_{p;n,2n}$.
The two cyclic exchange dimensions are
\begin{equation}
\label{eq:dom:cyclic-exchange-dimensions}
 \begin{aligned}
 A_{p,\ell}&=\begin{cases}
 p\ell^2/8,&p,\ell\text{ even},\\
 p(\ell^2-1)/8,&p\text{ even},\ \ell\text{ odd},\\
 (p\ell^2-2\ell)/8,&p\text{ odd}\ (\ell\text{ even}),
 \end{cases}\\
 S_{p,\ell}&=\begin{cases}
 p\ell^2/4,&p\equiv0\pmod4,\\
 p\ell^2/4+\ell/2,&p\equiv2\pmod4.
 \end{cases}
 \end{aligned}
\end{equation}
For a reflection $h$ on the even label set, let $\eta(h)=1$ for an edge
axis and $-1$ for a vertex axis. In a mixed family write
$\eta_0=\eta(s)$ and $\eta_1=\eta(rs)=(-1)^\ell\eta_0$.
Then the domain dimensions are
\begin{equation}
\label{eq:dom:family-dimensions}
\renewcommand{\arraystretch}{1.22}
\begin{array}{l|l}
 \text{family}&d_\rho\\\hline
 \text{linear cyclic}&F_{p;k,N}\\
 \text{linear dihedral}&\tfrac12(F_{p;k,N}-\varepsilon_k\eta_\rho)\\
 \text{symmetric ABJM}&A_{p,\ell}\\
 \text{symmetric Shevchenko}\ (p\text{ even})&S_{p,\ell}\\
 \text{reflection-twisted}&\tfrac12 F+\tfrac12 n\eta_\rho\\
 \text{dihedral ABJM}&\tfrac12 A_{p,\ell}+\tfrac14(n-\varepsilon_n)\eta_\rho\\
 \text{mixed phase }0\ (p\text{ even})&
       \tfrac12 S_{p,\ell}+\tfrac14(-\eta_0\varepsilon_n+\eta_1n)\\
 \text{mixed phase }1\ (p\text{ even})&
       \tfrac12 S_{p,\ell}+\tfrac14(\eta_0n-\eta_1\varepsilon_n).
\end{array}
\end{equation}
To check the evaluation, a linear lift $S_k^b$ has character
$\chi_K(S_k^b)=\sum_{t=0}^{k-1}z_t^b$; an exchanging lift satisfies
$(S_n^bA_N)^2=(-1)^bS_n^{2b}$. Substitution in
\eqref{eq:dom:character-dimension} and summation of roots of unity gives
\eqref{eq:dom:cyclic-exchange-dimensions}. The reflection cosets are
accounted for by
\begin{equation}
\label{eq:dom:reflection-tangent-traces}
 \operatorname{tr}_{T_K\Gr(n,2n)}(h)=-\eta(h)\varepsilon_n,
 \qquad
 \operatorname{tr}_{T_K\Gr(n,2n)}(h\mathcal D)=\eta(h)n.
\end{equation}
These follow respectively from \eqref{eq:dom:reflection-trace} and the
signed square of $\widehat hA_N$. Their average over all reflections is
governed by $p^{-1}\sum_{b=0}^{p-1}\eta(r^bs)=\eta_\rho$, giving the
last four rows. At arbitrary rank a linear reflection has tangent trace
zero when $N$ is odd or $k$ is even, and $-\eta(h)$ otherwise; this gives
the linear-dihedral row. Thus the displayed dimensions include all rank
and axis parities allowed by the symmetry datum. They are dimensions of
the domain symmetric Grassmannian.
\section{Proof of the local momentum normal forms}
\label{app:bdy:determinant-types}
We prove Table~\eqref{eq:bdy:determinant-types} using the spinor types
in \eqref{eq:amp:spinor-exact-sequences}--\eqref{eq:amp:spinor-modes}.
We will assume that the indicated setwise symmetries are imposed.

\emph{Linear reflection.}
The reference mode pairs $\{0,k-1\}$ and $\{-1,k\}$ in
\eqref{eq:ext:mode-windows} are exchanged by a signed linear reflection.
The constant spinor types therefore have one $+1$ and one $-1$
eigenline each. Choosing their eigenbases gives
$R_{\lambda,h}=R_{\widetilde\lambda,h}=D_2:=\operatorname{diag}(1,-1)$.
For $hI=I$, \eqref{eq:bdy:channel-stabilizer} becomes
\begin{equation*}
 D_2QD_2=Q,\qquad
 D_2\begin{pmatrix}q_{11}&q_{12}\\q_{21}&q_{22}\end{pmatrix}D_2
   =\begin{pmatrix}q_{11}&-q_{12}\\-q_{21}&q_{22}\end{pmatrix}.
\end{equation*}
Thus $Q$ is diagonal and its determinant is the product of its two
entries. For $hI=I^c$, the right side is $-Q$ instead: the diagonal
entries vanish and $\det Q=-q_{12}q_{21}$.

\emph{A single exchanging symmetry.}
Let $\gamma$ be pure exchange or a reflection with exchange, and write
$T_\gamma^2=\sigma I$. Choose a basis of the $\widetilde\lambda$-plane and use its images
under $T_\gamma$ as the basis of the $\lambda$-plane.
In these bases, $R_{\lambda,\gamma}=I_2$.
Applying $T_\gamma$ a second time multiplies each original basis
vector by $\sigma$, since $T_\gamma^2=\sigma I$; hence
$R_{\widetilde\lambda,\gamma}=\sigma I_2$.
Consequently a setwise-fixed interval satisfies $Q=\sigma Q^T$.
Pure exchange has $T_\gamma=A_N$ and $\sigma=1$.
For reflection with exchange, $T_\gamma=\widehat h A_N$ and
\begin{equation*}
 \widehat h A_N=\sigma A_N\widehat h,\qquad
 \sigma=\begin{cases}
 +1,&\text{vertex axis},\\
 -1,&\text{edge axis}.
 \end{cases}
\end{equation*}
Indeed, on the even polygon the reflection respectively preserves or
reverses label parity; since $\widehat h^2=I$, this also gives
$T_\gamma^2=\sigma I$. Thus pure exchange and vertex exchange allow
symmetric $Q$, with determinant $q_{11}q_{22}-q_{12}^2$; edge exchange
allows $Q=\mu\left(\begin{smallmatrix}0&1\\-1&0\end{smallmatrix}\right)$,
with determinant $\mu^2$. This is where the signed square matters.

\emph{Reflection and pure exchange imposed separately.}
Keep the bases identifying pure exchange with transposition, so
$Q=Q^T$. Diagonalize the reflection on the first plane as $D_2$.
The same commutation relation with $A_N$ forces its matrix on the
second plane to be $\sigma D_2$, not an independently chosen $D_2$.
The simultaneous equations are therefore
\begin{equation*}
 Q=Q^T,\qquad Q=\sigma D_2QD_2.
\end{equation*}
For a vertex axis they give the diagonal plane; for an edge axis they
give $Q=\mu\left(\begin{smallmatrix}0&1\\1&0\end{smallmatrix}\right)$,
whose determinant is $-\mu^2$. These are the last two rows.

Without a nontrivial setwise symmetry there is no local linear
restriction on $Q$, giving the first row. This proves the table.
The full channel stabilizer and the orbit-sum relation may impose
further constraints. Finally, the two-particle identity follows from
Cauchy--Binet, and on any line $Q=\mu Q_0$ homogeneity gives
$\det Q=\mu^2\det Q_0$. 

\section{Reflection factors: selection and nonvanishing}
\label{app:reflection-factors}

\paragraph{Which reflection factor vanishes.}
The functions $\mu_I^\pm$ are the two factors in the
linear-reflection row of \eqref{eq:bdy:determinant-types}.
Let $h$ be a signed linear reflection preserving $I$ setwise.
In reflection eigenbases, ordered $(+,-)$ in both spinor planes,
\[
 Q_I=\begin{pmatrix}\mu_I^-&0\\0&\mu_I^+\end{pmatrix}.
\]
When $\kappa_C(I)=1$, the quotient $E_C$ of
\eqref{eq:bdy:connector-quotients} is a line.
Write $\chi_\eta$ for its signed one-dimensional representation,
where $h$ acts by $\eta\in\{+1,-1\}$ and the central $-I$
acts by $-1$.
As the proof below shows, a connector of sign $\eta$ makes
$\mu_I^\eta$ vanish and leaves $\mu_I^{-\eta}$ nonzero.
Thus the superscript labels the connector selecting the
vanishing factor, not the eigenline carrying the nonzero entry.

We write $[V]$ for the class of a representation $V$:
an equality of such classes is an equality of the multiplicities
of all irreducible representations, equivalently of characters.
Induction from a factor stabilizer assembles the representation
on all symmetry-related copies of that factor.

\begin{proposition}[Representation selection]
\label{prop:bdy:branch-selection}
Assume that the group acts without exchange.
Consider a connected core--blob configuration with outer intervals
$\mathcal J=\mathcal O(I)$, where $I$ is preserved by a linear
reflection. Require the connecting labels to be neither loops
nor coloops in their incident factors.
Let $C_c$ be the core and $C_b$ the representative augmented blob
on $I$, with connector character $\chi_\eta$.
Then
\begin{equation}
\label{eq:bdy:branch-character}
 [C]=[C_c]+\operatorname{Ind}_{H_I}^{\widehat G}[C_b]
             -\operatorname{Ind}_{H_I}^{\widehat G}[\chi_\eta],
 \qquad H_I=\operatorname{Stab}_{\widehat G}(I).
\end{equation}
For fixed parent, core and blob representation types, this permits
at most one connector character if the two induced connector
representations are nonisomorphic.
A permitted character must also be realizable by nonnegative
factors with the prescribed signed action on their external
and connecting labels.

Let $\mathcal S_\eta$ be the locus obtained by such a realized
connected construction. On its image,
\begin{equation}
\label{eq:bdy:selected-mu}
 \mu_{\mathrm{good}}=\mu_I^\eta=0,\qquad
 \mu_{\mathrm{other}}=\mu_I^{-\eta}\ne0.
\end{equation}
The selected factor also vanishes on
$\overline{\Phi(\mathcal S_\eta)}$; the other factor need not
remain nonzero on this closure.
\end{proposition}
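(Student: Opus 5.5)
The plan has two parts: a character identity obtained from the factor-extraction maps of Proposition~\ref{prop:bdy:positive-ranks}, and then a reading of the diagonal entries of $Q_I$ through the one-dimensional connector quotient of Corollary~\ref{cor:bdy:momentum-rank}.

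\emph{The class identity.} I would use the core extraction map $q_c:C\twoheadrightarrow C_c$. Its kernel is $\bigoplus_{J\in\mathcal J}C[J]$, and it is $\widehat G$-equivariant because the connecting coordinates are unique. This gives $[C]=[C_c]+[\bigoplus_J C[J]]$. The summands $C[J]$ are permuted by $\widehat G$ exactly as the intervals are, so $[\bigoplus_J C[J]]=\operatorname{Ind}_{H_I}^{\widehat G}[C[I]]$. For the blob, $C[I]$ is the kernel of the connector functional $C_b\to\mathbb R$, $x\mapsto x_{*}$. This functional is nonzero by the non-loop assumption, and $H_I$ acts on it through $\chi_\eta$. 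Hence $[C[I]]=[C_b]-[\chi_\eta]$, and inducing yields \eqref{eq:bdy:branch-character}. The uniqueness remark then follows by rearranging. With the parent, core and blob types fixed, the identity determines $\operatorname{Ind}[\chi_\eta]$. If the two induced characters are nonisomorphic, at most one $\eta$ is compatible. Realizability is only a caveat in the statement, so it needs no proof.

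\emph{Selecting the vanishing factor.} I would work on the regular part of $\mathcal S_\eta$, where $\kappa_C(I)=1$. By Corollary~\ref{cor:bdy:momentum-rank}, $Q_I$ then has rank exactly one. It factors through the perfect pairing $E_C\times E_D\to\mathbb R$ of \eqref{eq:bdy:connector-quotients}, composed with $q_C|_L$ and $q_D|_T$, and both of these are onto the respective lines. The quotient $E_C=C/(C[I]\oplus C[I^c])$ is the connector line, and $h$ acts on it by $\eta$. This should follow from the description of $E_C$ as the image of the connecting coordinate $t_I$, and it is the step needing care about signs. Since $E_D$ is dually paired and the pairing is $h$-invariant, $h$ acts on $E_D$ by $\eta$ as well. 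Hence $q_C|_L$ kills the $(-\eta)$-eigenline of $L$, and $q_D|_T$ kills the $(-\eta)$-eigenline of $T$.

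In the $(+,-)$ eigenbases of $L$ and $T$, the only surviving entry of $Q_I$ is the $(\eta,\eta)$ entry, and rank one makes it nonzero. By the labelling convention fixed before the proposition, the entry in the $\eta$-eigenline slot is $\mu_I^{-\eta}$. So $\mu_I^\eta=0$ and $\mu_I^{-\eta}\neq0$. On the image of points of $\mathcal S_\eta$ that are not regular, I would argue that $\kappa\le 1$ still holds and the eigenline argument still gives $\mu_I^\eta=0$. The vanishing of $\mu_I^\eta$ extends to the closure by continuity. Nonvanishing of the other factor is not claimed there.

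\emph{Main obstacle.} The delicate step is the identification of the $h$-character of $E_C$ with the connector character $\chi_\eta$, including the signed lifts and the central $-I$ acting by $-1$. I would first check it on the explicit amalgamation formula, tracking the gluing sign $\varepsilon_I$ under $\widehat h$. A second point needs the same care: the superscript calibration of $\mu^\pm$ must be consistent with the parity statement. This is essentially a matter of convention, fixed in the paragraph preceding the proposition.
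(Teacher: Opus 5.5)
Your proposal is correct and matches the paper's proof. The one variation is that you derive \eqref{eq:bdy:branch-character} from two short exact sequences: the kernel $\bigoplus_J C[J]$ of the core extraction, and $C[I]$ as the kernel of the connector functional on $C_b$. The paper instead uses the single sequence $0\to C\to C_c\oplus\bigoplus_J C_{b,J}\to\bigoplus_J\ell_J\to0$. The selection step, via the $h$-invariant perfect pairing on $E_C\times E_D$, is identical to the paper's. The sign issue you flag is immediate: the connecting coordinate $C\to\ell_I$ is $H_I$-equivariant and surjective with kernel $C[I]\oplus C[I^c]$, so $E_C\cong\ell_I$ carries $\chi_\eta$ by definition and $\kappa_C(I)=1$ at every point of $\mathcal S_\eta$, which makes your regular/non-regular case split unnecessary.
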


\begin{proof}
\emph{The representation identity.}
For each blob interval $J$, let $\ell_J$ be the one-dimensional
space of values of its connecting coordinate.
The signed action transports these lines along the blob orbit.
In particular, $\ell_I$ has character $\chi_\eta$.
The common connecting coordinate on $C$ has kernel
$C[I]\oplus C[I^c]$, so $\ell_I$ identifies with $E_C$.

The coordinate-matching construction gives an exact sequence
of $\widehat G$-representations
\[
 0\longrightarrow C
 \xrightarrow{\ \iota\ }
 C_c\oplus\bigoplus_{J\in\mathcal J}C_{b,J}
 \xrightarrow{\ \Delta\ }
 \bigoplus_{J\in\mathcal J}\ell_J
 \longrightarrow0.
\]
Here $\iota(x)$ is the tuple of factor vectors determined by
$x$, including their connecting coordinates.
If the matching equation at $J$ is
$(u_J)_{*_J}=\varepsilon_J(u_c)_{*_J}$, then
\[
 \Delta\bigl(u_c,(u_J)_J\bigr)
   =\bigl((u_J)_{*_J}
             -\varepsilon_J(u_c)_{*_J}\bigr)_J.
\]
Thus $\ker\Delta$ consists exactly of the matching tuples.
Deleting their connecting coordinates identifies them with $C$,
as proved above. The map $\Delta$ is surjective because each
blob's connecting coordinate functional is nonzero.

The two direct sums are respectively the representations induced
from $C_b$ and $\chi_\eta$ at the stabilizer $H_I$.
Taking representation classes proves
\eqref{eq:bdy:branch-character}.
Theorem~\ref{thm:dom:ball} fixes the types on the chosen
nonnegative domains, with Fraser's distinguished component as
the cyclic input
\cite[Definition/Lemma~4.8]{Fraser2020}.
For fixed types, two different induced connector classes cannot
both equal the prescribed difference.
This is a necessary representation test; it does not by itself
construct nonnegative factors with the proposed connector sign.

\smallskip\noindent
\emph{The pairing and the vanishing factor.}
The pairing used here is the partial coordinate pairing from
Lemma~\ref{lem:bdy:vanishing}, descended to the quotients of \eqref{eq:bdy:connector-quotients}:
\[
 \overline\beta_I:E_C\times E_D\longrightarrow\mathbb R,
 \qquad
 \overline\beta_I(\bar x,\bar y)=\sum_{i\in I}x_i y_i,
 \qquad x\in C,\quad y\in D=C^\perp.
\]
Here $\bar x,\bar y$ denote quotient vectors. This pairing is well defined and perfect by the preceding
corollary. Its pullback to $L\times T$ is the bilinear form
represented by $Q_I$.

For any signed linear symmetry $\gamma$ preserving the unordered
cut $[I]$, one has
\[
 \overline\beta_I(\gamma\bar x,\gamma\bar y)
   =\delta_I(\gamma)\overline\beta_I(\bar x,\bar y).
\]
Indeed, relabelling replaces the sum over $I$ by the sum over
$g_\gamma^{-1}I$; if this is $I^c$, the sum changes sign because
$\langle x,y\rangle=0$.
Equivalently, $E_D$ is the dual representation of $E_C$ twisted
by $\delta_I$.

For our setwise reflection $h$, $\delta_I(h)=1$.
At connectivity one, both quotient lines therefore have
reflection sign $\eta$.
The projections of $L$ and $T$ onto these lines are surjective
by \eqref{eq:bdy:positive-kernel-rank} and its dual.
Only their $\eta$-eigenlines can contribute, and their pairing
is nonzero. Thus the $(\eta,\eta)$ entry of $Q_I$ is nonzero,
while the other diagonal entry vanishes.
With the displayed ordering, these entries are respectively
$\mu_I^{-\eta}$ and $\mu_I^\eta$, proving
\eqref{eq:bdy:selected-mu}.
Continuity gives vanishing of the selected factor on the closure
of the factorization image.
\end{proof}

For a linear reflection exchanging $I$ and $I^c$, the same
pairing law has $\delta_I(h)=-1$: a connector of sign $\eta$
pairs with a dual connector of sign $-\eta$.
The matrix is off-diagonal, and the same convention labels
its vanishing entry by $\mu_I^\eta$.
Such a reflection belongs to the stabilizer of $[I]$, not to
the setwise stabilizer $H_I$ used in
\eqref{eq:bdy:branch-character}.

When exchange is present, the representation identity is taken
on the linear subgroup $\widehat H$, summing over its blob
orbits. Exchanging elements additionally relate the factors
and interchange their two spinor sectors.

\begin{remark}
The representation identity \eqref{eq:bdy:branch-character}
need not determine the reflection sign on the representative
connector: the two distinct local characters $\chi_+$ and $\chi_-$
may induce isomorphic $\widehat G$-representations.
In that case, one must also check which sign is compatible with
the prescribed nonnegative blob, as the following example shows. At linear vertex
$D_2$, $(N,k)=(8,4)$, take $r=S_4^4$, $h=S_4J_8$, so $r^2=-I$.
For $H_I=\{\pm I,\pm h\}$ and
$\chi_\pm(-I)=-1$, $\chi_\pm(h)=\pm1$, one has
\begin{equation}
\label{eq:bdy:induction-coincidence}
 \operatorname{Ind}_{H_I}^{\widehat G}\chi_+
   \simeq\operatorname{Ind}_{H_I}^{\widehat G}\chi_-.
\end{equation}
Conjugation by $r$ exchanges $h$ with $-h$. Nevertheless, on a positive
rank-two blob with labels $(4,5,6,*)$, $h$ swaps $4,6$ and fixes $5$
with sign $+$. If its connector has sign $\eta$, positivity of
$\Delta_{46}$ gives determinant $-1$ for the row action, while positivity
of $\Delta_{5*}$ gives determinant $\eta$. Thus $\eta=-1$.
This selects the factor for that construction; it is not a global
exclusion for all source-plane types.
\end{remark}

\paragraph{When the other factor is globally nonzero.}
We use the same perfect pairing
$\overline\beta_I:E_C\times E_D\to\mathbb R$, now without
assuming $\kappa=1$.
Its transformation law pairs a character $\chi$ of the signed
linear channel stabilizer with $\delta_I\chi^{-1}$.
If the entire subspaces carrying these two characters are
one-dimensional, the pairing between them is perfect.
Consequently, if $\mu_{\mathrm{other}}$ is obtained by pairing
nonzero projected spinor vectors on these lines, it cannot vanish.
When these conditions hold throughout the fixed nonnegative
domain, they prove nonvanishing on the entire amplituhedron.
Here is one uniform application.
\begin{corollary}[A half-turn and gap reflection at rank three]
\label{cor:bdy:global-mu-rank-three}
Let $N=2m\geq6$, $k=3$, and suppose the linear symmetry contains
$r=S_3^m$ and $h=J_N$. For $I=\{1,\ldots,m\}$ and all external data
satisfying \eqref{eq:ext:requirements}, the factor not selected by the
connectivity-one construction is nonzero on the \emph{entire}
amplituhedron. The same holds at $k=N-3$ by positive duality, with the
factor labels transported.
\end{corollary}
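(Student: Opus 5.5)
The plan is to apply the criterion stated just before the corollary: the unselected factor is a pairing between one-dimensional character lines of $E_C$ and $E_D$. If these lines are one-dimensional at every point of $X^{\geq}$, and the projected spinors have nonzero components on them, the factor cannot vanish.

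First I will set up the channel. Here $r=S_3^m$ is the half-turn, and $r^2=S_3^N=(-1)^{k-1}I=I$, since $k=3$. The map $r$ sends $I=[1,m]$ to $I^c$, so $\delta_I(r)=-1$. The reflection $h=J_N$ has a gap axis and preserves $I$ setwise, so $\delta_I(h)=1$. Thus the signed linear channel stabilizer contains the Klein group $\{\pm I,\pm h,\pm r,\pm rh\}$ modulo the central $-I$, and $Q_I$ is diagonal in $h$-eigenbases with entries $\mu_I^\pm$.

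Next I will bound the connector quotient. At $k=3$ we have $\kappa_C(I)=3-\dim C[I]-\dim C[I^c]\le 3$. Since $r$ maps $C[I]$ isomorphically onto $C[I^c]$, the two dimensions agree, so $\kappa\in\{1,3\}$. Then $E_C$ is odd-dimensional and $r$-stable.

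The representation step comes next. By \eqref{eq:dom:constant-type}, $C\simeq K_{3,N}$ as an $\widehat H$-representation, spanned by the modes $v_0,v_1,v_2$. From \eqref{eq:dom:Fourier-generators}, $r$ acts on $v_t$ by $z_t^m=\pm1$, and the signs alternate in $t$. From \eqref{eq:dom:reflection-trace}, $h$ fixes the self-paired mode $v_1$ and swaps $v_0,v_2$. This gives the joint $(h,r)$-characters of $C$. The supported part $C[I]\oplus C[I^c]$ is $r$-invariant and has the form $V\oplus rV$, so it carries each $r$-sign with equal multiplicity. Subtracting its character from that of $C$ should show that every $(h,r)$-character occurring in $E_C$ has multiplicity at most one, and in particular that the character paired with the unselected factor occurs exactly once. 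I will do the same for $E_D$ using the modes of $K^\perp$ and the twist $\delta_I$. The spinor plane $L$ has $\widehat H$-type given by modes $0$ and $k-1=2$, by \eqref{eq:amp:spinor-modes}. The $h$-eigenvectors of $L$ are $v_0\pm v_2$ up to phases. Because $r$ acts on $v_0$ and $v_2$ by equal signs ($z_0^m$ and $z_2^m$ differ by a factor $e^{2\pi i}=1$), each $h$-eigenline of $L$ has a definite $r$-sign.

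Finally I will show nonvanishing. The surjectivity statement \eqref{eq:bdy:positive-kernel-rank} gives $\dim q_C(L)=\min(2,\kappa)$, and the kernel $C[I]\oplus C[I^c]$ is a disjoint sum of interval spaces, hence nonnegative by Lemma~\ref{lem:C_J_nonneg}. Hence the projection of $L$ meets every character line of $E_C$ carrying a character of $L$. The dual projection for $T$ behaves the same way. Perfectness of $\overline\beta_I$ together with multiplicity one then makes the unselected diagonal entry of $Q_I$ a nonzero pairing, uniformly over $X^{\geq}$, including its boundary. The case $k=N-3$ follows by applying $\mathcal D$, which exchanges the roles of $L$ and $T$.

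The main obstacle is the multiplicity-one claim at $\kappa=3$. The character bookkeeping has to be done honestly with the signed lifts, and it has to show that the two spinor eigenlines cannot both collapse into a character that appears with multiplicity two. If the subtraction argument fails, I will instead use positivity of $C[I]$ and Lemma~\ref{lem:bdy:positive-pairing} directly, to show that the relevant eigen-component of $q_C(L)$ is nonzero.
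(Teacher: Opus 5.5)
\textbf{The error.} Your overall route is the same as the paper's. It uses the Fourier-window types $C\simeq\chi_{++}\oplus\chi_{-+}\oplus\chi_{--}$, $L\simeq\chi_{-+}\oplus\chi_{--}$ and $T\simeq\chi_{++}\oplus\chi_{+-}$, where $\chi_{\alpha\beta}$ means $r\mapsto\alpha$, $h\mapsto\beta$. It uses the parity argument $\kappa\in\{1,3\}$ and a perfect pairing between one-dimensional slots filled by spinor lines. However, your setup contains a concrete error that breaks the character bookkeeping: $h=J_N$ does \emph{not} preserve $I$. Since $h(i)=N+1-i$, it sends $\{1,\dots,m\}$ onto $\{m+1,\dots,2m\}=I^c$, because its gap axis passes exactly through the two ends of $I$. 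So $\delta_I(h)=-1$, just as $\delta_I(r)=-1$. The reflection fixing $I$ setwise is $rh\colon i\mapsto m+1-i$. Consequently $Q_I(h\ell,ht)=-Q_I(\ell,t)$ for $\ell\in L$, $t\in T$. In $h$-eigenbases $Q_I$ is therefore off-diagonal, not diagonal. Pairings of equal $h$-eigenlines, which you called $\mu_I^\pm$, vanish identically, and the two genuine factors pair opposite $h$-eigenlines.

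\textbf{Why it breaks the argument.} Under your premise, $h$ preserves $C[I]$ and commutes with $r$. So at $\kappa=1$ the supported part $V\oplus rV$ would be $\chi_{+\beta}\oplus\chi_{-\beta}$. Inside $C$ this forces $\chi_{++}\oplus\chi_{-+}$, hence $E_C=\chi_{--}$. With your twist $\delta_I=\chi_{-+}$, you would then pair the $\chi_{--}$ line of $L$ with the $\chi_{+-}$ line of $T$. You would be ``proving'' that an identically zero entry is nonzero, and the bookkeeping stays internally consistent, so nothing would flag the mistake.

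\textbf{The correct input.} $h$ also swaps $C[I]$ and $C[I^c]$, so both $r$ and $h$ are traceless on the supported part. This gives $\chi_{++}\oplus\chi_{--}$, hence $E_C=\chi_{-+}$ at $\kappa=1$. The surviving factor $\mu_{\mathrm{other}}$ is the pairing of the $\chi_{-+}$ slot of $E_C$ with the $\chi_{+-}$ slot of $E_D$, using $\delta_I=\chi_{--}$.

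\textbf{What happens once this is fixed.} The step you flagged as the main obstacle disappears. At $\kappa=3$ one has $E_C=C$, which is multiplicity-free, and $L\to E_C$, $T\to E_D$ are injective. So both slots are one-dimensional and filled by spinor lines, and no fallback positivity argument is needed. Your duality step for $k=N-3$ is fine.
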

\begin{proof}
Write $\chi_{\alpha\beta}$ for the character with
$r\mapsto\alpha$, $h\mapsto\beta$. The already fixed Fourier windows give
\begin{equation}
\label{eq:bdy:rank-three-characters}
 C\simeq\chi_{++}\oplus\chi_{-+}\oplus\chi_{--},\qquad
 L\simeq\chi_{-+}\oplus\chi_{--},\qquad
 T\simeq\chi_{++}\oplus\chi_{+-}.
\end{equation}
Both $r,h$ interchange $I,I^c$. Thus $\kappa$ is odd and at most three.
If $\kappa=1$, the supported summands have trace zero for $r,h$, giving
$E_C=\chi_{-+}$; positive rank gives nonzero projected spinors.
If $\kappa=3$, $E_C=C$, and both spinor projections inject.
In either case the $\chi_{-+}$ slot of $E_C$ and its paired
$\chi_{+-}$ slot of $E_D$ are one-dimensional and are filled by the
respective spinor lines. Their pairing is the surviving factor at
$\kappa=1$, hence is $\mu_{\mathrm{other}}$, and is nonzero everywhere.
Positive duality exchanges the spinor planes and transposes $Q_I$,
proving the corank-three statement.
\end{proof}
This conclusion persists on further symmetric fixed loci. A single
gap reflection at arbitrary odd rank still selects the $+$ connector
at $\kappa=1$ by its trace, but that alone does not control the pairing
at larger connectivity. Global zeros of $\mu_{\mathrm{other}}$ in
higher-multiplicity cases, including those not resolved by
\eqref{eq:bdy:induction-coincidence}, will be studied in a sequel.
On the direct-sum subspaces both factors vanish; their image dimensions
are counted below and are not a separate mechanism.

\begin{remark}
\label{rem:bdy:strong-positivity}
Under the strong (immanant-positive) external-data assumption,
Galashin's connectivity criterion gives
\[
 s_I\bigl(\Phi(C)\bigr)=0
 \quad\Longleftrightarrow\quad \kappa_C(I)\leq1
\]
for cyclic intervals $I$
\cite[Corollary~6.21 and Lemma~6.24]{Galashin2024}.
In fact, immanant nonnegativity already suffices.
If all realized connectivity-one domain pieces select the same factor,
then the other factor can vanish only on the direct-sum image.
For an odd-rank gap reflection interchanging $I,I^c$, the trace selects
one character and odd connectivity excludes a direct sum; under this
additional hypothesis the other factor is nowhere zero for every odd
rank. This remark is not used in the vanishing, rank or image-dimension arguments of this paper.
\end{remark}

\section{Shift degrees and reduced removal equations}
\label{app:bcfw:degrees}
Fix a target $y=(\lambda,\widetilde\lambda)$ and an allowed operation
$M$. The removal calculation of \eqref{eq:bcfw:promoted} uses
\begin{equation}
\label{eq:bcfw:appendix-removal}
 Q_I^M(t)=\lambda M(t)^{-1}P_I M(t)\widetilde\lambda^T,
 \qquad s_I^M(t)=\det Q_I^M(t),
\end{equation}
where $P_I$ is the diagonal coordinate projector. We use $t>0$
for additive operations and write the same expressions in
$u=e^\tau>1$ for boosts. Note that
\eqref{eq:bcfw:promoted} describes the motion of the external data.

We first bound the determinant numerator and then pass to the
reduced factors used for removal. Structural factorizations are taken
on complexified kinematics, keeping the target coordinates as variables;
numerical algebraic degrees are stated separately over their field of
arithmetic. Symmetry and source rank can lower the polynomial degrees. For example, in the linear $D_3$ model at $(N,k)=(9,4)$, the
two-cut removal equation reduces to a product of two linear
factors in $u^2$ after discarding the denominator-clearing
monomial; see Table~\ref{tab:bcfw:reported-polynomials}.
In Table~\ref{tab:bcfw:degrees}, $r_I$ counts the disjoint consecutive
two- or three-label blocks of the operation which meet both $I$
and $I^c$. For a cyclic interval, $r_I\leq2$; if $r_I=0$, the
partial momentum is constant. For operations consisting of
two-column blocks, $r_I$ is the number of edges crossing the cut.
Note that these are degrees \emph{before} factorization of the
Mandelstam.

\begin{table}[htbp]
\centering\small
\renewcommand{\arraystretch}{1.3}
\begin{tabular}{>{\raggedright\arraybackslash}p{0.27\linewidth}>{\raggedright\arraybackslash}p{0.10\linewidth}>{\raggedright\arraybackslash}p{0.23\linewidth}>{\raggedright\arraybackslash}p{0.28\linewidth}}
\hline
Operation & Parameter & Partial momentum & Determinant numerator\\\hline
One ordinary arrow & $t>0$ & $Q_0+tR$, $\operatorname{rank}R\leq1$ & degree $\leq1$ in $t$\\
Commuting ordinary packet & $t>0$ & degree $\leq1$ in $t$ & degree $\leq1$ for one crossed arrow; $\leq2$ otherwise\\
Disjoint two-column boosts & $u>1$ & Laurent exponents $-2,0,2$ & $u^{2r_I}s_I^M(u)$; degree $\leq2r_I$ in $u^2$\\
Disjoint vertex palindromes & $t>0$ & degree $\leq2$ in $t$ & degree $\leq2r_I$ in $t$\\
Disjoint three-vertex boosts & $u>1$ & Laurent exponents $[-2,2]$ & $u^{2r_I}s_I^M(u)$; degree $\leq4r_I$ in $u$\\\hline
\end{tabular}
\caption{Degree bounds before factorization, with the boost normalizations
of \eqref{eq:atoms:orthogonal-layer} and \eqref{eq:atoms:vertex-boost}.
For two-column boosts, degree four in $u^2$ means degree eight in $u$.
The two-block three-vertex bound is refined by
Proposition~\ref{prop:bcfw:no-primitive-octic}.
Roots introduced by clearing negative powers are discarded.}
\label{tab:bcfw:degrees}
\end{table}

\paragraph{Derivation of the bounds.}
For ordinary packets we use \eqref{eq:bcfw:quadratic-pencil}.
For disjoint two-column boosts, direct calculation gives
\begin{equation}
\label{eq:bdy:hyperbolic-Laurent}
 Q_I^M(u)=Q_{I,0}+u^2R_{I,+}+u^{-2}R_{I,-}.
\end{equation}
Suppose exactly one block crosses the cut, say $e=\{a,b\}$,
with $a\in I$ and $b\in I^c$. This does not require $|I|=2$;
all other blocks leave their contributions to $Q_I$ unchanged.
The coefficients of $u^{\pm2}$ are
\[
 4R_{e,\pm}=(\lambda_a\mp\sigma_e\lambda_b)
 (\widetilde\lambda_a\pm\sigma_e\widetilde\lambda_b)^T.
\]
Each is an outer product, so
$\det R_{e,+}=\det R_{e,-}=0$.
Consequently, the $u^4$ and $u^{-4}$ terms in
$\det Q_I^M(u)$ vanish when there is only one crossed edge.
Together with the two-edge bound, this gives
\begin{equation}
\label{eq:bdy:hyperbolic-degree}
 \mathcal P_I^M(u):=u^{2r_I}s_I^M(u)\in\mathbb C[u^2],
 \qquad \deg_{u^2}\mathcal P_I^M\leq2r_I.
\end{equation}
Here $\deg_{u^2}$ means degree as a polynomial in $u^2$.
At two crossed edges, the extreme coefficients are sums of two
rank-one matrices and degree four in $u^2$ is possible.
A root with $u^2>1$ gives $u>1$ by the positive square root.
The raw degree in $u$ can still be eight, as in the two-column
example below.
Since $u\mapsto u^2$ is bijective for $u>1$, uniqueness of an
admissible value of $u^2$ is equivalent to uniqueness of an
admissible $u$. Thus finding the shift requires solving at most
a quartic in $u^2$ and taking the positive square root, together
with the prescribed positive-reconstruction checks.

For the palindrome on $(a,b,c)$, let $d_j=\mathbf1_{j\in I}$.
In a non-seam chart, direct multiplication gives
\begin{equation}
\label{eq:bcfw:palindrome-selector}
 \begin{aligned}
 U(t)^{-1}P_IU(t)-P_I
  ={}&t(d_a-d_b)e_{ab}+t(d_b-d_c)e_{bc}\\
    &+\tfrac12t^2(d_a-2d_b+d_c)e_{ac}.
 \end{aligned}
\end{equation}
If exactly one or two labels of the triple lie in $I$, its
contribution to $Q_I^M$ is a moving rank-one matrix with entries
of degree at most two in $t$, or the full triple momentum minus
such a matrix. The full triple momentum is unchanged.
The determinant is affine-linear in a rank-one contribution,
so one such triple gives degree at most two in $t$, not four.
Two such triples allow a cross-term of degree four.
The inherited cyclic signs do not change these bounds.

For the normalized three-vertex boost, the generator has eigenvalues
$-1,0,1$. Its conjugated selector has Laurent exponents between $-2$
and $2$ in $u$. The same rank-one argument gives degree at most four
after multiplication by $u^2$ for one triple meeting both sides.
Two such triples give the raw bound eight after multiplication by
$u^4$. Unlike the two-column case, odd powers of $u$ need not vanish.
For a single symmetry orbit of arrows forming disjoint three-vertex
boosts, Proposition~\ref{prop:bcfw:no-primitive-octic} reduces the
two-triple numerator to factors of degree at most four in $u$
on complexified kinematics.

For an overlapping orbit use the full generator $X_{\mathcal O}$.
The expansion terminates when $X=X_{\mathcal O}$ is nilpotent:
\begin{equation}
\label{eq:bcfw:full-generator-degree}
 e^{-tX}P_Ie^{tX}
    =\sum_{j\geq0}\frac{(-t)^j}{j!}\operatorname{ad}_X^j(P_I),
 \qquad \operatorname{ad}_X(A)=[X,A].
\end{equation}
For a diagonalizable generator with eigenvalues $\nu_\alpha$,
its conjugated selector has spectral exponents
$\nu_\beta-\nu_\alpha$. The actual spectrum and an algebraic
parameter must be determined before clearing denominators.
The disjoint-block bounds do not apply to an arbitrary overlapping
orbit, and we did not try to obtain a uniform degree bound for such orbits for this write-up.
\paragraph{Reduced factors before roots.}
On a reflection character plane, first select the component
$\mu_I^\eta=0$ specified by the domain factorization. Under an ordinary
packet each $\mu_I^\eta(t)$ is affine-linear, even when their product
is quadratic. If the other factor $\mu_I^{-\eta}(t)$ is a nonzero
constant, it gives no removal root. If
$Q_I^M(t)=\mu_I^M(t)Q_*$ with $Q_*$ independent of $t$ and
$\det Q_*\ne0$, solve $\mu_I^M=0$; its square in the determinant
is a multiplicity, not another component.
In the two-column case, if a degree-two polynomial $F$ satisfies
$F(u^2)\mid u^2Q_I^M(u)$ entrywise, then
\begin{equation}
\label{eq:bdy:hyperbolic-square}
 u^2Q_I^M(u)=F(u^2)M_*,\qquad
 u^4s_I^M(u)=\det(M_*)F(u^2)^2,
\end{equation}
with $M_*$ constant: each entry of $u^2Q_I^M(u)$ has degree at
most two in $u^2$. For $\det M_*\ne0$, the reduced equation
$F(u^2)=0$ is equivalent to $Q_I^M(u)=0$, whether or not $F$
factors further. For direct-sum or soft terms, the full
momentum-zero or loop--coloop conditions must still be checked,
as in Section~\ref{subsec:bcfw:construction}.

\begin{table}[htbp]
\centering\small
\renewcommand{\arraystretch}{1.3}
\begin{tabular}{>{\raggedright\arraybackslash}p{0.46\linewidth}>{\raggedright\arraybackslash}p{0.43\linewidth}}
\hline
Model and channel & Numerator $u^4s_I^M(u)$ and reduction\\\hline
Linear $D_3$, $(N,k)=(9,3)$, maximal sector
 &$c\bigl(\ell_1(u^2)\ell_2(u^2)\bigr)^2$;
 the reduced quadratic splits into linear factors\\
Linear edge $D_2$, $(8,4)$, and linear $D_3$, $(12,6)$, maximal sectors
 &$cF(u^2)^2$; $F$ is a generically irreducible quadratic\\
Linear $D_3$, $(9,4)$, two shifted cuts
 &$c\,u^2\ell_1(u^2)\ell_2(u^2)$; after discarding $u^2$,
 the removal equation is a product of two linear factors\\\hline
\end{tabular}
\caption{Symbolic fixed-edge two-column calculations. Here
$\ell_1,\ell_2$ are linear, $F$ is quadratic, and $c\ne0$ is
independent of $u$; their coefficients are case-dependent.
Irreducibility is over the rational function field, with complex
constants, of the symmetry-adapted spinor coordinates subject to
momentum conservation;
specializations may split the quadratic or lower its degree.
The factor $u^2$ in the last row is introduced by denominator clearing
and gives no root for $u>1$.}
\label{eq:bdy:hyperbolic-examples}
\label{tab:bcfw:reported-polynomials}
\end{table}

\paragraph{A two-column equation of degree four in $u^2$.}
Take
$(N,k)=(12,6)$, $r=S_6^3$, and $G=\langle r,\mathcal D\rangle$,
with no reflection imposed, and set
\[
 M(u)=\mathsf H_3(\log u)\mathsf H_6(\log u)
       \mathsf H_9(\log u)\mathsf H_{12}(\log u),\qquad I=\{1,2,3\}.
\]
The seam block has its inherited negative sign. All eight arrows
belong to one symmetry orbit, and both cuts of $I$ are crossed.

Pick
$\theta_i=(i-1)\pi/12$,
$c_m=(\cos(m\theta_i))_{i=1}^{12}$ and
$s_m=(\sin(m\theta_i))_{i=1}^{12}$, and take
\[
 \begin{aligned}
 C&=\operatorname{span}(c_1,s_1,c_3,s_3,c_5,s_5),\\
 \mathcal U&=\operatorname{span}\bigl(c_1,s_1,
           c_3-\tfrac1{10}c_5-\tfrac1{20}s_5,
           s_3-\tfrac1{20}c_5+\tfrac1{10}s_5\bigr),\\
 \mathcal W&=(\mathcal U A_{12})^\perp.
 \end{aligned}
\]
The ordered maximal minors of $C$ and $\mathcal U$ are strictly
positive; both are $r$-invariant and $A_{12}$-isotropic.
Thus $\mathcal U\subset\mathcal W$, and positive duality gives
positivity of $\mathcal W$. In the spinor frames
\[
 \lambda=\begin{pmatrix}
 c_5+\tfrac1{10}c_3+\tfrac1{20}s_3\\
 s_5-\tfrac1{10}s_3+\tfrac1{20}c_3
 \end{pmatrix},\qquad \widetilde\lambda=\lambda A_{12},
\]
the numerator is
\begin{equation}
\label{eq:bcfw:two-column-octic}
 102400u^4s_I^M(u)=A_8u^8+A_6u^6+A_4u^4+A_2u^2+A_0,
\end{equation}
where
\[
 \begin{aligned}
 A_8&=21447+7474\sqrt2-4124\sqrt3-7304\sqrt6,\\
 A_6&=-56720+22276\sqrt2-9672\sqrt3+27408\sqrt6,\\
 A_4&=-7694-21784\sqrt3,\\
 A_2&=-56720-22276\sqrt2-9672\sqrt3-27408\sqrt6,\\
 A_0&=21447-7474\sqrt2-4124\sqrt3+7304\sqrt6.
 \end{aligned}
\]
Thus the bound of degree four in $u^2$ is attained. The raw
numerator has degree eight in $u$, but the positive shift is recovered
by solving the quartic in $u^2$ and taking the positive square root.

\paragraph{Two-cut vertex boosts: reduction to quartics.}
For a single symmetry orbit of arrows forming disjoint three-vertex
boosts, the degree-eight numerator factors already on complexified
kinematics. 

\begin{proposition}
\label{prop:bcfw:no-primitive-octic}
Suppose one symmetry orbit of arrows forms a disjoint collection
of the boosts \eqref{eq:atoms:vertex-boost}. If a cyclic interval $I$
meets exactly two of these triples on both sides of the cut, its
Mandelstam factors before shifting.
In fixed symmetry-adapted spinor bases, the displayed factors below
have Laurent exponents in $[-2,2]$ under the shift. Consequently,
\[
 u^4s_I^M(u)=cF_1(u)F_2(u),\qquad \deg F_1,\deg F_2\leq4,
\]
with $c\ne0$, over the function field of complexified kinematics. 
The factors may coincide or degenerate further.
\end{proposition}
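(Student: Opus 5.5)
Since the three-vertex boost occurs only at a vertex whose local stabilizer contains both a linear reflection $R$ and pure exchange $\mathcal D$ (Lemma~\ref{lem:cells:orbit-shapes}, case~3), the plan is first to read off the setwise symmetry of $I$ and of its partial momentum. The triples are disjoint and form one orbit, and $I$ meets exactly two of them on both sides. The first step is a combinatorial claim: some element of the label group either preserves $I$ setwise or exchanges $I$ and $I^c$, and it carries one crossed triple to the other. Indeed, the two cut points of $I$ lie inside two triples of the same orbit; the element mapping the first triple to the second either maps the cut to itself (orientation-reversing case, giving a reflection with $hI=I$ or $hI=I^c$) or can be composed with the local $R$ at a triple to achieve this. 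Together with the pure exchange that every boost stabilizer contains, the channel stabilizer then includes $\mathcal D$ and a reflection $h$. By Table~\eqref{eq:bdy:determinant-types} and Appendix~\ref{app:bdy:determinant-types}, in symmetry-adapted spinor bases $Q_I$ is confined either to $\operatorname{diag}(\mu^-,\mu^+)$ (vertex axis, $hI=I$), to an off-diagonal form (when $hI=I^c$), or to a line (edge axis). In each case $s_I$ is a product of two linear functions of entries of $Q_I$, which gives the asserted factorization before shifting.

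The second step is to check that the shift preserves this structure. The operation $M(u)$ is an orbit operation, hence $G$-equivariant by \eqref{eq:atoms:bridge-action}. The transferred spinors $\lambda_t,\widetilde\lambda_t$ of \eqref{eq:bcfw:promoted} therefore still satisfy the same symmetry relations, with the same representation matrices $R_{\lambda,\gamma},R_{\widetilde\lambda,\gamma}$ in fixed adapted bases. It follows that $Q_I^M(u)$ remains in the same constrained linear space for all $u$, and $s_I^M(u)=c\,F_1(u)F_2(u)$, where $F_1,F_2$ are the two entries $\mu^\pm$ (or the off-diagonal pair, or $F_1=F_2=\mu$ in the line case). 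I will fix the bases at $u=1$ and transport them by $M$; this is legitimate because $M$ commutes with the signed actions.

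The third step is the exponent bound. By the three-vertex boost computation in Appendix~\ref{app:bcfw:degrees}, the conjugated selector $M^{-1}P_IM$ has Laurent exponents in $[-2,2]$ in $u$, and each entry of $Q_I^M(u)$ is bilinear in fixed spinors with that matrix. Each entry $F_j$ is therefore a Laurent polynomial with exponents in $[-2,2]$, and $u^2F_j$ is a polynomial of degree at most $4$. Multiplying gives $u^4s_I^M=c\,(u^2F_1)(u^2F_2)$, with $c\neq0$ coming from the determinant normalization of the constrained form, which is nonzero as in the table rows. Working over the function field of complexified kinematics lets me choose adapted bases without reality issues (for example, adjoining $i$).

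The main obstacle is the first step: showing that the two crossed triples are related by an element that stabilizes the cut $[I]$, and not by one that moves $I$ to an overlapping translate. If that fails in some edge configuration, the fallback is direct. I would write the contribution of each crossed triple as a rank-one moving matrix plus a constant, as in the palindrome and boost analysis, and use the exchange symmetry $Q=\sigma Q^T$ at each triple to split the $2\times2$ determinant of the sum of two symmetric rank-one Laurent pencils into two factors.
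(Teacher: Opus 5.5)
\textbf{There is a genuine gap.} Your first step is false for half of the possible interval shapes, and those are exactly the shapes that need a new idea.

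Let $a$ and $b=a+j\ell$ be the centers of the two crossed triples, and let $B=(a,b)$ be the open interval between them. Each cut passes immediately before or after its center, so $I$ is one of $B$, $B\cup\{a,b\}$, $B\cup\{a\}$ or $B\cup\{b\}$.

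For the first two shapes your argument works and agrees with the paper. The reflection $i\mapsto a+b-i$ preserves $I$; it has vertex type when $j\ell$ is even and edge type when $j\ell$ is odd. Together with pure exchange it gives $\operatorname{diag}(\alpha,\beta)$ or the line $\mu\left(\begin{smallmatrix}0&1\\1&0\end{smallmatrix}\right)$.

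For $I=B\cup\{a\}=[a,b-1]$, however, both cuts lie \emph{before} their centers, and your claim fails:
\begin{itemize}
\item The reflection swapping $a$ and $b$ sends $I$ to the overlapping interval $[a+1,b]$.
\item Composing it with the local reflection at $a$ or $b$ gives the rotation by $\mp j\ell$. This sends $I$ to a translate that is neither $I$ nor $I^c$ unless $2j=p$, and in that case it is a side-swapping rotation, which Table~\eqref{eq:bdy:determinant-types} does not cover.
\item No reflection of the label group preserves $[a,b-1]$, since that would require $a+b-1\equiv 2a\pmod\ell$, i.e.\ $\ell\mid 1$.
\end{itemize}
So the only setwise symmetry is pure exchange. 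Then $Q_I$ is merely symmetric, and $q_{11}q_{22}-q_{12}^2$ is irreducible on that space, so the table gives no factorization.

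Two inputs beyond setwise symmetry are needed, and the paper supplies them.

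\emph{Case $j\ell$ even.} Since $G=D_p\times E$, the triple centers form one rotation orbit, so $I$ is $j$ consecutive translates of an $\ell$-label interval $S$ whose $p$ translates partition $[N]$. In a basis where the rotation acts orthogonally, $Q_{rJ}=(-1)^\ell RQ_JR^T$. Momentum conservation (for $\ell$ even) or cancellation in pairs (for $\ell$ odd, $j$ even) then forces $\operatorname{tr}Q_I=0$. Hence $s_I=-(x+\sqrt{-1}\,y)(x-\sqrt{-1}\,y)$. This splitting, not a choice of bases, is why the statement is over complexified kinematics.

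\emph{Case $j\ell$ odd.} Here $B$ itself is edge-centered, so $Q_I=\mu\left(\begin{smallmatrix}0&1\\1&0\end{smallmatrix}\right)+\varepsilon\lambda_e\lambda_e^T$, where $\lambda_e=(x,y)^T$ is the spinor of the adjoined center $e\in\{a,b\}$. This gives $s_I=-\mu(\mu+2\varepsilon xy)$. The second factor is quadratic in spinor entries, not linear in entries of $Q_I$, so your exponent count via the conjugated selector does not cover it. Use instead that $V$ has eigenvalues $u^{-1},1,u$: shifted spinor entries then have exponents in $[-1,1]$, so each factor has exponents in $[-2,2]$.

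Both mechanisms are linear relations preserved by the equivariant shift, since $Q^M_{[N]}=0$ persists. Your fallback cannot replace them. A constant symmetric matrix plus two symmetric rank-one Laurent pencils has, in general, an irreducible determinant; the splitting comes from these extra linear relations, not from the rank-one structure.
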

\begin{proof}
A symmetry preserving a three-label block fixes its middle label.
To put its four directed arrows in one orbit requires both the
reflection exchanging its neighbors and pure exchange separately.
Thus $G=D_p\times E$, where $E=\langle\mathcal D\rangle$,
$N=p\ell=2k$, and the triple centers form
one rotation orbit, spaced by $\ell\geq3$ labels. Two such triples
require $p\geq2$. Distinct center orbits, when
present, define separate shifts.

Let $a,b=a+j\ell$, $1\leq j<p$, be the centers at the two cuts,
and let $B=(a,b)$ be the open cyclic interval between them.
Each cut lies immediately before or after its center, so the only
possibilities for $I$ are $B$, $B\cup\{a,b\}$, $B\cup\{a\}$ and
$B\cup\{b\}$. The corresponding factorizations, up to nonzero
constant normalizations, are
\[
\renewcommand{\arraystretch}{1.2}
\begin{array}{c|c|c}
 I & j\ell & s_I\\\hline
 B\ \text{or}\ B\cup\{a,b\}&\text{even}&\alpha\beta\\
 B\ \text{or}\ B\cup\{a,b\}&\text{odd}&-\mu^2\\
 B\cup\{a\}\ \text{or}\ B\cup\{b\}&\text{even}&-(x+\sqrt{-1}y)(x-\sqrt{-1}y)\\
 B\cup\{a\}\ \text{or}\ B\cup\{b\}&\text{odd}&-\mu(\mu+2\varepsilon xy).
\end{array}
\]
To verify the table, use pure exchange to set
$\widetilde\lambda=\lambda A_N$, so $Q_I$ is symmetric.
The reflection $i\mapsto a+b-i$ preserves the first two interval
shapes. It is of vertex type when $j\ell$ is even, giving
$Q_I=\operatorname{diag}(\alpha,\beta)$, and of edge type otherwise,
giving $Q_I=\left(\begin{smallmatrix}0&\mu\\\mu&0\end{smallmatrix}\right)$,
as in \eqref{eq:bdy:determinant-types}.

For the remaining shapes, $|I|=j\ell$. In a spinor basis with
orthogonal rotation matrix $R$, the alternating signs give
$Q_{rJ}=(-1)^\ell RQ_JR^T$.
Write $I$ as $j$ consecutive translates of an $\ell$-label interval
$S$, whose $p$ translates partition $[N]$.
If $\ell$ is even, momentum conservation gives
$p\operatorname{tr}Q_S=0$; if $\ell$ is odd and $j$ is even,
the traces cancel in pairs. Thus $j\ell$ even forces
$Q_I=\left(\begin{smallmatrix}x&y\\y&-x\end{smallmatrix}\right)$,
proving the third row.

Finally, if $j\ell$ is odd, $B$ is edge-centered. In its reflection
eigenbasis, write $I=B\cup\{e\}$, $\lambda_e=(x,y)^T$ and
$\varepsilon=(-1)^{e-1}$. Then
\[
 Q_I=\begin{pmatrix}0&\mu\\\mu&0\end{pmatrix}
       +\varepsilon\begin{pmatrix}x^2&xy\\xy&y^2\end{pmatrix},
 \qquad s_I=-\mu(\mu+2\varepsilon xy).
\]
Thus the last factorization uses the centered subinterval $B$,
although $I$ itself need not have a setwise reflection.

The equivariant shift preserves the chosen symmetry matrices,
so these bases can be kept fixed as $u$ varies.
The shifted spinor entries have Laurent exponents in $[-1,1]$.
Every factor above is linear in momentum entries or quadratic in
spinor entries, hence has exponents in $[-2,2]$.
Multiplication by $u^2$ for each factor proves the degree bound.
\end{proof}
In the third row, \emph{real} vanishing requires $x=y=0$, hence $Q_I=0$;
it is not a nonzero-null one-particle channel.

\paragraph{A degree-eight numerator with quartic factors.}
For a concrete example, take $(N,k)=(12,6)$ and
\[
 G=\langle r,h,\mathcal D\rangle,\qquad
 r=S_6^6,\quad h=S_6^3J,\qquad
 M(u)=V_{123}(\log u)V_{789}(\log u).
\]
Here $r(i)=i+6$ and $h(i)=4-i$ modulo $12$; the arrows of the
two triples form one symmetry orbit. The following matrix gives an explicit
positive input:
\begingroup\small
\setcounter{MaxMatrixCols}{12}\setlength{\arraycolsep}{3pt}
\[
 B=\begin{pmatrix}
 217&0&-217&-183&0&108&0&-44\sqrt2&0&108&0&-183\\
 0&0&0&116&217&177&0&-48\sqrt2&0&177&217&116\\
 0&0&0&-12&0&64&217&207\sqrt2&217&64&0&-12
 \end{pmatrix}.
\]
\endgroup
Set $C=\operatorname{rowspan}\left(\begin{smallmatrix}B\\Br^T\end{smallmatrix}\right)$,
$B_U=\left(\begin{smallmatrix}-3&4&0\\1&0&4\end{smallmatrix}\right)B$,
$\mathcal U=\operatorname{rowspan}\left(\begin{smallmatrix}B_U\\B_Ur^T\end{smallmatrix}\right)$
and $\mathcal W=(\mathcal U A_{12})^\perp$.
The ordered maximal minors of $C,\mathcal U,\mathcal W$ are strictly
positive in suitable orientations; $C,\mathcal U$ are $r,h$-invariant
and $A_{12}$-isotropic, and $\mathcal U\subset\mathcal W$.
Choose $\lambda$ with rows $v=(4B_1+3B_2-B_3)/31$ and $vr^T$,
and put $\widetilde\lambda=\lambda A_{12}$.

For $I=\{3,4,5,6,7\}$ and $H=\{2,3,4,5,6,7\}$, put
$a(u)=5u-12/u$, $d(u)=a(u)^2+240$ and $b(u)=56a(u)-1137$.
Direct multiplication gives
\[
 Q_I^M(u)=\begin{pmatrix}d&b\\b&d\end{pmatrix},\qquad
 Q_H^M(u)=\begin{pmatrix}d&b\\b&-d\end{pmatrix}.
\]
Writing $f(u)=(5u^2+12)^2$ and
$g(u)=u(280u^2-1137u-672)$, the two numerators are
\[
 u^4s_I^M(u)=(f+g)(f-g),\qquad
 -u^4s_H^M(u)=f^2+g^2=(f+ig)(f-ig).
\]
Both raw numerators have degree eight, but the displayed factors
have degree at most four over $\mathbb C$. The centered numerator
factors further, since
$f+g=(5u^2-13u-12)(5u^2+69u-12)$.
For real $u$, $f(u)>0$, so the half-circle numerator has no real zero.
Thus it gives neither an irreducible octic boundary factor on
complexified kinematics nor a positive removal root.

\paragraph{Vertex and overlapping orbit operations.}
Separate exact experiments tested the palindrome, three-vertex boost and
an overlapping orbit, as summarized in
Table~\ref{tab:bcfw:local-atom-reconstructions}.
Here $h=S_kJ$ is the signed vertex reflection; in the mixed
$(8,4)$ model, $r=S_4^2$ rotates by two labels.
No rotation is imposed in the other rows.
In the eight-label overlapping and boost models, the removal polynomial
used for the prescribed recursion has a linear factor with coefficients
rational in the target coordinates. This was checked symbolically,
with the target coordinates left variable; the factor supplies the
accepted shift despite the higher degree of the full channel equation.
The positive-reconstruction tests used
two choices of positive external data per model, with exact arithmetic
over $K=\mathbb Q(\sqrt2,\sqrt3)$ and its real algebraic extensions.

\begin{table}[htbp]
\centering\small
\renewcommand{\arraystretch}{1.2}
\begin{tabular}{@{}>{\raggedright\arraybackslash}p{0.12\linewidth}>{\raggedright\arraybackslash}p{0.38\linewidth}>{\raggedright\arraybackslash}p{0.44\linewidth}@{}}
\hline
$(N,k)$ & Symmetry and operation & Reconstruction\\\hline
$(6,3)$ & $\langle h\mathcal D\rangle$; palindrome on $\{3,4,5\}$
 & Rational; two alternative one-term recursions\\
$(8,4)$ & $\langle h\mathcal D\rangle$; palindrome on $\{4,5,6\}$
 & Quadratic cell inverse in a fiber coordinate; one positive preimage in all $36$ inversions\\
$(8,4)$ & $\langle r\mathcal D,h\rangle$; full overlapping exponential
 & Rational, despite cubic and quartic channel equations\\
$(8,4)$ & $\langle h,\mathcal D\rangle$; boost on $\{4,5,6\}$
 & Rational boost inverse; three roots with $u>1$ and one positive reconstruction in the tests\\
$(12,6)$ & $\langle h,\mathcal D\rangle$; boost on $\{6,7,8\}$
 & Genuinely quartic accepted $u$; exact exclusions over entire regular fibers\\\hline
\end{tabular}
\caption{Local-atom reconstruction checks. The linear factors supplying
the eight-label overlapping and boost shifts were verified symbolically;
positive-root counts refer to the tested inputs.
The two palindrome orientations define alternative collections,
not two tiles to be combined.
Algebraic degrees at the tested inputs are over $K$ and do not count
positive sources or tiles.}
\label{tab:bcfw:local-atom-reconstructions}
\end{table}

In the eight-label palindrome model, the full domain, image and regular
fiber have dimensions $6,5,1$. Each declared five-dimensional symmetric
positroid intersection was parametrized completely. Its inverse solved
a quadratic in a fiber coordinate, not directly in the shift parameter $t$.

In the overlapping model, let $C_{\mathrm{seed}}$ be the source with
coloops $1,5$, loops $3,7$, and the rigid $K_{2,4}$ on the even labels,
with the inherited insertion signs. The entire strictly positive domain is
$C(s)=C_{\mathrm{seed}}M(s)$, $s>0$, and removal gives $t=s$.
This is a soft recursion, without an internal particle or gluing scale.
The quartic channel polynomials found irreducible over $K$ in the
experiment have no real roots.

In the twelve-label boost model, reflection and exchange are separately
imposed, with no rotation. Its full domain, image and regular fiber have
dimensions $6,4,2$. The prescribed factorization has
rank-two triples $\{4,5,6\}$ and $\{8,9,10\}$, each of connectivity one;
the augmented factors are reflection-related $\OG^{\geq}_{2,4}$ factors
attached to an eight-label core with reflection and exchange separately
imposed.
The factorization has domain and image dimension three, and adjoining
the boost gives a four-dimensional family with full image rank.
For each of the ten tested target/external-data pairs, we
successfully reconstructed a nonnegative source in the prescribed
factorization and an allowed boost mapping it to the target.
In eight tests the accepted shift had minimal polynomial of
degree four over $K$; the remaining two were consistency checks
constructed using a prescribed rational shift, which was recovered.

A further experiment checked eight of these target/external-data pairs,
including the two parametrization-generated targets, over their entire
regular source fibers. Among all roots with $u>1$ of the tested cyclic
channel equations, exactly one per target admitted a nonnegative lower
source, and that source was unique. The two triple ranks and connectivity
conditions were not assumed: every nonnegative reconstruction was shown
to satisfy them. The lower factors and their induced external data were
also checked, including the subsequent recursion of the core.


\begin{thebibliography}{99}

\fontsize{9.5}{11.4}\selectfont
\setlength{\itemsep}{1pt}

\bibitem{ArkaniHamedBaiLam2017}
N.~Arkani-Hamed, Y.~Bai and T.~Lam. \emph{Positive Geometries and Canonical Forms}. JHEP \textbf{11} (2017), 039. \href{https://arxiv.org/abs/1703.04541}{arXiv:1703.04541}.

\bibitem{AHBCGPT2016}
N.~Arkani-Hamed, J.~L.~Bourjaily, F.~Cachazo, A.~B.~Goncharov, A.~Postnikov and J.~Trnka. \emph{Grassmannian Geometry of Scattering Amplitudes}. Cambridge University Press, 2016. \href{https://arxiv.org/abs/1212.5605}{arXiv:1212.5605}.

\bibitem{ArkaniHamedThomasTrnka2018}
N.~Arkani-Hamed, H.~Thomas and J.~Trnka. \emph{Unwinding the Amplituhedron in Binary}. JHEP \textbf{01} (2018), 016. \href{https://arxiv.org/abs/1704.05069}{arXiv:1704.05069}.

\bibitem{ArkaniHamedTrnka2014}
N.~Arkani-Hamed and J.~Trnka. \emph{The Amplituhedron}. JHEP \textbf{10} (2014), 030. \href{https://arxiv.org/abs/1312.2007}{arXiv:1312.2007}.

\bibitem{BaoHe2019}
H.~Bao and X.~He. \emph{The $m=2$ amplituhedron}. 2019. \href{https://arxiv.org/abs/1909.06015}{arXiv:1909.06015}.

\bibitem{BrittoCachazoFeng2005}
R.~Britto, F.~Cachazo and B.~Feng. \emph{New recursion relations for tree amplitudes of gluons}. Nucl. Phys. B \textbf{715} (2005), 499--522. \href{https://arxiv.org/abs/hep-th/0412308}{arXiv:hep-th/0412308}.

\bibitem{BrittoCachazoFengWitten2005}
R.~Britto, F.~Cachazo, B.~Feng and E.~Witten. \emph{Direct proof of tree-level recursion relation in Yang--Mills theory}. Phys. Rev. Lett. \textbf{94} (2005), 181602. \href{https://arxiv.org/abs/hep-th/0501052}{arXiv:hep-th/0501052}.

\bibitem{DamgaardEtAl2019}
D.~Damgaard, L.~Ferro, T.~{\L}ukowski and M.~Parisi. \emph{The Momentum Amplituhedron}. JHEP \textbf{08} (2019), 042. \href{https://arxiv.org/abs/1905.04216}{arXiv:1905.04216}.

\bibitem{EvenZoharEtAl2025}
C.~Even-Zohar, T.~Lakrec, M.~Parisi, M.~Sherman-Bennett, R.~Tessler and L.~Williams. \emph{BCFW tilings and cluster adjacency for the amplituhedron}. Proc. Natl. Acad. Sci. USA \textbf{122}(12) (2025), e2408572122. \href{https://arxiv.org/abs/2504.01217}{arXiv:2504.01217}.

\bibitem{EvenZoharLakrecTessler2025}
C.~Even-Zohar, T.~Lakrec and R.~J.~Tessler. \emph{The Amplituhedron BCFW Triangulation}. Invent. Math. \textbf{239} (2025), 1009--1138. \href{https://arxiv.org/abs/2112.02703}{arXiv:2112.02703}.

\bibitem{PlabicTangles}
Chaim Even-Zohar, Matteo Parisi, Melissa Sherman-Bennett,
Ran Tessler, and Lauren Williams.
\newblock Plabic tangles and cluster promotion maps.
\newblock \emph{Journal of Algebra}, \textbf{695} (2026), 14--102.
\newblock \href{https://doi.org/10.1016/j.jalgebra.2026.01.027}
{doi:10.1016/j.jalgebra.2026.01.027}.
\newblock \href{https://arxiv.org/abs/2508.02891}
{arXiv:2508.02891}.

\bibitem{Fraser2020}
C.~Fraser. \emph{Cyclic symmetry loci in Grassmannians}. 2020. \href{https://arxiv.org/abs/2010.05972v1}{arXiv:2010.05972v1}.

\bibitem{Galashin2024}
P.~Galashin. \emph{Amplituhedra and origami, I: tree level}. 2024. \href{https://arxiv.org/abs/2410.09574}{arXiv:2410.09574}. Version 2, 2026.

\bibitem{GalashinKarpLam2022}
P.~Galashin, S.~N.~Karp and T.~Lam. \emph{The totally nonnegative Grassmannian is a ball}. Adv. Math. \textbf{397} (2022), 108123. \href{https://arxiv.org/abs/1707.02010}{arXiv:1707.02010}.

\bibitem{HeKuoZhang2021}
S.~He, C.-K.~Kuo and Y.-Q.~Zhang. \emph{The momentum amplituhedron of SYM and ABJM from twistor-string maps}. JHEP \textbf{02} (2022), 148. \href{https://arxiv.org/abs/2111.02576}{arXiv:2111.02576}.

\bibitem{HuangEtAl2021}
Y.-t.~Huang, R.~Kojima, C.~Wen and S.-Q.~Zhang. \emph{The orthogonal momentum amplituhedron and ABJM amplitudes}. JHEP \textbf{01} (2022), 141. \href{https://arxiv.org/abs/2111.03037}{arXiv:2111.03037}.

\bibitem{HuangWen2014}
Y.-t.~Huang and C.~Wen. \emph{ABJM amplitudes and the positive orthogonal Grassmannian}. JHEP \textbf{02} (2014), 104. \href{https://arxiv.org/abs/1309.3252}{arXiv:1309.3252}.

\bibitem{KarpVariation2017}
S.~N.~Karp. \emph{Sign variation, the Grassmannian, and total positivity}.
J. Combin. Theory Ser. A \textbf{145} (2017), 308--339.
\href{https://arxiv.org/abs/1503.05622v3}{arXiv:1503.05622v3}.

\bibitem{Karp2019}
S.~N.~Karp. \emph{Moment curves and cyclic symmetry for positive Grassmannians}. Bull. Lond. Math. Soc. \textbf{51} (2019), 900--916. \href{https://arxiv.org/abs/1805.06004}{arXiv:1805.06004}.

\bibitem{KarpWilliams2019}
S.~N.~Karp and L.~K.~Williams. \emph{The $m=1$ amplituhedron and cyclic hyperplane arrangements}. Int. Math. Res. Not. \textbf{2019}(5) (2019), 1401--1462. \href{https://arxiv.org/abs/1608.08288}{arXiv:1608.08288}.

\bibitem{Karpman2018}
R.~Karpman. \emph{Total positivity for the Lagrangian Grassmannian}. Adv. Appl. Math. \textbf{98} (2018), 25--76. \href{https://arxiv.org/abs/1510.04386}{arXiv:1510.04386}.

\bibitem{KarpmanSu2015}
R.~Karpman and Y.~Su. \emph{Combinatorics of symmetric plabic graphs}. J. Combin. \textbf{9}(2) (2018), 259--278. \href{https://arxiv.org/abs/1510.02122}{arXiv:1510.02122}.

\bibitem{KimLee2014}
J.~Kim and S.~Lee. \emph{Positroid Stratification of Orthogonal Grassmannian and ABJM Amplitudes}. JHEP \textbf{09} (2014), 085. \href{https://arxiv.org/abs/1402.1119}{arXiv:1402.1119}.

\bibitem{KroviTessler2026}
S.~T.~Krovi and R.~J.~Tessler. \emph{The Lagrangian Amplituhedron}. 2026. \href{https://arxiv.org/abs/2608.28561}{arXiv:2608.28561}.

\bibitem{Lam2015}
T.~Lam. \emph{Totally nonnegative Grassmannian and Grassmann polytopes}.
2015. \href{https://arxiv.org/abs/1506.00603}{arXiv:1506.00603}.

\bibitem{OrenPerlsteinKrovi}
M.~Oren-Perlstein and S.~T.~Krovi. To appear; proofs of the reflected boundary and tiling statements announced by Krovi and Tessler.

\bibitem{OrenPerlsteinTessler2025}
M.~Oren-Perlstein and R.~Tessler. \emph{The BCFW Tiling of the ABJM Amplituhedron}. 2025. \href{https://arxiv.org/abs/2501.07576}{arXiv:2501.07576}.

\bibitem{ParisiShermanBennettWilliams2023}
M.~Parisi, M.~Sherman-Bennett and L.~K.~Williams. \emph{The $m=2$ amplituhedron and the hypersimplex: signs, clusters, tilings, Eulerian numbers}. Commun. Amer. Math. Soc. \textbf{3} (2023), 329--399. \href{https://arxiv.org/abs/2104.08254}{arXiv:2104.08254}.

\bibitem{Postnikov2006}
A.~Postnikov. \emph{Total positivity, Grassmannians, and networks}. 2006. \href{https://arxiv.org/abs/math/0609764}{arXiv:math/0609764}.

\bibitem{Scott1879}
R.~F.~Scott. \emph{Note on a theorem of Prof.~Cayley's}.
Messenger of Mathematics \textbf{8} (1879), 155--157.

\bibitem{Shevchenko2025}
O.~Shevchenko. \emph{Rotationally symmetric plabic graphs and the Lagrangian Grassmannian}. 2025. \href{https://arxiv.org/abs/2511.23446}{arXiv:2511.23446}.

\end{thebibliography}
\end{document}